\newenvironment{itemizeTwoCol}{%
\begin{itemize}
\begin{multicols}{2}
}{%
\end{multicols}
\end{itemize}
}
\newtheorem{theorem}{Theorem}
\newtheorem{lemma}[theorem]{Lemma}
\newtheorem{definition}[theorem]{Definition}
\newtheorem{corollary}[theorem]{Corollary}
\newtheorem{claim}[theorem]{Claim}
\newtheorem{remark}[theorem]{Remark}
\newcommand{\E}{\mathbb{E}}
\newcommand{\Prob}{\mathbf{Pr}}
\newcommand{\NP}{\textrm{NP}}
\renewcommand{\P}{\textrm{P}}
\newcommand{\poly}{\texttt{poly}}
\DeclareMathOperator {\Harm}{Harm}
\DeclareMathOperator {\Path}{Path}
\newcommand{\HLp}{HL$_p$\xspace}
\newcommand{\HL}[1]{HL$_#1$\xspace}
\begin{document}

\title{Algorithmic and Hardness Results for the Hub Labeling Problem}
\author[1]{Haris Angelidakis}
\author[1]{Yury Makarychev}
\author[2]{Vsevolod Oparin}
\affil[1]{\small Toyota Technological Institute at Chicago}
\affil[2]{\small Saint Petersburg Academic University of the Russian Academy of Sciences}

\renewcommand\Authands{ and }
\date{}

\maketitle

\begin{abstract}
There has been significant success in designing highly efficient algorithms for distance and shortest-path queries in recent years;
many of the state-of-the-art algorithms use the hub labeling framework. In this paper,
we study the approximability of the Hub Labeling problem. We prove a hardness of $\Omega(\log n)$ for Hub Labeling, matching known approximation guarantees. 
The hardness result applies to graphs that have multiple shortest paths between some pairs of vertices.
No hardness of approximation results were known previously.

Then, we focus on graphs that have a unique shortest path between each pair of vertices. This is a very natural family of graphs,
and much research on the Hub Labeling problem has studied such graphs.
We give an $O(\log D)$ approximation algorithm for graphs of diameter $D$ with unique shortest paths.
In particular, we get an $O(\log \log n)$ approximation for graphs of polylogarithmic diameter, while
previously known algorithms gave an $O(\log n)$ approximation.
Finally, we present a polynomial-time approximation scheme (PTAS) and
quasi-polynomial time algorithms for Hub Labeling on trees; additionally, we analyze a simple combinatorial heuristic for Hub Labeling on trees,
proposed by Peleg in 2000. We show that this heuristic gives an approximation factor of 2.
\end{abstract}


\section{Introduction}
There has been significant success in designing highly efficient algorithms for distance and shortest-path queries in recent years;
many of the state-of-the-art algorithms use the hub labeling framework.
In this paper, we present approximation algorithms as well as prove hardness results for the Hub Labeling problem.
The Hub Labeling problem was introduced by Cohen et al.%
\footnote{See also prior papers on bit labeling schemes~\cite{DBLP:journals/tit/Breuer66, Breuer1967583,DBLP:journals/siamdm/KannanNR92,Graham1972,DBLP:journals/combinatorica/Winkler83,DBLP:journals/jgt/Peleg00,DBLP:journals/jal/GavoillePPR04}.
} in 2003~\cite{DBLP:journals/siamcomp/CohenHKZ03}.
\begin{definition}[Hub Labeling]
Consider an undirected graph $G=(V,E)$ with edge lengths $l(e) > 0$. Suppose that we are given a set system $\{H_u\}_{u\in V}$ with
one set $H_u\subset V$ for every vertex $u$.
We say that  $\{H_u\}_{u\in V}$ is a hub labeling if it satisfies the following covering property:
for every pair of vertices $(u, v)$  ($u$ and $v$ are not necessarily distinct), there is a vertex in $H_u \cap H_v$ (a common ``hub'' for $u$ and $v$) that
lies on a shortest path between $u$ and $v$. We call vertices in sets $H_u$ hubs: a vertex $v\in H_u$ is a hub for $u$.
\end{definition}
\noindent In the Hub Labeling problem (HL), our goal is to find a hub labeling with a small number of hubs; specifically, we want to minimize the $\ell_p$-cost of a hub labeling.
\begin{definition}
The $\ell_p$-cost of a hub labeling $\{H_u\}_{u\in V}$ equals $(\sum_{u \in V} |H_u|^p)^{1/p}$ for $p\in[1,\infty)$;
the $\ell_\infty$-cost is $\max_{u \in V} |H_u|$. The hub labeling problem with the $\ell_p$-cost, which we denote by \HLp,
asks to find a hub labeling with the minimum possible $\ell_p$-cost.
\end{definition}
\noindent \textit{\textbf{Note:} When we talk about HL and do not specify the cost function explicitly, we mean \HL1;
we sometimes refer to the $\ell_1$-cost of $\{H_u\}_{u\in V}$ simply as the cost of $\{H_u\}_{u\in V}$.}

We are interested in the Hub Labeling problem because of its connection to the shortest-path problem.
Nowadays hundreds of millions of people worldwide use web mapping services and GPS devices
to get driving directions. That creates a huge demand for fast algorithms for computing shortest paths (algorithms that are even faster
than the classic Dijkstra's algorithm). Hub labelings provide a highly efficient way for computing shortest paths
(see also the paper of Bast et al. \cite{DBLP:journals/corr/BastDGMPSWW15} for a review and discussion of various methods
for computing shortest paths that are used in practice).

Let us briefly explain the connection between the Hub Labeling and shortest-path problems.
Consider a graph $G=(V,E)$ with edge lengths $l(e)$. Let $d(u,v)$ be the shortest path metric on $G$.
Suppose that we have a hub labeling $\{H_u\}_u$. During the preprocessing step, we compute and store the distance $d(u,w)$ between every
vertex $u$ and each hub $w\in H_u$ of $u$. Observe that now we can very quickly answer a distance query: to find $d(u,v)$ we compute
$\min_{w\in H_u \cap H_v} (d(u,w) + d(v,w))$. By the triangle inequality, $d(u,v) \leq \min_{w\in H_u \cap H_v} (d(u,w) + d(v,w))$,
and the covering property guarantees that there is a hub $w\in H_u \cap H_v$ on a shortest path between $u$ and $v$; so
$d(u,v) = \min_{w\in H_u \cap H_v} (d(u,w) + d(v,w))$. We can compute $\min_{w\in H_u \cap H_v} \left(d(u,w) + d(v,w) \right)$
and answer the distance query in time  $O(\max(|H_u|, |H_v|))$.
We need to keep a lookup table of size $O(\sum_{u\in V} |H_u|)$ to store the distances between the vertices and their hubs.
So, if, say, all hub sets $H_u$ are of polylogarithmic size, the algorithm answers a distance query in polylogarithmic time
and requires $n \mathop{\mathrm{polylog}} n$ space.
The outlined approach can be used not only for computing distances but also shortest paths between vertices.
It is clear from this discussion that it is important to have a hub labeling
of small size, since both the running time and storage space depend on the number of hubs.

Recently, there has been a lot of research on algorithms for computing shortest paths
 using the hub labeling framework (see e.g.\ the following papers by Abraham et al.\ \cite{DBLP:conf/soda/AbrahamFGW10, DBLP:conf/wea/AbrahamDGW11, DBLP:journals/jea/AbrahamDGW13, DBLP:conf/esa/AbrahamDGW12, DBLP:conf/icalp/AbrahamDFGW11, DBLP:conf/gis/AbrahamDFGW12}).
It was noted that these algorithms perform really well in practice (see e.g.\  \cite{DBLP:conf/wea/AbrahamDGW11}). A systematic attempt to explain why this is the case led to the introduction of the notion of \textit{highway dimension}~\cite{DBLP:conf/soda/AbrahamFGW10}. Highway dimension is an interesting concept that managed to explain, at least partially, the success of the above methods: it was proved that graphs with small highway dimension have hub labelings with a small number of hubs; moreover, there is
 evidence that most real-life road networks have low highway dimension \cite{Bast06}.

However, most papers on HL offer only algorithms with absolute guarantees on the cost of the hub labeling they find
(e.g., they show that a graph with a given highway dimension has a hub labeling of a certain size and provide an
algorithm that finds such a hub labeling); they do not relate the cost of the hub labeling to the cost of the optimal hub labeling.
There are very few results on the approximability of the Hub Labeling problem.
Only very recently, Babenko et al.~\cite{DBLP:conf/mfcs/BabenkoGKSW15} and White~\cite{DBLP:conf/esa/White15} proved respectively that \HL1 and \HL\infty are NP-hard. Cohen et al.~\cite{DBLP:journals/siamcomp/CohenHKZ03} gave an $O(\log n)$-approximation algorithm for \HL1 by reducing the problem to a Set Cover instance and using the greedy algorithm for Set Cover to solve the obtained instance
(the latter step is non-trivial since the reduction gives a Set Cover instance of exponential size);
later, Babenko et al.~\cite{DBLP:conf/icalp/BabenkoGGN13} gave a combinatorial $O(\log n)$-approximation algorithm for \HLp, for any $p \in [1, \infty]$.

\paragraph{Our Results.} In this paper, we obtain the following results. We prove an $\Omega(\log{n})$ hardness for \HL1 and \HL\infty on graphs that have multiple shortest paths between some pairs of vertices
(assuming that $\mathrm{P}\neq \mathrm{NP}$). The result shows that the algorithms by Cohen et al.\  and Babenko et al.\  are asymptotically optimal. Since
it is impossible to improve the approximation guarantee of $O(\log n)$ for arbitrary graphs, we focus on special families of graphs.
We consider the family of graphs with \textit{unique shortest paths} --- graphs in which there is only one
shortest path between every pair of vertices. This family of graphs appears in the majority of prior works on hub labeling (see e.g.\  \cite{DBLP:conf/icalp/AbrahamDFGW11, DBLP:conf/mfcs/BabenkoGKSW15, DBLP:conf/esa/AbrahamDGW12}) and is very natural, in our opinion, since in real life all edge lengths are somewhat random, and, therefore, any two paths between two vertices $u$ and $v$ have different lengths. For such graphs, we design an approximation algorithm
with approximation guarantee $O(\log D)$, where $D$ is the shortest-path diameter of the graph (which equals the maximum hop length of a shortest path; see Section~\ref{Prelim-Def} for the definition);
the algorithm works for every fixed $p\in [1,\infty)$ (the constant in the $O$-notation depends on $p$).
In particular, this algorithm gives an $O(\log\log n)$  factor approximation for graphs of diameter $\mathop{\mathrm{polylog}} n$,
while previously known algorithms give only an $O(\log n)$ approximation.
Our algorithm crucially relies on the fact that the input graph has unique shortest paths;
in fact, our lower bounds of $\Omega(\log n)$ on the approximation ratio apply to graphs of constant diameter (with non-unique shortest paths).
We also extensively study HL on trees. Somewhat surprisingly, the problem is not at all trivial on trees. In particular, the standard LP
relaxation for the problem is not integral. We obtain the following results for trees.
\begin{enumerate}
\item Design a polynomial-time approximation scheme (PTAS) for \HLp for every $p\in[1,\infty]$.
\item Design an exact quasi-polynomial time algorithm for \HLp for every $p\in[1,\infty]$, with the running time
 $n^{O(\log n)}$ for $p\in\{1,\infty\}$ and  $n^{O(\log^2 n)}$ for $p\in (1,\infty)$.
\item Analyze a simple combinatorial heuristic for trees, proposed by Peleg in 2000, and prove that it gives a 2-approximation for \HL1
(we also show that this heuristic does not work well for \HLp when $p$ is large).
\end{enumerate}

\paragraph{Organization and overview of results.}
We first present an $O(\log D)$ approximation algorithm for graphs with unique shortest paths (see Sections~\ref{Prelim} and~\ref{Bounded}). The algorithm solves a natural linear programming (LP) relaxation for the problem. Then, it uses the LP solution to find a system of ``pre-hubs'' $\{\widehat H_u\}_{u\in V}$ --- pre-hubs do not necessarily satisfy the covering property (which hubs must satisfy), instead, they satisfy a ``weak'' covering property (see Definition~\ref{def:prehub} for details).
The cost of the pre-hub labeling is at most 2 times the LP value.
Finally, the algorithm converts the pre-hub labeling $\{\widehat H_u\}_{u\in V}$
to a  hub labeling $\{H_u\}_{u\in V}$. To this end, it runs a randomized ``correlated''  rounding scheme (which always finds a feasible hub labeling). The expected cost of the obtained labeling is at most $O(\log D)$ times the cost of the pre-hub labeling. Thus, the algorithm finds an $O(\log D)$-approximate solution. We present the algorithm for \HL1 in Section~\ref{Bounded} and the algorithm for general \HLp in Appendix~\ref{sec:lp-norm-algorithm} (the algorithms for \HL1 and \HLp are almost identical, but the analysis for \HL1 is slightly simpler).

A key property of our rounding procedure that we would like to emphasize is that it might add a vertex $v$ to a hub set $H_u$, even if the corresponding LP indicator variable $x_{uv}$ (which, in an intended integral solution, is 1 if and only if $v \in H_u$) is set to zero in the fractional solution that we consider. This may seem surprising since often rounding schemes satisfy the following natural property: the rounding scheme adds an element to a set or makes an assignment only if the LP indicator variable for the corresponding event is strictly positive. In particular, known rounding schemes for Set Cover, as well as many other thresholding and randomized rounding schemes, satisfy this property (we call such rounding schemes ``natural"). However, we show in Appendix~\ref{lower_bound_rounding} that any rounding scheme that satisfies this property cannot give a better than $\Theta(\log n)$ approximation for \HL1.

Then, in Section~\ref{Trees}, we present our algorithms for trees. We consider a special class of hub labelings --- \textit{hierarchical hub labelings} --- introduced by Abraham et al.~\cite{DBLP:conf/esa/AbrahamDGW12} (we define and briefly discuss hierarchical hub labelings in Section~\ref{hhl-section}). We start with proving that there is an optimal solution for a tree, which is a hierarchical hub labeling. We observe that there is a simple dynamic program (DP) of exponential size for computing the optimal hierarchical hub labeling on a tree: roughly speaking, we have an entry $B[T']$ for every subtree $T'$ of the input tree $T$ in the DP table; entry $B[T']$ equals the cost of the optimal hub labeling  for $T'$; its value can be  computed from the values $B[T'']$ of subtrees $T''$ of $T'$. We then modify the DP and obtain a polynomial-time approximation scheme (PTAS). We do that by restricting the DP table to subtrees $T'$ with a ``small boundary'', proving that the obtained algorithm finds a $(1+\varepsilon)$-approximate solution and bounding its running time. We also describe a quasi-polynomial time algorithm, based on the same DP approach. We first present the algorithm for \HL1  and then a slightly more involved algorithm for any \HLp (see Sections~\ref{ptas_trees}, \ref{quasi_sec}, and Appendices \ref{appendix_ptas_lp}-\ref{appendix_quasi_infinity}). In Section~\ref{trees_2_approx}, we analyze the heuristic for trees proposed by Peleg~\cite{DBLP:journals/jgt/Peleg00}. Using the primal-dual technique, we show that the heuristic finds a 2-approximation for \HL1. We also give an example where the gap between the optimal solution and the solution that the heuristic finds is $3/2 - \varepsilon$. (For $p > 1$, the gap between the optimal solution for \HLp and the solution that the heuristic finds can be at least $\Omega(p/\log p)$.)

Finally, in Section~\ref{Hardness}, we prove an $\Omega(\log n)$-hardness for \HL1 and \HL\infty by constructing a reduction from Set Cover.
Let us very informally describe the intuition behind the proof for \HL1 (the proof for \HL\infty is similar).
Given a Set Cover instance, we construct a graph of constant diameter.
In this graph, we have a special vertex $r$, vertices $S_1,\dots, S_m$, and vertices $x_1,\dots, x_n$ (as well as auxiliary vertices).
Vertices $S_1,\dots, S_m$ correspond to the sets, and vertices $x_1,\dots, x_n$ correspond to
the elements of the universe in the Set Cover instance.
We design the HL instance in such a way that  we  only have  to satisfy the covering property for pairs $(r,x_i)$
(satisfying other pairs of vertices requires very few hubs); furthermore, the cost of the solution is approximately equal to the size of $H_r$
(to guarantee that, we have many copies of $r$ in the graph; their total contribution to the objective is much greater than the contribution of other vertices).
There are multiple shortest paths between $r$ and each $x_i$: if $x_i \in S_j$ in the Set Cover instance, then 
there is a path $r\to S_j \to x_i$. In order to cover the pair $(r, x_i)$ we have to choose one of the paths
between $r$ and $x_i$, then choose a vertex $w$ on it, and add it to $H_r$. The instance is designed in such a way, that if we choose path $r \to S_j\to x_i$, 
then the ``optimal choice'' of $w$ is $S_j$ (in particular, we ensure that all $x_i$ cannot choose $r$ as their hub; we do so by having many copies of vertex $r$
and using auxiliary vertices).
Therefore, to satisfy all pairs $(r, x_i)$, we have to find the smallest possible number of subsets $S_j$ that cover vertices $x_1,\dots, x_n$ and add them
to $H_r$; that is, we need to solve the Set Cover instance.
\\[5pt]
\textit{Note: For completeness, we also include a short note (Appendix~\ref{appendix_directed}) in which we explain how many of our results can be extended to the case of directed graphs.}

\section{Preliminaries}\label{Prelim}
\subsection{Definitions}\label{Prelim-Def}
We will say that a graph $G=(V, E)$ with non-negative edge lengths $l(e)$ has unique shortest paths if there is a unique shortest path between every pair of vertices. (We note that if the lengths of the edges are obtained by measurements, affected by some noise, the graph will satisfy the unique shortest path
property with probability 1.)

Recall that the shortest-path diameter $D$ of a graph $G$ is the maximum hop length of a shortest path in $G$
(that is, it is the minimum number $D$ such that every shortest path contains at most $D$ edges).
Note that $D$ is upper bounded by the aspect ratio $\rho$ of the graph:
$$D\leq \rho\equiv \frac{\max_{u,v\in V} d(u,v)}{\min_{(u,v)\in E} l(u,v)}.$$
Here, $d(u,v)$ is the shortest-path distance in $G$ w.r.t.\ edge lengths $l(e)$.
In particular, if all edges in $G$ have length at least $1$, then $D\leq \mathrm{diam}(G)$, where $\mathrm{diam}(G) = \max_{u,v\in V} d(u,v)$.

We will use the following observation about hub labelings: the covering property for the pair $(u,u)$ requires that $u\in H_u$.

\subsection{Linear programming relaxation}
In this section, we introduce a natural LP formulation of \HL1. Let $I$ be the set of all (unordered) pairs of vertices, including pairs $(u,u)$, which we also denote as $\{u,u\}$, $u \in V$. We use indicator variables $x_{uv}$, for all $(u,v) \in V \times V$, that represent whether $v \in H_u$ or not. Let $S_{uv} (\equiv S_{vu})$ be the set of all vertices that appear in any of the (possibly many) shortest paths between $u$ and $v$ (including the endpoints $u$ and $v$). Note that, although the number of shortest paths between $u$ and $v$ might, in general, be exponential in $n$, the set $S_{uv}$ can always be computed in polynomial time. In case there is a unique shortest path between $u$ and $v$, we use both $S_{uv}$ and $P_{uv}$ to denote the vertices of that unique shortest path. The constraint implied by the covering property is ``$\sum_{w \in S_{uv}} \min\{x_{uw}, x_{vw}\} \geq 1$, for all $\{u,v\} \in I$". The resulting LP relaxation is given below.

\bigskip
\noindent($\mathbf{LP_1}$)
\begin{equation*}
\begin{split}
    \min: &\quad \sum_{u \in V} \sum_{v \in V} x_{uv} \\
    \text{s.t.:} &\quad \sum_{w \in S_{uv}} \min\{x_{uw}, x_{vw}\} \geq 1,  \quad \forall \{u, v\} \in I, \\
          &\quad x_{uv} \geq 0,  \hspace{95pt} \forall (u,v) \in V \times V.
\end{split}
\end{equation*}

\noindent We note that the constraint $\sum_{w \in S_{uv}} \min\{x_{uw}, x_{vw}\} \geq 1$ can be equivalently rewritten as follows: $\sum_{w \in S_{uv}} y_{uvw} \geq 1$, $x_{uw} \geq y_{uvw}$ and $x_{vw}\geq y_{uvw}$. Observe that the total number of variables and constraints remains polynomial, and thus, an optimal solution can always be found efficiently.

One indication that the above LP is indeed an appropriate relaxation of HL is that we can reproduce the result of \cite{DBLP:journals/siamcomp/CohenHKZ03} and get an $O(\log n)$-approximation algorithm for HL by using a very simple rounding scheme. But, we will use the above LP in more refined ways, mainly in conjunction with the notion of pre-hubs, which we introduce later on.

\subsection{Hierarchical hub labeling}\label{hhl-section}
We now define and discuss the notion of hierarchical hub labeling (HHL), introduced by Abraham et al.~\cite{DBLP:conf/esa/AbrahamDGW12}. The presentation in this section follows closely the one in~\cite{DBLP:conf/esa/AbrahamDGW12}.
\begin{definition}
Consider a set system $\{H_u\}_{u\in V}$. Let us say that $v\preceq u$ if $v\in H_u$.
The set system $\{H_u\}_{u\in V}$ is a hierarchical hub labeling if it is a hub labeling, and $\preceq$ is a partial order.
\end{definition}

We will say that $u$ is higher ranked than $v$ if $u\preceq  v$. Every two vertices $u$ and $v$ have a common hub $w\in H_u \cap H_v$, and thus there is a vertex $w$ such that $w\preceq u$ and $w\preceq v$. Therefore, there is the highest ranked vertex in $G$.

We now define a special type of hierarchical hub labelings. Given a total order $\pi: [n] \to V$, a \textit{canonical} labeling is the hub labeling $H$ that is obtained as follows: $v \in H_u$ if and only if $\pi^{-1}(v) \leq \pi^{-1}(w)$ for all $w \in S_{uv}$. It is easy to see that a canonical labeling is a feasible hierarchical hub labeling.
We say that a hierarchical hub labeling $H$ respects a total order $\pi$ if the implied (by $H$) partial order is consistent with $\pi$. Observe that there might be many different total orders that $H$ respects. In~\cite{DBLP:conf/esa/AbrahamDGW12}, it is proved that all total orders that $H$ respects have the same canonical labeling $H'$, and $H'$ is a subset of $H$. Therefore, $H'$ is a minimal hierarchical hub labeling that respects the partial order that $H$ implies.

From now on, all hierarchical hub labelings we consider will be canonical hub labelings. Any canonical hub labeling can be obtained by the following process~\cite{DBLP:conf/esa/AbrahamDGW12}. Start with empty sets $H_u$,
choose a vertex $u_1$ and add it to each hub set $H_u$. Then, choose another vertex $u_2$. Consider all pairs $u$ and $v$ that
currently do not have a common hub, but such that $u_2$ lies on a shortest path between $u$ and $v$. Add $u_2$ to $H_u$ and $H_v$.
Then, choose $u_3$, \dots, $u_n$, and perform the same step. We get a hierarchical hub labeling.
(The hub labeling, of course, depends on the order in which we choose vertices of $G$.)

This procedure is particularly simple if the input graph is a tree (we will use this in Section \ref{Trees}). In a tree, we choose a vertex $u_1$ and add it to each hub set $H_u$. We remove $u_1$ from the tree and recursively process each connected component of $G-u_1$.
No matter how we choose vertices $u_1,\dots, u_n$, we get a canonical hierarchical hub labeling; given a hierarchical hub labeling $H$, in order to get a canonical hub labeling $H'$,
we need to choose the vertex $u_i$ of highest rank in $T'$ (w.r.t. to the order $\preceq$ defined by $H$) when our recursive procedure processes subinstance $T'$.
A canonical hub labeling gives a recursive decomposition of the tree to subproblems of gradually smaller size.
\section{Pre-hub labeling}
We introduce the notion of a pre-hub labeling that we will use in designing algorithms for HL.
\begin{definition}[Pre-hub labeling]\label{def:prehub}
Consider a graph $G = (V,E)$ and a length function $l: E \to \mathbb{R}^+$; assume that all shortest paths are unique. A family of sets $\{\widehat{H}_u\}_{u \in V}$, with $\widehat{H}_u \subseteq V$, is called a pre-hub labeling, if for every pair $\{u,v\}$, there exist $u' \in \widehat{H}_u \cap P_{uv}$ and $v' \in \widehat{H}_v \cap P_{uv}$ such that $u' \in P_{v'v}$; that is, vertices $u$, $v$, $u'$, and $v'$ appear in the following order along $P_{uv}$:
$u, v', u', v$ (possibly, some of the adjacent, with respect to this order, vertices coincide).
\end{definition}
Observe that any feasible HL is a valid pre-hub labeling. We now show how to find a pre-hub labeling given a feasible LP solution.
\begin{lemma}\label{lem:prehubs}
Consider a graph $G = (V,E)$ and a length function $l: E \to \mathbb{R}^+$; assume that all shortest paths are unique.
Let $\{x_{uv}\}$ be a feasible solution to $\mathbf{LP_1}$. Then, there exists a pre-hub labeling $\{\widehat{H}_u\}_{u\in V}$ such that $|\widehat{H}_u| \leq 2\sum_{v\in V} x_{uv}$.
In particular, if $\{x_{uv}\}$ is an optimal LP solution and $OPT$ is the $\ell_1$-cost of the optimal hub labeling (for \HL1), then
$\sum_{u\in V}|\widehat{H}_u| \leq 2 \, OPT$.
Furthermore, the pre-hub labeling $\{\widehat{H}_u\}_{u\in V}$ can be constructed efficiently given the LP solution $\{x_{uv}\}$.
\end{lemma}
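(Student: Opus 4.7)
The plan is to construct $\widehat{H}_u$ from the LP solution by assigning, to each pair $\{u,v\}$, a specific vertex of $P_{uv}$ to be added to $\widehat{H}_u$. Concretely, for each $v \neq u$, write $P_{uv} = (w_0 = u, w_1, \ldots, w_k = v)$ and define $u^*(u,v) := w_a$ where $a = \max\{i : \sum_{j=i}^{k} x_{uw_j} \geq 1/2\}$; this is the first vertex, walking from $v$ toward $u$, whose cumulative $x_u$-mass (from $v$'s side, inclusive) reaches $1/2$. Such an index exists since the LP constraint for the pair $\{u,v\}$ forces $\sum_{j=0}^{k} x_{uw_j} \geq \sum_{j} \min\{x_{uw_j}, x_{vw_j}\} \geq 1$. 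I then set $\widehat{H}_u := \{u\} \cup \{u^*(u, v) : v \in V \setminus \{u\}\}$.

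For the pre-hub property, consider a pair $\{u,v\}$ and let $u' = u^*(u,v) = w_a$, and symmetrically $v' = v^*(v,u) = w_b$ with $b = \min\{i : \sum_{j=0}^{i} x_{vw_j} \geq 1/2\}$. The required ordering $u, v', u', v$ along $P_{uv}$ is equivalent to $b \leq a$. Suppose for contradiction that $b > a$; then by the maximality of $a$, $\sum_{j > a} x_{uw_j} < 1/2$, and by the minimality of $b$, $\sum_{j \leq a} x_{vw_j} \leq \sum_{j < b} x_{vw_j} < 1/2$. Splitting the LP sum at index $a$ gives
\[
1 \leq \sum_{j=0}^{k} \min\{x_{uw_j}, x_{vw_j}\} \leq \sum_{j \leq a} x_{vw_j} + \sum_{j > a} x_{uw_j} < 1,
\]
a contradiction, so $b \leq a$ as required.

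The cost bound $|\widehat{H}_u| \leq 2 \sum_{v} x_{uv}$ is the main obstacle. I would argue it by a charging scheme on the shortest-path tree $T_u$ rooted at $u$: each $w \in \widehat{H}_u \setminus \{u\}$ admits a witness descendant $v(w)$ in $T_u$ with $u^*(u,v(w)) = w$, and by the construction the path $P_{w,v(w)}$ in $T_u$ carries $x_u$-mass at least $1/2$. I would then select the witnesses so that the collective ``certificate'' paths charge each $x_{uw'}$ at most twice, yielding $|\widehat{H}_u|/2 \leq \sum_{w'} x_{uw'}$. The technical crux is the case where $\widehat{H}_u$ contains a descending chain of vertices on a common root-to-leaf path of $T_u$, so that their natural witness paths share LP mass; LP feasibility, which forces $\sum_{w' \in P_{uw}} x_{uw'} \geq 1$ along every path emanating from $u$, should provide the slack to absorb this overlap, for instance by routing each witness into a ``clean'' branch of $T_u$ that avoids other $\widehat{H}_u$-descendants whenever possible.
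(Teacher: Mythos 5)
Your proof of the pre-hub property (the ordering $u, v', u', v$ on $P_{uv}$) is correct and clean — the "split the sum at index $a$" argument is a nice, self-contained way to get the contradiction, arguably tighter than the paper's two-claim route. However, the construction itself is genuinely different from the paper's, and it does not satisfy the cost bound $|\widehat{H}_u| \le 2\sum_v x_{uv}$; the part of your write-up that you flag as the "main obstacle" is indeed where the argument breaks.

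Here is a concrete counterexample. Take $G$ to be a path $u = w_0, w_1, \dots, w_k$ (unit edge lengths, unique shortest paths). Define an LP solution with $x_{uu}=1$, $x_{uw_i}=1/4$ for $i=1,\dots,k$, and $x_{w_i w}=1$ for all $i\ge 1$ and all $w$. This is feasible: for each pair $\{u,w_i\}$ the covering sum is $\sum_{w\in P_{uw_i}}\min\{x_{uw},1\}=1+i/4\ge 1$, and all other pairs are trivially covered. We have $\sum_v x_{uv}=1+k/4$. Now compute your $u^*(u,w_i)$: for $i\ge 2$, walking back from $w_i$, the cumulative mass first reaches $1/2$ at $w_{i-1}$, so $u^*(u,w_i)=w_{i-1}$; and $u^*(u,w_1)=u$. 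Hence $\widehat{H}_u=\{u,w_1,\dots,w_{k-1}\}$ has size $k$, while $2\sum_v x_{uv}=2+k/2$. For $k\ge 5$ the claimed bound fails, and as $k\to\infty$ the ratio $|\widehat{H}_u|/\sum_v x_{uv}$ tends to $4$, not $2$.

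The source of the blowup is structural, not a matter of choosing witnesses cleverly: your construction decides $u^*(u,v)$ independently for each target $v$, so the same LP mass $x_{uw}$ is re-used to reach the $1/2$ threshold for many different targets. In the path example the witness descendant for $w_{i-1}$ is forced to be $w_i$, so the certificate paths $P_{w_{i-1},w_i}$ chain together and there is no "clean branch" to route into; the at-most-twice charging you hope for is the best possible and yields factor $4$, not $2$. The paper avoids this by processing the shortest-path tree $T_u$ bottom-up, adding $v$ to $\widehat{H}_u$ the first time the subtree $T'_{uv}$ accumulates fractional mass $\ge 1/2$, and then zeroing out that mass so it cannot contribute to any ancestor's threshold; each added pre-hub therefore consumes a disjoint $1/2$ of mass, giving the factor $2$ directly. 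If you only need the downstream $O(\log D)$ result, your weaker constant would suffice once the charging argument is actually carried out, but as a proof of the lemma as stated it does not work, and the charging step remains an unproven sketch in your write-up.
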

\begin{proof}
Let us fix a vertex $u$.
We build the breadth-first search tree $T_u$ (w.r.t.\ edge lengths; i.e. the shortest-path tree)  from $u$; tree $T_u$ is rooted at $u$ and contains those edges $e \in E$ that appear on a shortest path between $u$ and some vertex $v \in V$. Observe that $T_u$ is indeed a tree and is uniquely defined, since we have assumed that shortest paths in $G$ are unique. For every vertex $v$, let $T'_{uv}$ be the subtree of $T_u$ rooted at vertex $v$.
Given a feasible LP solution $\{x_{uv}\}$, we define the weight of $T'_{uv}$ to be $\mathcal{W}(T'_{uv}) = \sum_{w \in T'_{uv}} x_{uw}$.

We now use the following procedure to construct set $\widehat{H}_{u}$.
We process the tree $T_u$ bottom up (i.e.\ we process a vertex $v$ after we have processed all other vertices in the subtree rooted at $v$), and whenever we detect a subtree $T'_{uv}$ of $T_u$ such that $\mathcal{W}(T'_{uv}) \geq 1/2$, we add vertex 
$v$ to the set $\widehat{H}_u$. We then set $x_{uw} = 0$ for all $w \in T'_{uv}$, and continue (with the updated $x_{uw}$ values) until we reach the root $u$ of $T_u$.
Observe that every time we add one vertex to $\widehat{H}_u$, we decrease the value of $\sum_{v \in V} x_{uv}$ by at least $1/2$.
Therefore, $|\widehat{H}_u| \leq 2 \cdot \sum_{v \in V} x_{uv}$.
We will now show that sets $\{\widehat{H}_u\}$ form a pre-hub labeling. To this end, we prove the following two claims.
\begin{claim}\label{claim:prehub-1}
Consider a vertex $u$ and two vertices $v_1, v_2$ such that $v_1\in P_{uv_2}$. If $\widehat{H}_u \cap P_{v_1 v_2} = \varnothing$, then
$\sum_{w\in P_{v_1v_2}} x_{uw} < 1/2$.
\end{claim}
\begin{proof}
Consider the execution of the algorithm that defined $\widehat{H}_u$. Consider the moment $M$ when we processed vertex $v_1$.
Since we did not add $v_1$ to $\widehat{H}_u$, we had $\mathcal{W}(T'_{uv_1}) < 1/2$.
In particular, since $P_{v_1v_2}$ lies in $T'_{uv_1}$, we have $\sum_{w\in P_{v_1v_2}} x_{uw}'<1/2$,
where $x_{uw}'$ is the value of $x_{uw}$ at the moment $M$. Since none of the vertices on the path $P_{v_1v_2}$ were added to $\widehat{H}_u$, none of the variables $x_{uw}$ for $w\in P_{v_1v_2}$ had been set to $0$.
Therefore, $x_{uw}' = x_{uw}$ (where $x_{uw}$ is the initial value of the variable) for $w\in P_{v_1v_2}$.
We conclude that $\sum_{w\in P_{v_1v_2}} x_{uw}<1/2$, as required.
\end{proof}

\begin{claim}\label{claim:prehub-2}
For any shortest path $P_{uv}$, let $u' \in \widehat{H}_u \cap P_{uv}$ be the vertex closest to $v$ among all vertices in $\widehat{H}_u \cap P_{uv}$ and $v' \in \widehat{H}_v \cap P_{uv}$ be the vertex closest to $u$ among all vertices in $\widehat{H}_v \cap P_{uv}$. Then $u' \in P_{v'v}$.
(Note that $\widehat{H}_u \cap P_{uv}\neq \varnothing$, since always $x_{uu} = 1$ and hence $u\in \widehat{H}_u \cap P_{uv}$; similarly,
$\widehat{H}_v \cap P_{uv}\neq \varnothing$.)
\end{claim}
\begin{proof}
Let us assume that this is not the case; that is, $u' \notin P_{v'v}$. Then $v'\neq u$ and $u'\neq v$ (otherwise, we would trivially have $u' \in P_{v'v}$).
Let $u''$ be the first vertex after $u'$ on the path $P_{u'v}$, and $v''$ be the first vertex after $v'$ on the path $P_{v'u}$.
Since $u' \notin P_{v'v}$, every vertex of $P_{uv}$ lies either on $P_{v''u}$ or $P_{u''v}$, or both (i.e. $P_{v''u} \cup P_{u''v} = P_{uv}$).

By our choice of $u'$, there are no pre-hubs for $u$ on $P_{u''v}$. By Claim~\ref{claim:prehub-1},
 $\sum_{w\in P_{u''v}} x_{uw} <1/2$. Similarly, $\sum_{w\in P_{v''u}} x_{vw} <1/2$.
Thus,
$
    1 > \sum_{w \in P_{uv''}} x_{vw} + \sum_{w \in P_{u''v}} x_{uw} \geq \sum_{w \in P_{uv}} \min\{x_{uw}, x_{vw}\}
$. 
We get a contradiction since $\{x_{uv}\}$ is a feasible LP solution.
\end{proof}
Claim~\ref{claim:prehub-2} shows that $\{\widehat{H}_u\}$ is a valid pre-hub labeling.
\end{proof}

\section{Hub labeling on graphs with unique shortest paths}\label{Bounded}
In this section, we present an $O(\log D)$-approximation algorithm for \HLp on graphs with unique shortest paths,
where $D$ is the shortest-path diameter of the graph. The algorithm works for every fixed $p \in [1,\infty)$ (the hidden constant factor in  the approximation factor  $O(\log D)$  depends on $p$).
To simplify the exposition, we present the algorithm for \HL1 in this section,
and the algorithm for \HLp, for arbitrary fixed $p \geq 1$, in Appendix~\ref{sec:lp-norm-algorithm}.

Consider Algorithm~\ref{Pre-Hubs_Algorithm} in the figure. The algorithm solves the LP relaxation and computes a pre-hub labeling $\{\widehat{H}_u\}_{u \in V}$
as described in Lemma~\ref{lem:prehubs}. Then it chooses a random permutation $\pi$ of $V$ and goes over all vertices one-by-one in the order specified by $\pi$:
$\pi_1$, $\pi_2$,\dots, $\pi_n$. It adds $\pi_i$ to $H_u$ if there is a pre-hub $u'\in \widehat{H}_u$ such that the following conditions hold:
 $\pi_i$ lies on the path $P_{uu'}$, there are no pre-hubs for $u$ between $\pi_i$ and $u'$ (other than $u'$), and currently there are no hubs for $u$  between
 $\pi_i$ and $u'$.

\begin{algorithm}[h]
    Solve $\mathbf{LP_1}$ and get an optimal solution $\{x_{uv}\}_{(u,v) \in V \times V}$.\\
    Obtain a set of pre-hubs $\{\widehat{H}_u\}_{u \in V}$ from $x$ as described in Lemma~\ref{lem:prehubs}. \\
    Generate a random permutation $\pi : [n] \to V$ of the vertices. \\
    Set $H_u = \varnothing$, for every $u \in V$. \\
    \For{$i= 1$ \KwTo $n$} {
        \For{every $u \in V$} {
            \For{every $u' \in \widehat{H}_u$, such that $\pi_i \in P_{uu'}$ and $P_{\pi_i u'} \cap \widehat{H}_u = \{u'\}$} {
                \lIf{$P_{\pi_i u'} \cap H_u = \varnothing$}{$H_u := H_u \cup \{\pi_i\}$}
            }
        }
    }
    Return $\{H_u\}_{u \in V}$.
    \medskip
\caption{Algorithm for \HL1 on graphs with unique shortest paths}
\label{Pre-Hubs_Algorithm}
\end{algorithm}

\begin{theorem}
Algorithm~\ref{Pre-Hubs_Algorithm} always returns a feasible hub labeling $H$. The cost of the hub labeling  is
$\E[\sum_u |H_u|] = O(\log D) \cdot OPT_{LP_1}$ in expectation, where $OPT_{LP_1}$ is the optimal value of $\mathbf{LP_1}$.
\end{theorem}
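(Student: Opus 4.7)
For each pair $(u,v)$, I would first apply the pre-hub property to obtain $u' \in \widehat{H}_u \cap P_{uv}$ and $v' \in \widehat{H}_v \cap P_{uv}$ that lie along $P_{uv}$ in the order $u, v', u', v$. The plan is then to let $w$ be the $\pi$-earliest vertex on the sub-path $P_{v'u'}$ and show $w \in H_u \cap H_v$. For inclusion in $H_u$, I would take $u''$ to be the pre-hub of $u$ closest to $w$ on $P_{wu'}$ (well-defined since $u'$ itself is a candidate). By the choice of $u''$, $w \in P_{uu''}$ and $P_{wu''} \cap \widehat{H}_u = \{u''\}$; because $P_{wu''} \subseteq P_{v'u'}$ and $w$ is $\pi$-first on $P_{v'u'}$, no other vertex of $P_{wu''}$ has yet been processed, so $P_{wu''} \cap H_u = \varnothing$. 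The inner conditional therefore succeeds for $u''$ and $w$ is added to $H_u$; a symmetric argument puts $w$ in $H_v$.

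\textbf{Cost decomposition.} For the cost bound, I fix $u$ and, for each $u' \in \widehat{H}_u$, define
\[
\mathrm{Seg}(u') = \{w \in P_{uu'} : P_{wu'} \cap \widehat{H}_u = \{u'\}\},
\]
a contiguous sub-path of $P_{uu'}$ starting at $u'$ and extending toward $u$ up to (but not including) the next pre-hub of $u$. Enumerate $\mathrm{Seg}(u') = \{s_1 = u', s_2, \ldots, s_k\}$ by hop-distance from $u'$; since $P_{uu'}$ has at most $D$ edges, $k \leq D+1$. Within a single step $\pi_i = w$, as soon as $w$ is added to $H_u$ via some pre-hub, the guard $P_{wu''} \cap H_u = \varnothing$ fails for every later $u''$ in the loop (because $w \in P_{wu''} \cap H_u$), so each added vertex is charged to a unique pre-hub. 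Hence $|H_u| = \sum_{u' \in \widehat{H}_u} Y(u')$, where $Y(u')$ counts vertices added via $u'$.

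\textbf{Probabilistic bound and assembly.} The technical heart is the claim: \emph{if $w = s_j$ is added via $u'$, then $w$ is $\pi$-first in $\{s_1, \ldots, s_j\}$.} To prove it, let $s_{\ell^*}$ be the $\pi$-earliest vertex of $\{s_1, \ldots, s_j\}$; when $s_{\ell^*}$ is processed, none of the other $s_\ell$ with $\ell \leq j$ has been touched yet, so the guard for $u'$ at $s_{\ell^*}$ is satisfied and $s_{\ell^*}$ enters $H_u$. If $\ell^* < j$, then by the time $w = s_j$ is processed, $s_{\ell^*} \in H_u \cap \{s_1, \ldots, s_{j-1}\}$, the guard for $u'$ fails, and $w$ cannot be added via $u'$; thus $\ell^* = j$. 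Under uniform $\pi$, $\Pr[s_j \text{ is } \pi\text{-first in } \{s_1, \ldots, s_j\}] = 1/j$, so $\E[Y(u')] \leq \sum_{j=1}^{k} 1/j = H_k = O(\log D)$. Combining this with Lemma~\ref{lem:prehubs} and linearity of expectation,
\[
\E\!\left[\sum_u |H_u|\right] \leq O(\log D) \sum_u |\widehat{H}_u| \leq O(\log D) \cdot OPT_{LP_1}.
\]
The delicate step is the italicized claim above: additions made via \emph{other} pre-hubs could in principle insert a ``blocker'' into the guard for $u'$ and break the one-sided ``first-in-prefix'' analysis, but the observation that the $\pi$-first vertex of the prefix is always itself added (via $u'$) absorbs this interaction and is what keeps the harmonic sum bounded by $O(\log D)$ rather than degrading to $O(\log n)$.
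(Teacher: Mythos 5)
Your proof is correct and takes essentially the same approach as the paper: feasibility via the $\pi$-first vertex on a pre-hub segment, and the cost bound via charging each hub to the pre-hub that triggered it, decomposing $|H_u|$ over the disjoint segments $\mathrm{Seg}(u')$, and bounding the per-segment expected charge by a harmonic sum through a random-records argument (the paper selects the tightest pair $u',v'$ up front and indexes the records by processing position rather than hop distance, but these are cosmetic variations). One small imprecision worth noting: ``the guard for $u'$ at $s_{\ell^*}$ is satisfied and $s_{\ell^*}$ enters $H_u$'' is not literally guaranteed --- $s_{\ell^*}$ may already have been added to $H_u$ via a different pre-hub earlier in the same inner-loop pass, in which case the guard for $u'$ in fact fails --- but your contradiction only needs that $s_{\ell^*} \in H_u$ once its iteration completes, which holds in either case.
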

\begin{remark}\label{rem:derandomize}
Note that Algorithm~\ref{Pre-Hubs_Algorithm} can be easily derandomized using the method of conditional expectations: instead of 
choosing a random permutation $\pi$, we first choose $\pi_1\in V$, then $\pi_2\in V\setminus\{\pi_1\}$ and so on;
each time we choose $\pi_i \in V\setminus\{\pi_1,\dots,\pi_{i-1}\}$ so as to minimize the conditional expectation 
$\E[\sum_u |H_u| \,|\, \pi_1,\dots, \pi_i]$.
\end{remark}
\begin{proof}
We first show that the algorithm always finds a feasible hub labeling. Consider a pair of vertices $u$ and $v$. We need to show that they have a common hub
on $P_{uv}$. The statement is true if $u=v$ since $u\in \widehat{H}_u$ and thus $u\in H_u$.
So, we assume that $u \neq v$. Consider the path $P_{uv}$. Because of the pre-hub property, there exist $u' \in \widehat{H}_u$ and $v' \in \widehat{H}_v$ such that $u' \in P_{v'v}$.
In fact, there may be several possible ways to choose such $u'$ and $v'$. Let us choose $u'$ and $v'$ so that $\widehat{H}_u \cap (P_{u'v'} \setminus \{u',v'\}) = \widehat{H}_v \cap (P_{u'v'} \setminus \{u',v'\}) = \varnothing$ (for instance, choose the closest pair of  $u'$ and $v'$ among all possible pairs).
Consider the first iteration $i$ of the algorithm such that $\pi_i \in P_{u'v'}$. We claim that the algorithm adds $\pi_i$ to both $H_u$ and $H_v$.
Indeed, let us verify that the algorithm adds $\pi_i$ to $H_u$. We have: (i)  $\pi_i$ lies on $P_{v'u'}\subset P_{uu'}$, (ii) there are no pre-hubs for $u$ on $P_{v'u'}\supset P_{\pi_iu'}$ other than $u'$,
 (iii) $\pi_i$ is the first vertex we process on the path $P_{u'v'}$, thus currently there are no hubs on $P_{u'v'}$.
Therefore, the algorithm adds $\pi_i$ to $H_u$. Similarly, the algorithm adds $\pi_i$ to $H_v$.

Now we upper bound the expected cost of the solution. We will charge every hub that we add to $H_u$ to a pre-hub in $\widehat{H}_u$;
namely, when we add $\pi_i$ to $H_u$ (see line 8 of the algorithm), we charge it to pre-hub $u'$.
For every vertex $u$, we have $|\widehat{H}_u| \leq 2\sum_w x_{uw}$. We are going to show that every
$u' \in \widehat{H}_u$ is charged at most $O(\log D)$ times in expectation. Therefore, the expected number of hubs in $H_u$ is at most
$O(2\sum_w x_{uw} \times \log D)$.

Consider a vertex $u$ and a pre-hub $u'\in \widehat{H}_u$ ($u'\neq u$). Let $u'' \in \widehat{H}_u$ be the closest pre-hub to $u'$ on the path $P_{u'u}$.
Observe that all hubs charged to $u'$ lie on the path $P_{u''u'} \setminus \{u''\}$.
Let $k = |P_{u''u'} \setminus \{u''\}|$. Note that $k \leq D$. Consider the order $\sigma : [k] \to P_{u''u'} \setminus \{u''\}$ in which the vertices of $P_{u''u'} \setminus \{u''\}$ were processed by the algorithm ($\sigma$ is a random permutation). Note that $\sigma_i$ charges $u'$ if and only if $\sigma_i$ is closer to $u'$ than $\sigma_{1},\dots,\sigma_{i-1}$.
The probability of this event is $1/i$. We get that the number of hubs charged to $u'$ is $\sum_{i=1}^k \frac{1}{i} = \log k + O(1)$, in expectation.  Hence, $\E \left[\sum_{u\in V} |H_u| \right] \leq 2 \left(\log D + O(1) \right) \cdot OPT_{LP_1}$.
\end{proof}

\section{Hub labeling on trees}\label{Trees}

In this section, we study the Hub Labeling problem on trees. Observe that if the graph is a tree, the length function $l$ does not play any role in the task of choosing the optimal hubs (it only affects the actual distances between the vertices), and so we assume that we are simply given an unweighted tree $T = (V, E)$, $|V| = n$. We start with proving a structural result about optimal solutions in trees; we show that there always exists a \textit{hierarchical hub labeling} that is also an optimal hub labeling.
We then analyze a simple and fast heuristic for HL on trees proposed by Peleg \cite{DBLP:journals/jgt/Peleg00}, and prove that it gives a 2-approximation for \HL1. We do not know if our analysis is tight, but we prove that there are instances where the heuristic finds a suboptimal solution of cost at least $(\frac{3}{2}-\varepsilon) OPT$ (for every $\varepsilon > 0$).
Finally, we present a polynomial-time approximation scheme (PTAS) and a quasi-polynomial time exact algorithm for \HL1 on trees. We then modify our approach in order to obtain a PTAS and a quasi-polynomial-time exact algorithm for \HLp on trees for any $p \in [1, \infty]$, thus providing a thorough algorithmic understanding of HL, under any cost function, on trees.


\subsection{Optimal solutions for trees are hierarchical}
We show that any feasible hub labeling solution $H$ can be converted to a hierarchical hub labeling $H'$ of at most the same $\ell_p$-cost
(for every $p\in[1,\infty]$). Therefore, there always exists an optimal solution that is hierarchical.

\begin{theorem}\label{HHL_optimal_for_trees}
For every tree $T = (V,E)$, there always exists an optimal \HLp solution that is hierarchical, for every $p \in [1, \infty]$.
\end{theorem}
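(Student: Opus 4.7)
The plan is to prove this by induction on $n=|V(T)|$, converting any feasible HL $H$ into an HHL $H'$ with $\ell_p$-cost at most that of $H$. The base case $n=1$ is trivial.

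For the inductive step, I would use the natural recursive characterization of canonical HHLs on trees: pick a root vertex $r$, set $H'_r=\{r\}$, and for each connected component $T_j$ of $T-r$ and each $u\in V(T_j)$, set $H'_u=\{r\}\cup H'^{(j)}_u$, where $H'^{(j)}$ is the HHL obtained by recursively applying the construction to the restriction $H|_{T_j}$, defined by $(H|_{T_j})_u:=H_u\cap V(T_j)$. This restriction is a valid HL on $T_j$, since for any $u,v\in V(T_j)$ the tree path $P_{uv}$ stays entirely inside $T_j$, so the covering hub for $(u,v)$ in $H$ automatically lies in $V(T_j)$. The resulting $H'$ is a canonical HHL with $r$ at the top of the partial order $\preceq$.

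For $p=1$, the $\ell_1$-cost of $H'$ equals $n+\sum_j \mathrm{cost}_1(H'^{(j)})$, which by the inductive hypothesis is at most $n+\sum_j \mathrm{cost}_1(H|_{T_j})$. On the other hand, $\mathrm{cost}_1(H)=\sum_j \mathrm{cost}_1(H|_{T_j})+g(r)$, where $g(r):=|H_r|+\sum_{u\neq r}|H_u\setminus V(T_{j(u;r)})|$ and $T_{j(u;r)}$ denotes the component of $T-r$ containing $u$. Hence the cost inequality reduces to choosing $r$ with $g(r)\geq n$. The cleanest case is a universal hub $r\in\bigcap_u H_u$: then every $u\neq r$ contributes $r\in H_u\setminus V(T_{j(u;r)})$, giving $g(r)\geq 1+(n-1)=n$.

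The main obstacle is the case when no universal hub exists. Here I would invoke Helly's theorem for subtrees of a tree applied to the Steiner hulls $\{\mathrm{conv}(H_u)\}_{u\in V}$, which pairwise intersect (any two hub sets share a covering hub lying in both hulls), yielding some vertex $r^*\in\bigcap_u\mathrm{conv}(H_u)$; combined with an amortized charging argument that routes the covering hubs of pairs $\{x,y\}$ with $r^*\in P_{xy}$ into cross-component hub-pairs at $r^*$, this produces enough contribution to ensure $g(r^*)\geq n$. Once the key existence lemma is in hand, the inductive $\ell_1$ accounting closes. For general $p\in(1,\infty]$ the same recursive construction applies, with the cost analysis adapted by noting that at the chosen root $r$ the sizes $|H'_u|$ are dominated componentwise by $|H_u|$, so the $\ell_p$-cost is preserved; for $p=\infty$ the argument is most transparent, as $\max_u|H'_u|\leq\max_u|H_u|$ follows immediately when $r$ is (a Helly-type substitute for) a universal hub.
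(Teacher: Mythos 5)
You are on essentially the same path as the paper: recursion on components of $T-r$, restrict $H$ to each component, and use the Helly property of subtrees of a tree applied to the Steiner hulls $\mathrm{conv}(H_u)$ to find a root $r^*\in\bigcap_u \mathrm{conv}(H_u)$. The definition $(H|_{T_j})_u := H_u\cap V(T_j)$ matches the paper's $\widetilde H_u$, and the observation that this restriction is a feasible HL on each component is correct.

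However, there is a gap exactly at the crux, and you have also made the argument harder than it needs to be. Once you have $r^*\in\bigcap_u \mathrm{conv}(H_u)$, the single observation you are missing is this: for every $u\neq r^*$, since $r^*\in\mathrm{conv}(H_u)$ and $\mathrm{conv}(H_u)$ is the union of the paths $P_{uv}$ over $v\in H_u$, the vertex $r^*$ lies on $P_{uv}$ for some $v\in H_u$, and that $v$ cannot lie in $u$'s component $Q^u$ of $T-r^*$. Hence $|H_u\setminus V(Q^u)|\geq 1$, i.e.\ $|H_u\cap V(Q^u)|\leq |H_u|-1$. This immediately gives two things. First, $g(r^*)\geq |H_{r^*}|+(n-1)\geq n$, so the vague ``amortized charging argument that routes covering hubs of pairs $\{x,y\}$ with $r^*\in P_{xy}$'' you gesture at is unnecessary and, as stated, not a proof. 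Second, and more importantly, by the inductive hypothesis $|H'^{(j)}_u|\leq |H_u\cap V(T_j)|\leq |H_u|-1$, so $|H'_u| = |H'^{(j)}_u|+1\leq |H_u|$ for all $u\neq r^*$, and $|H'_{r^*}|=1\leq|H_{r^*}|$. This is the componentwise domination $|H'_u|\leq|H_u|$ for every $u$, which you assert for $p\in(1,\infty]$ without justification; it is not automatic and it is exactly the point where the Helly property of $r^*$ enters. Once componentwise domination is in hand it handles every $p\in[1,\infty]$ at once, which also makes your separate $\ell_1$ bookkeeping via $g(r)$ redundant. So: the decomposition and the Helly idea are right, but you need to state and prove the one-line consequence of $r^*\in\mathrm{conv}(H_u)$ (some hub of $u$ falls outside $Q^u$), and then the whole theorem closes uniformly by componentwise domination rather than by a separate $\ell_1$ argument plus an unproved claim for general $p$.
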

\begin{proof}
To prove this, we consider a feasible solution $H$ and convert it to a hierarchical solution $H'$ such that $|H'_u|\leq |H_u|$ for every $u$.
In particular, the $\ell_p$-cost of $H'$ is at most the $\ell_p$-cost of $H$ for every $p$.

The construction is recursive (the underlying inductive hypothesis for smaller subinstances being that a feasible HL $H$ can be converted to a hierarchical solution $H'$ such that $|H'_u|\leq |H_u|$ for every $u$.) First, for each $u \in V$, define an induced subtree $T_u \subseteq T$ as follows: $T_u$ is the union of paths $P_{uv}$ over all $v\in H_u$. In other words, a vertex $w$ belongs to $H_u$ if there is a hub $v\in H_u$ such that $w\in P_{uv}$. Note that $T_u$ is a connected subtree of $T$.

The crucial property that we need is that $T_u \cap T_v \neq \varnothing$, for every $u,v \in V$. To see this, consider any pair $\{u,v\}$, $u \neq v$.  We know that $H_u \cap H_v \cap P_{uv} \neq \varnothing$. Let $w \in H_u \cap H_v \cap P_{uv}$. By construction, $w \in T_u$ and $w \in T_v$, and so $T_u \cap T_v \neq \varnothing$. We now use the fact that a family of subtrees of a tree satisfies the \textit{Helly property} (which first appeared as a consequence of the work of Gilmore \cite{Gilmore}, and more explicitly a few years later in \cite{gyarfas1970helly}): suppose that we are given a family of subtrees of $T$ such that every two subtrees in the family intersect, then all subtrees in the family intersect (i.e. they share a common vertex).

Let $r \in \bigcap_{u \in V} T_u$. We remove $r$ from $T$. Consider the connected components of $T-r$: $Q_1,\dots,Q_c$. Denote the connected component that contains vertex $u$ by $Q^u$. Let $\widetilde H_u = H_u\cap Q^u$. Note that $|\widetilde H_u|\leq|H_u| -1$, since $r \in T_u$, which, by the definition of $T_u$, implies that there exists some $w \notin Q^u$ with $w \in H_u$.
Consider $u,v\in Q_i$. They have a common hub $w\in H_u \cap H_v\cap P_{uv}$. Since $P_{uv} \subset Q^u = Q^v = Q_i$, we have $w\in \widetilde H_u \cap \widetilde H_v \cap P_{uv}$. Therefore, $\{\widetilde H_u:u\in Q_i\}$ is a feasible hub labeling for $Q_i$.
Now, we recursively find hierarchical hub labelings for $Q_1,\dots, Q_c$. Denote the hierarchical hub labeling for $u$ in $Q^u$ by $H''_u$. The inductive hypothesis ensures that $|H''_u| \leq |\widetilde H_u|\leq |H_u| -1$.

Finally, define $H_u' = H''_u \cup \{r\}$, for $u\neq r$, and $H_r' = \{r\}$. We show that $H_u'$ is a hub labeling. Consider $u,v\in V$.
If $u,v\in Q_i$ for some $i$, then
$H'_u \cap H'_v\cap P_{uv} \supset H''_u \cap H''_v\cap P_{uv}\neq \varnothing$ since $H''$ is a hub labeling for $Q_i$.
If $u\in Q_i$ and $v \in Q_j$ ($i\neq j$), then $r\in  H'_u \cap H'_v\cap P_{uv}$. Also, if either $u=r$ or $v=r$, then again
$r\in H'_u \cap H'_v\cap P_{uv}$. We conclude that $H'$ is a hub labeling. Furthermore,
$H'$ is a \textit{hierarchical} hub labeling: $r\preceq u$ for every $u$ and $\preceq$ is a partial order on every set $Q_i$;
elements from different sets $Q_i$ and $Q_j$ are not comparable w.r.t.\ $\preceq$.

We have $|H_u'| = |H''_u| + 1 \leq |H_u|$ for $u\neq r$ and $|H'_r| = 1\leq |H_r|$, as required.
\end{proof}

This theorem allows us to restrict our attention only to hierarchical hub labelings, which have a much simpler structure than arbitrary hub labelings, when we design algorithms for HL on trees.

\subsection{An analysis of Peleg's heuristic for \texorpdfstring{\HL1}{HL1} on trees} \label{trees_2_approx}

In this section, we  analyze a purely combinatorial algorithm for HL proposed by Peleg in \cite{DBLP:journals/jgt/Peleg00} and show that it returns a hierarchical 2-approximate hub labeling on trees (see Algorithm~\ref{Tree-Algorithm}). (Peleg's result~\cite{DBLP:journals/jgt/Peleg00}, expressed in the hub labeling context, shows that the algorithm returns a hub labeling $H$ with $\max_{u \in V} |H_u| = O(\log n)$ for a tree on $n$ vertices.)

\begin{definition}\label{def:bal-sep-vertex}
Consider a tree $T$ on $n$ vertices. We say that a vertex $u$ is a balanced separator vertex if every connected component
of $T - u$  has at most $n/2$ vertices. The weighted balanced separator vertex for a vertex-weighted tree is defined analogously.
\end{definition}
It is well-known that every tree $T$ has a balanced separator vertex (in fact, a tree may have either exactly one or exactly two balanced separator vertices) and such a separator vertex can be found efficiently (i.e. in linear time) given $T$. The algorithm by Peleg, named here Tree Algorithm, is described in the figure below (Algorithm \ref{Tree-Algorithm}).

\begin{algorithm}[h]
\SetKwInput{Input}{Input}
\SetKwInput{Output}{Output}
        \Input{a tree $T'$}
        \Output{a hub labeling $H$ for $T'$}
\smallskip
        Find a balanced separator vertex $r$ in $T'$. \\
        Remove  $r$ and recursively find a HL in each subtree $T_i$ of $T'-r$. Let $H'$ be the labeling obtained by the recursive procedure.
        (If $T'$ consists of a single vertex and, therefore, $T'-r$ is empty, the algorithm does not make any recursive calls.)
        \\
        Return $H_u := H_u' \cup \{r\}$, for every vertex $u$ in $ T'- \{r\}$, and $H_r = \{r\}$.
        \medskip

\caption{Tree Algorithm}
\label{Tree-Algorithm}
\end{algorithm}

It is easy to see that the algorithm always returns a feasible hierarchical hub labeling, in total time $O(n \log n)$. To bound its cost, we use the primal--dual approach. We consider the dual of $\mathbf{LP_1}$. Then, we define a dual feasible solution whose cost is at least half of the cost of the solution that the algorithm returns. We formally prove the following theorem.
\begin{theorem}\label{peleg_theorem}
	The Tree Algorithm is a 2-approximation algorithm for \HL1 on trees.
\end{theorem}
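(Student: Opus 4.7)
The plan is to apply LP duality to $\mathbf{LP_1}$.  Reformulating the covering constraint with the auxiliary variables $y_{\{u,v\}w}$ indicated right after the statement of $\mathbf{LP_1}$, the dual has one variable $z_{\{u,v\}}\ge 0$ per covering constraint and two variables $\alpha^u_{\{u,v\}w},\alpha^v_{\{u,v\}w}\ge 0$ per linking constraint, with constraints $\alpha^u_{\{u,v\}w}+\alpha^v_{\{u,v\}w}\ge z_{\{u,v\}}$ for every $w\in P_{uv}$ and $\sum_{v:\,w\in P_{uv}}\alpha^u_{\{u,v\}w}\le 1$ for every $(u,w)$.  By weak duality, it suffices to produce a feasible dual solution whose objective $\sum_{\{u,v\}} z_{\{u,v\}}$ is at least $\tfrac12\sum_u|H_u|$.

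The structural identity that drives the analysis is that in Peleg's recursion each vertex $w$ is the balanced separator of exactly one subtree $T'_w$, and $w\in H_u$ iff $u\in T'_w$, so $\sum_u|H_u|=\sum_w|T'_w|$.  For each pair $\{u,v\}$ I define $r(u,v)$ to be the separator of the deepest recursion subtree containing both---equivalently, the vertex at the LCA of $u$ and $v$ in the recursion tree; note $P_{uv}\subseteq T'_{r(u,v)}$.  I set $z_{\{u,v\}}=2/(|T'_{r(u,v)}|-1)$ so that pairs split in small subtrees receive large weight, and I allocate $\alpha^u,\alpha^v$ by splitting $P_{uv}$ at $r(u,v)$ and charging the $u$-half to $\alpha^u$ and the $v$-half to $\alpha^v$ (with both sides carrying $z_{\{u,v\}}$ at $r(u,v)$ itself), which satisfies the linking constraint immediately.

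The dual value is bounded by a per-subtree counting argument.  Grouping pairs by $r(u,v)$ gives $\sum_{\{u,v\}} z_{\{u,v\}}=\sum_w 2N_w/(|T'_w|-1)$, where $N_w$ is the number of pairs split by $w$.  Writing the component sizes of $T'_w-w$ as $c_1,\dots,c_k$ with $\sum_i c_i=|T'_w|-1$, one has $N_w=(|T'_w|-1)+\sum_{i<j}c_ic_j$, and the balanced-separator bound $c_i\le |T'_w|/2$ yields $\sum_{i<j}c_ic_j\ge (|T'_w|-1)(|T'_w|-2)/4$; hence the contribution of $T'_w$ to the dual is at least $(|T'_w|+2)/2\ge |T'_w|/2$.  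Singleton subtrees contribute nothing to the dual but $1$ to the algorithm's cost, and an amortization shows each non-singleton $T'$ with $s$ singleton children has adjusted surplus $\ge 1-s/(2(|T'|-1))\ge 0$ (using $s\le |T'|-1$, again from balancedness), so summing yields $\sum_{\{u,v\}} z_{\{u,v\}}\ge \tfrac12\sum_w|T'_w|=\tfrac12\sum_u|H_u|$.

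Dual feasibility of the $x$-constraints is where the balanced-separator bound is used a second time: for fixed $(u,w)$ the contributing pairs organize themselves along the chain of recursion-tree ancestors of $u$, and the $|C|\le |T'|/2$ bound caps the per-level load.  The main obstacle, and the most delicate part of the argument, is making this per-level load tighten to $\le 1$ (rather than the naive $\le 2$), particularly at $w=u$ where every ancestor of $u$ contributes; this requires a more careful bookkeeping of $\alpha^u$ on the $u$-half of the path (for instance by concentrating $\alpha^u$ on a single canonical $w$ per pair and compensating with $\alpha^v$ on the $v$-half).  Once feasibility is established, weak LP duality gives $\mathrm{OPT}\ge\mathrm{OPT}_{\mathbf{LP_1}}\ge \sum_{\{u,v\}} z_{\{u,v\}}\ge \tfrac12\sum_u|H_u|$, completing the $2$-approximation.
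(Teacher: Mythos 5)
Your overall framework is the same as the paper's: solve the problem by exhibiting a feasible dual solution to the $\mathbf{LP_1}$-dual whose value is at least half the algorithm's cost, charging each pair $\{u,v\}$ at the recursion level where it is first split and using the balanced-separator property to control both the dual value and dual feasibility. The identity $\sum_u |H_u| = \sum_w |T'_w|$, the grouping of pairs by the LCA subtree in the recursion, and the bound $\sum_{i<j} c_i c_j \geq (n'-1)(n'-2)/4$ are all correct and essentially match the paper.

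The gap is in the dual feasibility argument, and it is not a bookkeeping detail — the assignment of the auxiliary variables as you describe it cannot be made feasible. You charge the \emph{$u$-half} of $P_{uv}$ to $\alpha^u_{\{u,v\}w}$, i.e.\ set $\alpha^u > 0$ for $w$ between $u$ and the separator. But $\alpha^u_{\{u,v\}w}$ is the dual of the constraint $x_{uw} \geq y_{\{u,v\}w}$, so it appears in the packing constraint indexed by $(u,w)$. With your assignment, for a fixed $(u,w)$ the nonzero terms in $\sum_{v:\,w\in P_{uv}} \alpha^u_{\{u,v\}w}$ come from \emph{every} ancestor level $T'$ of $u$ in the recursion tree whose separator lies on the far side of $w$ from $u$; at each such level the valid $v$'s are all of $T'$ minus $u$'s own component, which is $\geq |T'|/2$, not $\leq |T'|/2$, and there is no decay across levels. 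The per-level load is therefore already $\approx 1$ and the levels stack, so the constraint is violated by an unbounded amount. This is exactly the obstacle you flag at the end, and the suggested fix (``concentrating $\alpha^u$ on a single canonical $w$ per pair'') is not developed and would change the construction substantially.

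The paper's construction makes the opposite choice: it loads the variable $\beta_{uvw}$ (your $\alpha^u_{\{u,v\}w}$) on the \emph{$v$-side} of the separator, i.e.\ for $w \in P_{rv}\setminus\{r\}$, and loads $\beta_{vuw}$ on the $u$-side. Then for a fixed $(u,w)$ with $w \neq u$, a nonzero term $\beta_{uvw}$ exists only when $w$ is strictly on $v$'s side of the separator, which forces $(u,w)$ to be split at that very same iteration. Consequently all nonzero contributions to the $(u,w)$-constraint come from a single iteration $T'$, and in that iteration every admissible $v$ with $w\in P_{uv}$ lies in $w$'s component $T_j$, which has size $\leq n'/2$. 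With the normalization $A = 2/n'$ (and $B = 1/n'$ at $w = r$), the load is $\leq |T_j| \cdot A \leq 1$ exactly. Note also that your normalization $z = 2/(|T'|-1)$ would marginally break feasibility even with the correct side assignment, since $(n'/2)\cdot 2/(n'-1) = n'/(n'-1) > 1$, and the same happens at $w = r$ where you also set both $\alpha^u$ and $\alpha^v$ equal to $z$ rather than $z/2$. So two changes are needed: flip which side of the separator each $\alpha$-variable is loaded on, and scale $z$ down to $2/n'$ with the contribution at $w=r$ halved.
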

\begin{proof}
The primal and dual linear programs for HL on trees are given in Figure~\ref{primal-dual}. We note that the dual variables $\{a_{uv}\}_{u,v}$ correspond to unordered pairs $\{u,v\} \in I$, while the variables $\{\beta_{uvw}\}_{u,v,w}$ correspond to ordered pairs $(u,v) \in V \times V$, i.e. $\beta_{uvw}$ and $\beta_{vuw}$ are different variables.

\begin{figure}
\begin{minipage}{0.4\textwidth}
\small
\vspace{-6pt}
(\textbf{PRIMAL-LP})
\vspace{12pt}
\begin{align*}
    \min: &\quad \sum_{u \in V} \sum_{v \in V} x_{uv} \\
    \text{s.t.:}  &\quad \sum_{w \in P_{uv}} y_{uvw} \geq 1, &&\forall\, \{u,v\} \in I  \\
          &\quad x_{uw} \geq y_{uvw}, &&\forall\, \{u,v\} \in I, \; \forall\, w \in P_{uv} \\
          &\quad x_{vw} \geq y_{uvw}, &&\forall\, \{u,v\} \in I, \; \forall\, w \in P_{uv} \\
          &\quad x_{uv} \geq 0, &&\forall\, \{u,v\} \in V \times V\\
          &\quad y_{uvw} \geq 0, &&\forall\, \{u,v\} \in I, \;\forall\, w \in P_{uv}
\end{align*}
\vspace{8pt}
\end{minipage}%
\hspace{4mm}\vline \hspace{4mm}
\begin{minipage}{0.4\textwidth}
\small
(\textbf{DUAL-LP})\\
variables: $\alpha_{uv}$ and $\beta_{uvw}$ for $w\in P_{uv}$
\begin{align*}
    \max: &\quad \sum_{\{u,v\} \in I} \alpha_{uv} \\
    \text{s.t.:} &\quad \alpha_{uv} \leq \beta_{uvw} + \beta_{vuw}, && \forall\, \{u,v\} \in I \;, u \neq v \\
    & &&\forall\, w \in P_{uv} \\
    &\quad \alpha_{uu} \leq \beta_{uuu}, &&\forall\, u \in V \\
    &\quad \sum_{v: w \in P_{uv}} \beta_{uvw} \leq 1, &&\forall\, (u,w) \in V \times V \\
    &\quad  \alpha_{uv} \geq 0, && \forall\, \{u,v\} \in I \\
    &\quad  \beta_{uvw} \geq 0, && \forall\, \{u,v\} \in I, \forall\, w \in P_{uv} \\
    &\quad  \beta_{vuw} \geq 0,&& \forall\, \{u,v\} \in I, \forall\, w \in P_{uv}
\end{align*}
\end{minipage}
\begingroup
\captionof{table}{Primal and Dual LPs for HL on trees}
\label{primal-dual}
\endgroup
\end{figure}
\normalsize

As already mentioned, it is straightforward that the algorithm finds a feasible hierarchical hub labeling. We now bound the cost of the solution by
 constructing a fractional solution for the DUAL-LP. To this end, we track the execution of the algorithm and gradually define the fractional solution.
Consider one iteration of the algorithm in which the algorithm processes a tree $T'$ ($T'$ is a subtree of $T$).
Let $r$ be the balanced separator vertex that the algorithm finds in line 1.
At this iteration, we assign dual variables $a_{uv}$ and $\beta_{uvw}$ for those pairs $u$ and $v$ in $T'$ for which $P_{uv}$ contains vertex $r$.
Let $n'$ be the size of $T'$, $A = 2/n'$ and $B=1/n'$.
Denote the connected components of $T'-r$ by $T_1, ..., T_t$; each $T_i$ is a subtree of $T'$.

Observe that we assign a value to each $a_{uv}$ and $\beta_{uvw}$ exactly once.
Indeed, since we split $u$ and $v$ at some iteration, we will assign a value to $a_{uv}$ and $\beta_{uvw}$ at least once.
Consider the first iteration in which we assign a value to $a_{uv}$ and $\beta_{uvw}$. At this iteration, vertices $u$ and $v$ lie in different subtrees $T_i$ and $T_j$ of $T'$
(or $r\in\{u,v\}$). Therefore, vertices $u$ and $v$ do not lie in the same subtree $T''$ in the consecutive iterations;
consequently, we will not assign new values to $a_{uv}$ and $\beta_{uvw}$ later.

For $u\in T_i$ and $v\in T_j$ (with $i\neq j$), we define $\alpha_{uv}$ and $\beta_{uvw}$ as follows
\begin{itemizeTwoCol}
	\item $\alpha_{uv} = A$,
    \item For $w \in P_{ur} \setminus \{r\}$: $\beta_{uvw} = 0$ and $\beta_{vuw} = A$.
    \item For $w \in P_{rv} \setminus \{r\}$: $\beta_{uvw} = A$ and $\beta_{vuw} = 0$.
    \item For $w = r$: $\beta_{uvr} =\beta_{vur} = B$.
\end{itemizeTwoCol}
For $u\in T_i$ and $v=r$, we define $\alpha_{uv}$ and $\beta_{uvw}$ as follows
\begin{itemizeTwoCol}
    \item $\alpha_{ur} = A$.
    \item For $w \in P_{ur} \setminus \{r\}$: $\beta_{urw} = 0$ and $\beta_{ruw} = A$.
    \item For $w = r$: $\beta_{urr} = \beta_{rur} = B$.
\end{itemizeTwoCol}
Finally, we set $\alpha_{rr} = \beta_{rrr} = B$.

We now show that the obtained solution $\{\alpha, \beta\}$ is a feasible solution for DUAL-LP.
Consider the first constraint: $\alpha_{uv}\leq \beta_{uvw}+\beta_{vuw}$. If $u\neq r$ or $v\neq r$, $A=\alpha_{uv} = \beta_{uvw} + \beta_{vuw}  = 2B$.
The second constraint is satisfied since $\alpha_{rr} =\beta_{rrr}$.

Let us verify now that the third constraint, $\sum_{v:w\in P_{uv}} \beta_{uvw} \leq 1$, is satisfied.
Consider a \textit{non-zero} variable $\beta_{uvw}$ appearing in the sum. Consider the iteration of the algorithm in which we assign $\beta_{uvw}$
a value. Let $r$ be the balanced separator vertex during this iteration.
Then, $r\in P_{uv}$ (otherwise, we would not assign any value to $\beta_{uvw}$) and
$w\in P_{rv}$. Therefore,  $r\in P_{uw}$; that is, the algorithm assigns the value to $\beta_{uvw}$ in the iteration when it splits $u$ and $w$
(the only iteration when $r\in P_{uw}$). In particular, we assign a value to all non-zero variables $\beta_{uvw}$ appearing in the constraint
in the same iteration of the algorithm. Let us consider this iteration.

If $u\in T_i\cup \{r\}$ and $w\in T_j$, then every $v$ satisfying $w\in P_{uv}$ lies in $T_j$. For every such $v$, we have $\beta_{uvw} = A$. Therefore,
$$\sum_{v:w\in P_{uv}} \beta_{uvw} \leq |T_j| \cdot A \leq \frac{n'}{2} \cdot \frac{2}{n'} = 1,$$
as required.
If $u\in T_i\cup\{r\}$ and $w=r$, then we have
$$\sum_{v: w\in P_{uv}} \beta_{uvw} = \sum_{v: r\in P_{uv}} \beta_{uvr} = \sum_{v: r \in P_{uv}} B \leq n' B   =  1,$$
as required.
We have showed that $\{\alpha,\beta\}$ is a feasible solution. Now we prove that its value is at least half of the value of the hub labeling found by the algorithm.
Since the value of any feasible solution of DUAL-LP is at most the cost of the optimal hub labeling, this will prove that the algorithm gives a 2-approximation.

We consider one iteration of the algorithm. In this iteration, we add $r$ to the hub set $H_u$ of every vertex $u\in T'$. Thus, we increase the cost of the hub labeling by $n'$. We are going to show that the dual variables that we set during this iteration contribute at least $n'/2$ to the value of DUAL-LP.

Let $k_i = |T_i| \leq  n' / 2$, for all $i \in \{1,\dots, t\}$. We have $\sum_i k_i  = n'-1$.
The contribution $C$ of the variables $\alpha_{uv}$ that we set during this iteration to the objective function equals
$$
C = \sum_{i<j} \sum_{u\in T_i, v\in T_j} \alpha_{uv} + \sum_{i} \sum_{u\in T_i} \alpha_{ur} + \alpha_{rr}
= A\sum_{i<j} k_i k_j + A (n'-1) + B
 = \frac{2}{n'} \sum_{i<j} k_i k_j + \frac{2n'-1}{n'}.
$$
Now, since $\sum_{j:j\neq i} k_j = (n'-1 - k_i)\geq (n'-2)/2$, we have
$$\frac{2}{n'} \sum_{i<j} k_i k_j =\frac{1}{n'} \sum_{i\neq j} k_i k_j = \frac{1}{n'}\sum_{i}k_i \left(\sum_{j:j\neq i} k_j\right) \geq
\frac{n'-2}{2n'} \sum_{i} k_i = \frac{(n'-1)(n'-2)}{2n'}.$$
Thus,
$$C \geq \frac{(4n'-2) + (n'^2 -3 n' + 2)}{2n'} = \frac{n'+1}{2}.$$
We proved that $C \geq n'/2$. This finishes the proof.
\end{proof}

Given the simplicity of the Tree Algorithm, it is interesting to understand whether the 2 approximation factor is tight or not. We do not have a matching
lower bound, but we show an asymptotic lower bound of $3/2$. The instances that give this $3/2$ lower bound are the complete binary trees.
We present the proof of the following lemma in Appendix~\ref{proof-of-tree-alg-lower-bound}.
\begin{lemma}\label{tree-alg-lower-bound}
The approximation factor of the Tree Algorithm is at least $3/2 - \varepsilon$, for every fixed $\varepsilon > 0$.
\end{lemma}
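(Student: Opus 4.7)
The plan is to use complete binary trees $T_d$ (depth $d$, $n=2^{d+1}-1$ vertices) as the hard instances. First I would compute the Tree Algorithm's cost on $T_d$ exactly. In $T_d$ the root is the unique balanced separator (removing any non-root vertex leaves a connected component of size greater than $n/2$), so the algorithm picks the root and recurses on the two depth-$(d-1)$ subtrees. This gives $\mathrm{ALG}(T_d) = (2^{d+1}-1) + 2\cdot \mathrm{ALG}(T_{d-1})$ with $\mathrm{ALG}(T_0)=1$; equivalently, every depth-$k$ vertex lies in exactly $k+1$ hub sets (its own plus those of its ancestors), yielding the closed form $\mathrm{ALG}(T_d) = \sum_{k=0}^{d}(k+1)2^k = d\cdot 2^{d+1} + 1$.

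Second, I would construct a cheaper hierarchical hub labeling, exploiting Theorem~\ref{HHL_optimal_for_trees} which guarantees that such a construction provides a valid upper bound on $\mathrm{OPT}$. The key observation is that picking the root first is wasteful: after removing it, the two resulting subtrees are each still depth-$(d-1)$ complete binary trees, essentially reproducing the original problem. Instead, pick a depth-$1$ vertex $v$ at the very top of the hierarchy. Then $T_d - v$ decomposes into two depth-$(d-2)$ subtrees (rooted at $v$'s children) and a ``bent'' complement $B_d$ consisting of $r$ together with $v$'s sibling's depth-$(d-1)$ subtree. Recurse inside each piece with the analogous rule; inside $B_d$, pick the root of the embedded $T_{d-1}$ next, which decomposes $B_d$ into the singleton $\{r\}$ and two more copies of $T_{d-2}$. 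Writing $F(d)$ for the cost of this HHL, the strategy yields a recurrence of the form $F(d) = 3\cdot 2^d + 4 F(d-2)$; refining the strategy further (picking deeper ``chain'' vertices in ever longer bent subtrees consisting of $T_{d-m}$ with a length-$m$ chain on top) unfolds a more intricate system of linear recurrences whose asymptotic solution gives $F(d) \le (2/3 + o(1))\,\mathrm{ALG}(T_d)$.

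Combining these two bounds, $\mathrm{ALG}(T_d)/\mathrm{OPT}(T_d) \ge \mathrm{ALG}(T_d)/F(d) \ge 3/2 - o(1)$ as $d\to\infty$. Hence for every fixed $\varepsilon>0$, choosing $d$ large enough produces a complete binary tree on which the Tree Algorithm's cost is at least $(3/2-\varepsilon)$ times optimal, as claimed.

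The hard part is the second step: the most natural pick-depth-$1$ recursion only gives $F(d) \sim (3/2)\, d \cdot 2^d$ and hence ratio $\to 4/3$. Pushing the ratio all the way to $3/2$ requires the refined picking rule inside the bent subtrees and a careful analysis of the resulting system of recurrences (by unfolding, or via generating functions, tracking the asymptotic contribution of each ``bent tree'' structure separately), which is the main technical content of the proof.
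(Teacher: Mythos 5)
You have correctly identified the paper's instance family (complete binary trees), the exact cost of the Tree Algorithm on them ($\mathrm{ALG}(T_d) = d\cdot 2^{d+1} + 1$), and the key idea behind the competing construction: pick a depth-1 vertex rather than the root, so that two of the three resulting pieces drop two levels at once. You also correctly observe that the naive version of this idea (always peel off the root of the next embedded complete binary tree) only achieves a $4/3$ gap via $F(d) = 3\cdot 2^d + 4F(d-2)$, and that closing the gap to $3/2$ requires letting a ``tail'' or ``chain'' accumulate on top of the shrinking complete binary tree. All of this matches the paper's proof strategy.

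The genuine gap is that the refined recurrence is never set up or solved --- you explicitly label it ``the main technical content'' and stop there. The paper closes this by parameterizing subproblems as pairs $(h,t)$, a complete binary tree of height $h$ with a tail of length $t$ attached to its root, always picking the \emph{left child} of the tree root. This yields a single clean recurrence $P(h,t) = (2^{h+1}-1) + t + 2P(h-2,0) + P(h-1,t+1)$, with the $t+1$ in the last term doing the chain-growing automatically (not a ``system'' of recurrences, and not literally picking ever-deeper vertices --- it is the same depth-1 pick at each step, applied to a parameterized subproblem). The base cases are handled via the exact path cost $\Path(\cdot)$, and the bound $P(h,0) \le (1+o_h(1))\,\tfrac{4}{3} h 2^h$ is proved by an explicit induction with an error term of the form $g(h)\,h\,2^h + f(h+t) + ht$ where $g(h) = C/\sqrt{h}$. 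Without carrying out this inductive argument (or an equivalent one), the $4/3$ constant --- and hence the $3/2$ lower bound --- is asserted rather than established, so the proof as written is incomplete.
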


\begin{remark}
The Tree Algorithm does not find a good approximation for the $\ell_p$-cost, when $p$ is large. Let $k>1$ be an integer.
Consider a tree $T$ defined as follows:
it consists of a path $a_1,\dots, a_k$ and leaf vertices connected to the vertices of the path; vertex $a_i$ is connected to
$2^{k-i}-1$ leaves. The tree consists of $n=2^k-1$ vertices. It is easy to see that the Tree Algorithm will first choose vertex $a_1$,
then it will process the subtree of $T$ that contains $a_k$ and will choose $a_2$, then $a_3$ and so on. Consequently,
the hub set $H_{a_k}$ equals $\{a_1,\dots, a_k\}$ in the obtained hub labeling. The $\ell_p$-cost of this hub labeling is greater than $k$.
However, there is a hub labeling $\widetilde H$ for the path $a_1,\dots, a_k$ with $|\widetilde H_{a_i}| \leq O(\log k)$, for all $i \in [k]$.
This hub labeling can be extended to a hub labeling of $T$, by letting $\widetilde H_{l} = \widetilde H_{a_i} \cup \{l\}$ for each leaf $l$ adjacent to a vertex $a_i$.
Then we still have $|\widetilde H_u| \leq O(\log k)$, for any vertex $u \in T$.
 The $\ell_p$-cost of this solution
is $O(n^{1/p} \log k)$. Thus, for $k=p$, the gap between the solution $H$ and the optimal solution is at least $\Omega(p/\log p)$.
For $p=\infty$, the gap is at least $\Omega(\log n/\log\log n)$, asymptotically.
\end{remark}


\subsection{A PTAS for HL on trees}\label{ptas_trees}

We now present a polynomial-time approximation scheme (PTAS) for HL on trees. For simplicitly, we only present the algorithm for \HL1, based on dynamic programming (DP). A modified DP approach along similar lines can also be used in the case of \HLp, for any $p \in [1, \infty]$. More details and formal proofs can be found in Appendix~\ref{appendix_ptas_lp} (for \HLp, for every $p \in [1, +\infty)$), and Appendix~\ref{appendix_ptas_infty} (for \HL\infty).

Theorem~\ref{HHL_optimal_for_trees} shows that we can restrict our attention only to hierarchical hub labelings.
That is, we can find an optimal solution by choosing an appropriate vertex $r$, adding $r$ to every $H_u$, and then recursively solving HL on each connected component of $T-r$ (see Section~\ref{hhl-section}). Of course, we do not know what vertex $r$ we should choose, so to implement this approach, we use dynamic programming (DP).
Let us first consider a very basic dynamic programming solution. We store a table $B$ with an entry $B[T']$ for every subtree $T'$ of $T$.
Entry $B[T']$ equals the cost of the optimal hub labeling for tree $T'$. Now if $r$ is the common  hub of all vertices in $T'$, we have
$$B[T'] = |T'|+ \sum_{T'' \text{ is a connected component of } T'-r} B[T'']$$
(the term $|T'|$ captures the cost of adding $r$ to each $H_u$). We obtain the following recurrence formula for the DP:
\begin{equation}\label{eq:DP-main}
B[T'] = |T'|+ \min_{r \in T'} \sum_{T'' \text{ is a connected component of } T'-r} B[T''].
\end{equation}
The problem with this approach, however, is that a tree may have exponentially many subtrees, which means that the size of the dynamic program and the running time
may be exponential.

To work around this, we will instead store $B[T']$ only for some subtrees $T'$, specifically for subtrees with a ``small boundary''. For each subtree $T'$ of $T$, we define its boundary $\partial(T')$ as $\partial(T'): = \{v \notin T': \exists u \in T' \textrm{ with } (u,v) \in E\}$.
Consider now a subtree $T'$ of $T$ and its boundary $S= \partial(T')$.
Observe that if $|S| \geq 2$, then the set $S$ uniquely identifies the subtree $T'$: $T'$ is the unique connected component of $T-S$ that has all vertices from $S$ on its boundary
(every other connected component of  $T-S$ has only one vertex from $S$ on its boundary). If $|S|=1$, that is, $S = \{u\}$ for some $u \in V$,
then it is easy to see that $u$ can serve as a boundary point for $\deg(u)$ different subtrees.

Fix $\varepsilon < 1$. Let $k = 4 \cdot \lceil 1 / \varepsilon \rceil$.
In our dynamic program, we only consider subtrees $T'$ with $|\partial(T')| \leq k$.
Then, the total number of entries is upper bounded by
$\sum_{i = 2}^k \binom{n}{i} + \sum_{u\in V} \deg(u) = O(n^k)$. 
Note that now we cannot simply use formula~(\ref{eq:DP-main}).
In fact, if $|\partial(T')| < k$, formula~(\ref{eq:DP-main}) is well defined since each connected
component $T''$ of $T' -r$ has boundary of size at most $|\partial(T')| + 1 \leq k$ for any choice of $r$ (since $\partial(T'') \subseteq \partial(T') \cup \{r\}$).
However, if $|\partial(T')| =k$, it is possible that $|\partial(T'')| = k+1$, and formula~(\ref{eq:DP-main}) cannot be used.
Accordingly, there is no clear way to find the optimal vertex $r$. Instead, we choose a vertex $r_0$ such that
for every connected component $T''$ of $T'-r_0$, we have $|\partial(T'')|\leq k/2+1$. To prove that such a vertex exists,
we consider the tree $T'$ with vertex weights $w(u) = \left|\{v\in \partial(T'): (u,v)\in E\} \right|$ and find a balanced separator vertex
$r_0$ for $T'$ w.r.t. weights $w(u)$
(see Definition~\ref{def:bal-sep-vertex}). Then, the weight $w$ of every connected component $T''$ of $T'-r_0$
is at most $k/2$. Thus, $|\partial(T'')| \leq k/2+1 < 3k/4 <k$ (we add 1 because $r_0\in \partial(T'')$).

The above description implies that the only cases where our algorithm does not perform ``optimally" are the subproblems $T'$ with $|\partial(T')| = k$. It is also clear that these subproblems cannot occur too often, and more precisely, we can have at most 1 every $k / 2$ steps during the recursive decomposition into subproblems. Thus, we will distribute the cost (amortization) of each such non-optimal step that the algorithm makes over the previous $k/4$ steps before it occurs, whenever it occurs, and then show that all subproblems with boundary of size at most $3k/4$ are solved ``almost" optimally (more precisely, the solution to such a subproblem is $(1 + 4 / k)$-approximately optimal). This implies that the final solution will also be $(1 + 4/k)$-approximately optimal, since its boundary size is 0.

We now describe our algorithm in more detail. We keep two tables $B[T']$ and $C[T']$.
We will define their values so that we can find, using dynamic programming, a hub labeling for $T'$ of cost at most $B[T'] + C[T']$.
Informally, the table $C$ can be viewed as some extra budget that we use in order to pay for all the recursive steps with $|\partial(T')| = k$.
We let $C[T'] = \max \left\{0, \left(|\partial(T')| - 3k/4 \right) \cdot 4|T'| / k \right\}$ for every $T'$ with $|\partial(T')| \leq k$. We define $B$ (for $|T'|\geq 3$) by the following recurrence (where $r_0$ is a balanced separator):
\begin{align*}
B[T'] &= (1 + 4 / k) \cdot|T'|+ \min_{r \in T'} \sum_{T'' \text{ is a connected component of } T'-r} B[T''] \;\;, &&\text{ if } |\partial(T')| < k,\\
B[T'] &= \sum_{T'' \text{ is a connected component of } T'-r_0} B[T''] \;\;, &&\text{ if } |\partial(T')| = k.
\end{align*}
The base cases of our recursive formulas are when the subtree $T'$ is of size 1 or 2. In this case, we simply set $B[T'] = 1$, if $|T'| = 1$, and $B[T'] = 3$, if $|T'| = 2$.

In order to fill in the table, we generate all possible subsets of size at most $k$ that are candidate boundary sets, and for each such set we look at the resulting subtree, if any, and compute its size. We process subtrees in increasing order of size, which can be easily done if the generated subtrees are kept in buckets according to their size. Overall, the running time will be $n^{O(k)}$.

We will now show that the algorithm has approximation factor $(1 + 4/k)$ for any $k = 4t$, $t \geq 1$. That is, it is a polynomial-time approximation scheme (PTAS).

\begin{theorem}
The algorithm is a polynomial-time approximation scheme (PTAS) for \HL1 on trees.
\end{theorem}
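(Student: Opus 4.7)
My plan has three parts: feasibility, running time, and the approximation ratio.

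\textit{Feasibility and runtime.} By construction the DP produces a hierarchical hub labeling: at each subtree $T'$ it designates a root (the DP-optimal one in Case A, or the balanced separator $r_0$ in Case B), adds it to every hub set of $T'$, and recurses on the components of $T'-r$, as in Section~\ref{hhl-section}. A subtree $T'$ with $|\partial(T')|\ge 2$ is uniquely determined by its boundary (of size at most $k=4\lceil 1/\varepsilon\rceil$), so the DP has $O(n^k)$ entries, each computable in polynomial time; the total running time is $n^{O(1/\varepsilon)}$.

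\textit{Approximation.} Let $R$ denote the DP's recursion tree, write $A(R)$ and $B(R)$ for its Case A and Case B nodes, and let $T'_v$ be the subtree at $v\in R$. Since the algorithm adds exactly one new hub (the root $r_v$) to every vertex of $T'_v$, we have $\mathrm{alg}(T)=\sum_{v\in R}|T'_v|$, and unrolling the definition of $B$ gives $B[T]=(1+4/k)\sum_{v\in A(R)}|T'_v|$. I would prove the two inequalities
\[
\mathrm{alg}(T)\le B[T]\quad\text{and}\quad B[T]\le (1+4/k)\,OPT,
\]
whose combination delivers $\mathrm{alg}(T)\le (1+\varepsilon)\,OPT$ for $k=4\lceil 1/\varepsilon\rceil$.

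For the first inequality I would exploit that along any root-to-leaf path in $R$ the boundary grows by at most one per Case A step and drops to at most $k/2+1$ after a Case B step; hence every Case B node has at least $k/4$ Case A ancestors that are not shared with any other Case B descendant, and each such ancestor $w$ has $|T'_w|\ge |T'_v|$. Charging each Case B node $v$ its cost $|T'_v|$ to its $k/4$ nearest Case A ancestors yields $\sum_{v\in B(R)}|T'_v|\le (4/k)\sum_{v\in A(R)}|T'_v|$, which combined with the identity for $B[T]$ gives $\mathrm{alg}(T)\le B[T]$.

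For the second inequality I would prove by induction on $|T'|$ the invariant $B[T']+C[T']\le (1+4/k)\,OPT_{\mathrm{HHL}}(T')$ for every DP subtree $T'$, where $OPT_{\mathrm{HHL}}(T')$ is the optimal HHL cost of $T'$ as a standalone instance. Since $|\partial(T)|=0$ makes $C[T]=0$ and Theorem~\ref{HHL_optimal_for_trees} gives $OPT_{\mathrm{HHL}}(T)=OPT$, this yields the desired bound at the top level. In Case A, I substitute an HHL-optimal root $r^*$ for $T'$ into the DP recursion and apply the induction hypothesis to each component $T''(r^*)$ of $T'-r^*$, reducing the goal to the algebraic inequality $C[T']-\sum_{T''(r^*)} C[T''(r^*)]\le (4/k)|T'|$, which one verifies by expanding the definition of $C$ and using $|\partial(T''(r^*))|\le |\partial(T')|+1$ together with $\sum|T''(r^*)|=|T'|-1$. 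In Case B ($|\partial(T')|=k$), the balanced-separator property ensures $|\partial(T''(r_0))|\le k/2+1\le 3k/4$ for $k\ge 4$, so $C[T''(r_0)]=0$; the credit $C[T']=|T'|$ accumulated at a Case B subtree is then precisely what is needed to absorb the $|T'|$ cost of adding $r_0$ to all vertices of $T'$ and to compensate for the (possibly suboptimal) root choice.

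The main obstacle is the Case B step of the inductive bound: the DP is forced to use $r_0$ rather than the HHL-optimal root $r^*$, and the inequality $\sum OPT_{\mathrm{HHL}}(T''(r_0))\ge OPT_{\mathrm{HHL}}(T')-|T'|$ points the wrong way for a direct induction. The resolution is to use that $C$ is defined so that the credit built up over the $k/4$ Case A ancestors leading into the Case B subtree is exactly $|T'|$, which cancels the additive $|T'|$ cost of the Case B step; verifying that the invariant is preserved here---and at small-subtree base cases where the $(4/k)|T'|$ Case A slack is tightest---is the most delicate part of the analysis.
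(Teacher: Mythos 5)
Your feasibility and running-time arguments are fine, and your charging argument for $\mathrm{alg}(T)\le B[T]$ is a genuinely different and valid route: the paper instead proves $\mathrm{alg}(T')\le B[T']+C[T']$ by induction on $|T'|$, while you get the global bound directly. The charging does work once you add the missing observation that two Case B nodes $v_1,v_2$ charging the same Case A ancestor $w$ cannot be comparable in the recursion tree (consecutive Case B nodes on a root-to-leaf path are separated by at least $k/2-1\ge k/4$ Case A nodes), so the subtrees $T'_{v_1},T'_{v_2}\subseteq T'_w$ are disjoint and each ancestor $w$ absorbs total charge at most $(4/k)|T'_w|$.

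The second inequality is where the proposal breaks. The proposed invariant $B[T']+C[T']\le(1+4/k)\,OPT_{\mathrm{HHL}}(T')$ is false, and the credit argument in your last paragraph cannot repair it, because in a per-subtree induction there is nowhere for ``credit from ancestors'' to live. Unrolling Case A precisely, the $(1+4/k)|T'|$ term in $B[T']$ is consumed exactly by the $(1+4/k)|T'|$ term in $(1+4/k)\,OPT(T')$, so there is no $(4/k)|T'|$ slack at all: the step requires $C[T']\le\sum_{T''(r^*)}C[T''(r^*)]$. This fails already for a star with center $c$ and $n-1$ leaves where exactly $k-1$ leaves touch the boundary: then $C[T']=(1-4/k)|T'|>0$, but $r^*=c$ yields singleton components with boundary size $\le 2$ and hence $C[T'']=0$ for all of them. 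Case B fails too: there the invariant needs $\sum_{T''(r_0)}OPT(T'')\le OPT(T')-|T'|/(1+4/k)$, but if $T'$ is a path $p_1,\dots,p_m$ whose $k$ boundary vertices are concentrated at $p_1$ (forcing the \emph{weighted} balanced separator to be $r_0=p_1$), the left-hand side is $OPT(P_{m-1})$ and the gap $OPT(P_m)-OPT(P_{m-1})=\Theta(\log m)\ll m/(1+4/k)$. The fix is simply to drop $C$ from this invariant: the clean statement $B[T']\le(1+4/k)\,OPT(T')$ goes through by induction exactly as in the paper --- Case A matches the $(1+4/k)|T'|$ terms on both sides and applies the hypothesis to the components of $T'-r^*$, and Case B uses superadditivity $\sum_{T''(r_0)}OPT(T'')\le OPT(T')$. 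Combined with your charging bound $\mathrm{alg}(T)\le B[T]$, that completes the proof; the $C$-table belongs only in the paper's per-subtree bound on $\mathrm{alg}$, which your charging argument replaces, so you should not import $C$ into the $OPT$-side induction.
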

\begin{proof}
We first argue about the approximation guarantee. The argument consists of an induction that incorporates the amortized analysis that was described above. More specifically, we will show that for any subtree $T'$, with $|\partial(T')| \leq k$, the total cost of the algorithm's solution is at most $B[T'] + C[T']$, and $B[T'] \leq \left(1 + \frac{4}{k} \right) \cdot OPT_{T'}$. Then, the total cost of the solution to the original HL instance is at most $B[T] + C[T]$, and, since $C[T] = 0$, we get that the cost is at most $(1 + 4/k) \cdot OPT$.

The induction is on the size of the subtree $T'$. For $|T'| = 1$ or $|T'| = 2$, the hypothesis holds. Let's assume now that it holds for all trees of size at most $t \geq 2$. We will argue that it then holds for trees $T'$ of size $t + 1$. We distinguish between the cases where $|\partial(T')| < k$ and $|\partial(T')| = k$:

\medskip
\noindent\textbf{Case ${|\partial(T')| < k}$:} Let $u_0 \in T'$ be the vertex that the algorithm picks and removes. The vertex $u_0$ is the minimizer of the expression $\min_{r' \in T'} \sum_{T'' \text{ is a connected component of } T'-r'} B[T'']$, and thus, using the induction hypothesis, we get that the total cost of the solution returned by the algorithm is at most:

\begin{equation*}
\begin{split}
    ALG(T') &\leq |T'| + \sum_{T'' \textrm{ is comp. of }T' - u_0} \Big( B[T''] + C[T''] \Big) \\
            &\leq |T'| + \sum_{T'' \textrm{ is comp. of }T' - u_0} B[T''] + \sum_{T'' \textrm{ is comp. of }T' - u_0} \max \left\{0, (|\partial(T')| + 1 - 3k/4) \cdot 4|T''|/k \right\} \\
            &= |T'| + \sum_{T'' \textrm{ is comp. of }T' - u_0} B[T''] + \max\{0, |\partial(T')| + 1 - 3k/4\} \cdot (4/k) \cdot \sum_{T'' \textrm{ is comp. of }T' - u_0}|T''| \\
            &\leq |T'| + \sum_{T'' \textrm{ is comp. of }T' - u_0} B[T''] + 4|T'|/k + \max\{0, |\partial(T')| - 3k/4 \} \cdot 4|T'| / k\\
            &\leq (1 + 4/k) \cdot |T'| + \left(\sum_{T'' \textrm{ is comp. of }T' - u_0} B[T''] \right) + C[T'] \\
            &= B[T'] + C[T'].
\end{split}
\end{equation*}

We proved the first part. We now have to show that $B[T'] \leq (1 + 4/k) OPT_{T'}$. Consider an optimal HL for $T'$. We may assume that it is a hierarchical labeling. Let $r \in T'$ be the vertex with the highest rank in this optimal solution.
We have, $OPT_{T'} = |T'| + \sum_{T'' \textrm{ is comp. of }T' - r} OPT_{T''}$. By definition, we have that
\begin{equation*}
\begin{split}
    B[T'] &= (1 + 4/k)|T'| + \min_{u \in T'} \sum_{T'' \textrm{ is comp. of }T' -u} B[T''] \\
          &\leq (1 + 4/k)|T'| + \sum_{T'' \textrm{ is comp. of }T' -r} B[T''] \\
          &\stackrel{(ind.hyp.)}{\leq} (1 + 4/k) |T'| + (1 + 4/k) \cdot \sum_{T'' \textrm{ is comp. of }T' -r} OPT_{T''} \\
          &= (1 + 4/k) \cdot OPT_{T'}.
\end{split}
\end{equation*}

\smallskip
\noindent\textbf{Case $|\partial(T')| = k$:}
 Using the induction hypothesis, we get that the total cost of the solution returned by the algorithm is at most:
\begin{equation*}
    ALG(T') \leq |T'| + \sum_{T'' \textrm{ is comp. of }T' -r_0} B[T''] + \sum_{T'' \textrm{ is comp. of }T' - r_0} C[T''].
\end{equation*}
By our choice of $r_0$, we have $|\partial(T'')| \leq 3k/4$, and so $C[T''] = 0$, for all trees $T''$ of the forest $T' - {r_0}$. Thus,
\begin{equation*}
\begin{split}
    ALG(T') &\leq |T'| + \sum_{T'' \textrm{ is comp. of }T' - {r_0}} B[T''] \\
            &= C[T'] + B[T'].
\end{split}
\end{equation*}
We now need to prove that $B[T'] \leq (1 + 4/k) \cdot OPT_{T'}$. We have,
\begin{equation*}
    B[T'] = \sum_{T'' \textrm{ is comp. of }T'- {r_0}} B [T'']
           \stackrel{(ind.hyp.)}{\leq} \sum_{T'' \textrm{ is comp of }T' -r_0} \left(1 + \frac{4}{k} \right) OPT_{T''}
           \leq \left(1 + \frac{4}{k} \right) OPT_{T'},
\end{equation*}
where in the the last inequality we use that $\sum_{T''} OPT_{T''}\leq OPT_{T'}$, which can be proved as follows.
We convert the optimal hub labeling $H'$ for $T'$ to a set of hub labelings for all subtrees $T''$ of $T'$:
the hub labeling $H''$ for $T''$ is the restriction of $H'$ to $T''$; namely, $H''_v = H'_v \cap V(T'')$ for every vertex $v\in T''$;
it is clear that the total number of hubs in labelings $H''$ for all subtrees $T''$ is at most the cost of $H'$.
Also, the cost of each hub labeling $H''$ is at least $OPT_{T''}$. The inequality follows.

We have considered both cases, $|S|<k$ and $|S|=k$, and thus shown that the hypothesis holds for any subtree $T'$ of $T$.
In particular, it holds for $T$. Therefore, the algorithm finds a solution of cost at most $B[T] + C[T] = B[T] \leq \left(1 + \frac{4}{k} \right) OPT$.

Setting $k = 4 \cdot \lceil 1 / \varepsilon \rceil$, as already mentioned, we get a $(1 + \varepsilon)$-approximation, for any fixed $\varepsilon \in (0,1)$, and the running time of the algorithm is $n^{O(1 / \varepsilon)}$.
\end{proof}


\subsection{A quasi-polynomial time algorithm for HL on trees}\label{quasi_sec}

The dynamic programming approach of the previous section can also be used to obtain a quasi-polynomial time algorithm for \HLp on trees. This is done by observing that the set of boundary vertices of a subtree is a subset of the hub set of every vertex in that subtree, and by proving that in an optimal solution for \HLp, all vertices have hub sets of size at most $O(\log n)$, for $p \in \{1, \infty\}$, and of size at most $O(\log^2 n)$ in the case of $p \in (1, \infty)$. Thus, restricting our DP to subinstances with polylogarithmic boundary size, we are able to get an exact quasi-polynomial time algorithm. Again, we only present the case for \HL1. More details for \HLp are given in Appendix~\ref{appendix_quasi_lp} (for \HLp, for every $p \in [1, \infty)$) and Appendix~\ref{appendix_quasi_infinity} (for \HL\infty).

In order to get an exact quasi-polynomial time algorithm for \HL1 on trees, we first prove that any optimal hierarchical hub labeling solution $H$ satisfies $\max_{u \in V} |H_u| = O(\log n)$.

\begin{theorem}\label{thm:quasi}
Let $H$ be an optimal (for \HL1) hierarchical hub labeling for a tree $T$. Then $\max_{u \in T} |H_u| = O(\log n)$.
\end{theorem}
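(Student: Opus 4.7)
The plan is to prove the bound by induction on $n$, combining the hierarchical structure of $H$ with the upper bound $\mathrm{OPT}\leq O(n\log n)$ provided by Peleg's Tree Algorithm (Theorem~\ref{peleg_theorem}). Let $r$ be the top-ranked vertex of $H$, so that $r\in H_v$ for every $v\in T$, and let $C_1,\ldots,C_k$ be the connected components of $T - r$. By the canonical recursive structure discussed in Section~\ref{hhl-section}, the restricted hub sets $\{H_v\cap C_j : v\in C_j\}$ form a hierarchical hub labeling of $C_j$; and by a standard substitution argument this restriction must itself be optimal for $C_j$, for otherwise one could swap in a cheaper labeling on $C_j$ and strictly reduce the cost of $H$, contradicting its optimality.

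Applying the inductive hypothesis to each $C_j$ yields $|H_v\cap C_j|\leq \kappa\log|C_j|$ for a suitable constant $\kappa$, so $|H_v|\leq 1+\kappa\log|C_j|$. If every component $C_j$ satisfies $|C_j|\leq n/2$, then $|H_v|\leq 1+\kappa\log(n/2)\leq \kappa\log n$ (for $\kappa\geq 1$) and the induction closes. The only remaining case is when some component $C^{\star}$ has $|C^{\star}|>n/2$; here I would argue that this case cannot occur in an optimal $H$, via an exchange argument. Specifically, swap $r$ with a suitably chosen vertex $r^{\star}$ inside $C^{\star}$ (for example, the neighbor of $r$ in $C^{\star}$, for which the ``$r$-side'' component of $C^{\star}-r^{\star}$ is empty), re-assemble the hierarchy around $r^{\star}$ using $r$ as the next top in the small residual component of size $n-|C^{\star}|<n/2$ and the optimal hub labelings for each remaining piece of $C^{\star}-r^{\star}$, and verify by direct cost bookkeeping that the resulting hierarchical hub labeling has strictly smaller $\ell_1$-cost than $H$, a contradiction.

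The main obstacle is the cost bookkeeping in the exchange step: removing $r^{\star}$ re-decomposes $C^{\star}$ in a way that does not match the decomposition used by $H$, so one must compare the cost of the new pieces with their cost under $H$. For the choice $r^{\star}$ equal to the neighbor of $r$ in $C^{\star}$, the top-level saving on the $r$-coverage alone is $2|C^{\star}|-n>0$, but this has to be weighed against the possible increase in the cost of decomposing $C^{\star}-r^{\star}$ when using $r^{\star}$ (rather than the $H$-chosen top of $C^{\star}$) as its top. Handling this rigorously requires appealing to the $O(m\log m)$ Peleg upper bound on the hub-labeling cost of any subtree of size $m$, and carefully showing that the top-level saving dominates any extra cost incurred by the re-decomposition; a clean resolution of this technical step is the main content of the proof, after which the induction closes and yields $\max_{u\in T}|H_u| = O(\log n)$.
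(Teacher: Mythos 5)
The overall plan (decompose at the top-ranked vertex, apply the inductive hypothesis to the components, and close the induction once every component has size at most a fixed fraction of $n$) is the right shape, and you are correct that any constant fraction $c<1$ would suffice. But the specific balance claim you want --- that in an optimal hierarchical hub labeling every component of $T-r$ has at most $n/2$ vertices --- is false. The complete binary tree of height~$2$ already refutes it: with root $p$, children $l,r$, and grandchildren $l_1,l_2,r_1,r_2$, choosing $p$ as the top gives $\ell_1$-cost $17$, while choosing $l$ as the top gives cost $16$, which one checks by enumeration to be the optimum. With $l$ on top, the component $\{p,r,r_1,r_2\}$ of $T-l$ has $4>7/2$ vertices. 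So no exchange argument can rule this case out, and the one you sketch actually \emph{increases} the cost here: swapping $l$ with its neighbor $p$ in $C^\star=\{p,r,r_1,r_2\}$ reproduces the balanced labeling of cost $17$. The bookkeeping ``$2|C^\star|-n$'' also implicitly assumes $r^\star$ was the top of $C^\star$ in $H$, which the neighbor of $r$ generally is not.

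The paper proves a weaker but sufficient balance property: $M_u\geq |T_u|/6$ (Claim~\ref{quasi-claim2}), so the largest component after removing the top has at most a $5/6$ fraction of the vertices. This still gives $O(\log n)$ depth, but establishing it requires substantially more machinery than a single swap with a neighbor. The argument first develops the ``move ahead'' Lemma~\ref{lem:swap} and the monotonicity $M_u\geq M_v$ for $u\prec v$ (Claim~\ref{quasi-claim1}), and then in Claim~\ref{quasi-claim2} moves a \emph{balanced separator} $c$ of $T_{u_1}$ ahead of $u_1$ while accounting for the savings on \emph{three} consecutive tops $u_1\prec u_2\prec u_3$ at once: the combined saving $(n'-n'/2)+(5n'/6-n'/2)+(4n'/6-n'/2)=n'$ just exceeds the at-most-$(n'-1)$ extra appearances of the promoted vertex $c$. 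A one-level (or even two-level) version of this accounting does not produce a contradiction, which is exactly why the binary tree slips through your argument; the three-level bookkeeping and the use of a global balanced separator rather than the neighbor of $r$ are the ideas you are missing.
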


We will need to prove a few auxiliary claims before we proceed with the proof of Theorem~\ref{thm:quasi}.
Consider a canonical hierarchical labeling for $T$. Recall that we may assume that $H$ is constructed as follows:
we choose a vertex $r$ in $T$ of highest rank and add it to each hub set $H_v$ (for $v$ in $T$), then recursively process each tree in the forest $T-r$.
For every vertex $u$, let $T_u$ be the subtree of $T$, which we process when we choose $r=u$.
(Note that $u$ is the vertex of highest rank in $T_u$.)

Let us consider the following ``move ahead" operation that transforms a hierarchical hub labeling $H$ to a hierarchical hub labeling $H'$.
The operation is defined by two elements $r_1 \prec r_2$ (i.e. $r_2 \in T_{r_1}$) as follows. Consider the process that constructs sets $H_u$,
described in the previous paragraph. Consider the recursive step when we process subtree $T_{r_1}$.
At this step, let us intervene in the process: choose vertex $r_2$ instead of choosing $r_1$. Then, when we recursively
process a subtree $T'$ of $T_{r_1} - r_2$, we choose the vertex of highest rank in $T'$ (as indicated by $H$). Clearly, we obtain a hierarchical hub labeling. Denote it by $H'$.

\begin{lemma}\label{lem:swap}
Suppose that we move $r_2$ ahead of $r_1$ ($r_1 \prec r_2$) and obtain labeling $H'$.
Let $\widetilde T$ be the connected component of $T_{r_1}- r_2$ that contains $r_1$.
The following statements hold.
\begin{enumerate}
\itemsep0em
\item If $u\notin T_{r_1}$, then $H'_u = H_u$.
\item If $u \in T_{r_1}$, then $H'_u \subset H_u \cup \{r_2\}$.
\item If $u\in T_{r_1} \setminus \widetilde{T}$, then $r_1\notin H'_u$.
\end{enumerate}
\end{lemma}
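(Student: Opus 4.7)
The plan is to track how the ``move ahead'' operation alters the partial order realizing the canonical labeling, and then invoke the identity $v\in H_u\iff v$ is the highest-ranked vertex on $P_{uv}$. Let $\preceq$ and $\preceq'$ denote the partial orders realizing $H$ and $H'$. Since the construction outside $T_{r_1}$ is untouched, $\preceq$ and $\preceq'$ agree on $V\setminus T_{r_1}$, and every boundary vertex $b$ of $T_{r_1}$ (a vertex outside $T_{r_1}$ adjacent to $T_{r_1}$) was processed strictly before $T_{r_1}$ in both constructions, hence outranks every vertex of $T_{r_1}$ in both orders. Inside $T_{r_1}$ the modified construction makes $r_2$ the $\preceq'$-top and then, on each connected component $D$ of $T_{r_1}-r_2$, runs the canonical construction using $\preceq|_D$. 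Since $D$ is connected and paths in a tree are unique, any path $P_{uv}$ with $u,v\in D$ stays inside $D$ (otherwise it would have to traverse $r_2$ twice); it follows that $\preceq'|_D=\preceq|_D$.

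For part 1, any $v\in T_{r_1}$ lying in $H_u$ or $H'_u$ (with $u\notin T_{r_1}$) would have to outrank a boundary vertex of $T_{r_1}$ on $P_{uv}$, which is impossible in either order, so $H_u\cap T_{r_1}=H'_u\cap T_{r_1}=\varnothing$. For $v\notin T_{r_1}$, all vertices of $P_{uv}$ outside $T_{r_1}$ have identical ranks in the two orders, and any vertex of $P_{uv}$ inside $T_{r_1}$ is dominated by a boundary vertex of $T_{r_1}$ on $P_{uv}$ in both orders. Hence the top-ranked vertex on $P_{uv}$ is the same in $\preceq$ and $\preceq'$, and $H_u=H'_u$. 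Part 3 is immediate from the setup: $u$ and $r_1$ lie in different components of $T_{r_1}-r_2$, so $r_2\in P_{ur_1}\subseteq T_{r_1}$, and $r_2\succ' r_1$ prevents $r_1$ from being the $\preceq'$-top on $P_{ur_1}$, so $r_1\notin H'_u$.

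For part 2, take any $v\in H'_u$ with $u\in T_{r_1}$. If $v\notin T_{r_1}$, the boundary argument above yields $v\in H_u$. If $v=r_2$ we are done. Otherwise $v\in T_{r_1}\setminus\{r_2\}$, and since $v$ is the $\preceq'$-top on $P_{uv}$ while $r_2$ is the $\preceq'$-top of $T_{r_1}$, we must have $r_2\notin P_{uv}$. Thus $P_{uv}$ stays in the component $D$ of $T_{r_1}-r_2$ containing $u$, and since $\preceq|_D=\preceq'|_D$, $v$ is also the $\preceq$-top on $P_{uv}$, so $v\in H_u$. In every case $v\in H_u\cup\{r_2\}$, which is exactly what is claimed.

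The crux, and the main obstacle I anticipate, is establishing the identity $\preceq'|_D=\preceq|_D$ cleanly; it relies on the observation that restricting the canonical construction to a subtree which is processed after a vertex removal only ``sees'' the $\preceq$-order of that subtree. Once this identity and the boundary-domination fact are in hand, each of the three items reduces to a one-line path-ranking argument.
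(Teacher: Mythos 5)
Your proof is correct, but it takes a genuinely heavier route than the paper's, and it is worth contrasting the two. You work globally via the characterization ``$v\in H_u$ iff $v$ is the $\pi$-top vertex of $P_{uv}$,'' and your crux is establishing the restriction identity $\preceq'|_D=\preceq|_D$ on each component $D$ of $T_{r_1}-r_2$. This is in fact a stronger structural statement than item 2 of the lemma (it gives $H'_u\cap D=H_u\cap D$ for $u\in D$, not just the containment $\subseteq$), and you are right that it is the delicate step: it requires the observation that the restriction of the canonical labeling $H$ to a connected subtree $D$ is itself the canonical labeling of $D$ under $\pi|_D$ (because paths between vertices of $D$ stay inside $D$), which is what the move-ahead recursion reproduces. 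The paper instead argues \emph{locally and operationally}: for item 2 it simply notes that when $v$ is added to $H'_u$ in the recursive construction, $v$ is the $\preceq$-minimum of the current subtree $T'\ni u$, hence $v\preceq u$, hence $v\in H_u$ by the very definition of $\preceq$ --- no restriction identity, no boundary-domination lemma. For item 1 the paper invokes only that the construction is unchanged outside $T_{r_1}$, and for item 3 that $u$ and $r_1$ are never again in a common subtree; your boundary/path-ranking versions of these are fine but overkill. The one thing to tighten in your write-up is the passage from ``runs the canonical construction using $\preceq|_D$'' to ``it follows that $\preceq'|_D=\preceq|_D$'': this needs the intermediate step that $\{H_u\cap D\}_{u\in D}$ is already canonical on $D$ under $\pi|_D$, as noted above. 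You should also briefly dispatch the $u=r_2$ corner case in items 2 and 3 (it is trivial since $H'_{r_2}=\{r_2\}\cup(H_{r_2}\setminus T_{r_1})$, but your ``different components of $T_{r_1}-r_2$'' phrasing does not literally apply to $u=r_2$). Neither is a real gap; both are easy to patch, and what your route buys in exchange for the extra machinery is the nicer structural picture that the move-ahead operation leaves $H$ entirely unchanged on each resulting component.
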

\begin{proof}
1. The ``move ahead" operation affects only the processing of the tree $T_{r_1}$ and its subtrees. Therefore, if $u\notin T_{r_1}$, then $H'_u = H_u$.

2. Consider a vertex $v\in H_u'\setminus \{r_2\}$; we will prove that $v\in H_u$. Consider the step when we add $v$ to $H_u'$. Denote the tree that we process at this step by $T'$. Then $u\in T'$ and $v$ is the highest ranked element in $T'$. In particular, $v\prec u$, and, by the definition of the ordering $\preceq$, $v\in H_u$.

3. If $u=r_2$, the statement is trivial. Otherwise, note that the subtree of $T_{r_1} - r_2$ that contains $u$ does not contain $r_1$ (because $r_1$ lies in the subtree $\widetilde T$),
and, thus, $u$ and $r_1$ never lie in the same subtree during later recursive calls.
Therefore, we do not add $r_1$ to $H'_u$.
\end{proof}


Consider a canonical hierarchical labeling $H$, a vertex $u$ and subtree $T_u$, as defined before. Let $T'$ be the largest connected component of $T_u - u$.
We define $M_u = |T_u| -|T'|$.

\begin{claim}\label{quasi-claim1}
Assume that $H$ is an optimal hierarchical hub labeling for a tree $T$.
Consider two vertices $u$ and $v$. If $u\prec v$, then  $M_u \geq M_v$.
\end{claim}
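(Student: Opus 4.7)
The plan is to argue by contradiction: assume $H$ is an optimal hierarchical hub labeling and that some pair $u \prec v$ satisfies $M_u < M_v$; I will construct, via the move-ahead operation, a hierarchical hub labeling $H'$ of strictly smaller $\ell_1$-cost, contradicting optimality. First, a chain reduction: let $u = w_0 \prec w_1 \prec \cdots \prec w_k = v$ be the chain from $u$ to $v$ in the hierarchy tree. Since $M_{w_0} < M_{w_k}$, some adjacent pair $(w_i, w_{i+1})$ satisfies $M_{w_i} < M_{w_{i+1}}$. Renaming, it suffices to treat the case where $v$ is a direct child of $u$ in the hierarchy tree, i.e.\ $T_v$ is a connected component of $T_u - u$.

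In this reduced setting, $T_v$ must be the unique largest component of $T_u - u$, so $|T_v| = L_u$, where $L_u$ denotes the size of the largest component. Otherwise, $T_u$ contains $u$, $T_v$, and another component of size at least $L_u$ as pairwise disjoint subsets, giving $|T_u| \geq 1 + |T_v| + L_u$, hence $M_u = |T_u| - L_u \geq 1 + |T_v| \geq 1 + M_v$, contradicting $M_u < M_v$.

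Next, apply the move-ahead operation to $(r_1, r_2) = (u, v)$, obtaining a hierarchical hub labeling $H'$. Let $\widetilde{T}$ be the component of $T_u - v$ containing $u$ (as in Lemma~\ref{lem:swap}) and let $U$ be the component of $T_v - v$ containing the neighbor of $u$ in $T_v$. I bound $\sum_w |H'_w| - \sum_w |H_w|$ by partitioning the vertices $w$ into regions and using Lemma~\ref{lem:swap} together with the characterization that in a hierarchical hub labeling $v' \in H_w$ if and only if $w \in T_{v'}$. For $w \notin T_u$ the hub set is unchanged. For $w = u$ the change is at most $+1$ (we may gain $v$); for $w = v$ it is at most $-1$ (we lose $u$, while $v \in H_v$ already). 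For $w$ in a component of $T_u - u$ other than $T_v$ (of which there are $|T_u| - 1 - |T_v|$), the change is at most $+1$. For $w \in U$ the change is at most $0$ (since $v \in H_w$ already, so Lemma~\ref{lem:swap} yields $H'_w \subseteq H_w$). For $w$ in a lower component of $T_v - v$ (of which there are $|T_v| - 1 - |U|$), the change is at most $-1$ (we lose $u$, and $v \in H_w$ already).

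Summing these contributions yields
\begin{equation*}
\sum_w |H'_w| - \sum_w |H_w| \;\leq\; 1 - 1 + (|T_u| - 1 - |T_v|) - (|T_v| - 1 - |U|) \;=\; |T_u| - 2|T_v| + |U|.
\end{equation*}
Using $|T_v| = L_u$, the hypothesis $M_u < M_v$ rewrites as $|T_u| - 2|T_v| < -L_v$; combined with $|U| \leq L_v$, the right-hand side is strictly negative, contradicting the optimality of $H$. The main obstacle is the careful regional bookkeeping in the third step: Lemma~\ref{lem:swap} provides only inclusion bounds on each $H'_w$, so one must correctly track, for each vertex $w$, whether its ancestor chain gains $v$, loses $u$, both, or neither, before tallying up the clean expression $|T_u| - 2|T_v| + |U|$ that drives the contradiction.
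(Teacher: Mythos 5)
Your proof is correct and follows essentially the same route as the paper's: reduce to the case where $v$ is immediately below $u$ in the hierarchy, observe that $T_v$ must be the largest component of $T_u - u$ (the paper phrases this as a case split between $v \notin T'$ and $v \in T'$), apply the move-ahead operation swapping $v$ ahead of $u$, and tally the per-vertex changes using Lemma~\ref{lem:swap} to arrive at the bound $|T_u| - 2|T_v| + |U|$. The paper states this directly (its $T''$ is your $\{v\}\cup$(lower components), its $T'-T''$ is your $U$, and its conclusion ``optimality implies $|T_u-T'|\ge|T''|$'' is equivalent to your contradiction), so the only difference is stylistic: you argue by contradiction where the paper argues directly.
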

\begin{proof}
Without loss of generality, we can assume that there is no element $w$ between $u$ and $v$ in the partial order $\preceq$
(the result for arbitrary  $u\prec v$ will follow by transitivity).
Let $T'$ be the largest connected component of $T_u - u$.
Consider two cases.

\smallskip
\noindent \textbf{Case $v\notin T'$.} Then $M_v \leq |T_v| < |T_u| - |T'|$, since $T_v$ is a proper subset of $T_u\setminus T'$.

\smallskip
 \noindent \textbf{Case $v\in T'$.} Then $T'= T_v$, and $v$ is the highest ranked vertex in $T'$.
 Let $T''$ be the union of $v$ and all subtrees of $T_u-v$ that do not contain $u$.
 Note that $T'-T''$ is a connected component of $T' - v = T_v - v$, and, thus, $M_v \leq |T'| - |T' - T''| \leq |T''|$. Also, note that $v\in H_w$ for $w\in T'$, since $v$ is the highest ranked vertex in $T'$.

 Let us move $v$ ahead of $u$, and obtain a hub labeling $H'$.
 We have, by Lemma~\ref{lem:swap},
 \begin{enumerate}
 \item If $w\notin T_u$, then $H_w' = H_w$; thus, $|H_w'| = |H_w|$.
 \item If $w \in T_u - T'$, then $H_w' \subset  H_w \cup \{v\}$; thus, $|H_w'| \leq |H_w|+1$.
 \item If $w \in T' - T''$, then $H_w' \subset  H_w \cup \{v\} = H_w$; thus, $|H_w'| \leq |H_w|$.
 \item If $w \in T''$, then $H_w' \subset H_w$ and $u\in H_w\setminus H_w'$; thus, $|H_w'| \leq |H_w|-1$.
 \end{enumerate}
Since $H$ is an optimal labeling
 $$\sum_{w} |H_w|\leq \sum_{w} |H_w'|.$$
 Therefore, $|T_u - T'| \geq |T''|$. Recall that $M_u =|T_u - T'|$ and $|T''| \geq M_v$. We conclude that $M_u \geq M_v$.
 \end{proof}

\begin{claim}\label{quasi-claim2}
Assume that $H$ is an optimal hierarchical hub labeling for a tree $T$.
For every vertex $u$, we have $M_u \geq |T_u|/6$.
\end{claim}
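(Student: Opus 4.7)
The plan is to argue by contradiction: assume $M_u < |T_u|/6$, and exploit the move-ahead operation from Claim~\ref{quasi-claim1} in two complementary ways to derive an inconsistency. Let $n_0 = |T_u|$, let $T_1$ be the largest component of $T_u - u$, $v = u_1$ the highest-ranked vertex of $T_1$, and $r_1$ the (unique) neighbor of $u$ in $T_1$. The hypothesis gives $|T_1| = n_0 - M_u > 5n_0/6$.

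First I would replay the proof of Claim~\ref{quasi-claim1} with $v = u_1$: moving $v$ ahead of $u$ and using optimality yields $M_u \geq |T''|$, where $|T''| = |T_1| - |C_0|$ and $C_0$ is the component of $T_1 - v$ containing $r_1$. The assumption $M_u < n_0/6$ forces
\[
|C_0| \;>\; |T_1| - n_0/6 \;>\; 5n_0/6 - n_0/6 \;=\; 2n_0/3.
\]
Since no two components of $T_1 - v$ can each exceed $|T_1|/2$ (their sizes sum to $|T_1| - 1 < n_0$), the inequality $|C_0| > 2n_0/3 > |T_1|/2$ forces $C_0$ to be the \emph{largest} component of $T_1 - v$, so $T_{u_2} = C_0$ and in particular $r_1 \in T_{u_2}$. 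Second, a move-ahead of $r_1$ itself in front of $u$ gives $|T_{r_1}| \leq |\widetilde T|$, where $\widetilde T$ is the component of $T_u - r_1$ containing $u$. But $\widetilde T = T_u \setminus T_1$, so $|\widetilde T| = M_u < n_0/6$, and hence $|T_{r_1}| < n_0/6$.

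Next I would iterate this along the chain $u_0 \prec u_1 \prec u_2 \prec \dots$, where $u_{i+1}$ is the highest-ranked vertex of the largest component $T_{u_{i+1}}$ of $T_{u_i} - u_i$. By Claim~\ref{quasi-claim1}, $M_{u_i} \leq M_u < n_0/6$ for every $i$, so the chain recurrence $|T_{u_{i+1}}| = |T_{u_i}| - M_{u_i} \geq |T_{u_i}| - M_u$ gives $|T_{u_i}| > n_0(1 - i/6)$. At each level $i$, rerunning the move-ahead of $u_{i+1}$ in front of $u_i$ (with $r_{i+1}$ the neighbor of $u_i$ in $T_{u_{i+1}}$) and applying the same counting argument forces the component of $T_{u_{i+1}} - u_{i+1}$ containing $r_{i+1}$ to be the largest, provided $|T_{u_{i+1}}|$ is still above the threshold $n_0/3$. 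One uses the geometry of the tree to track that $r_1$ remains inside $T_{u_i}$ along the chain. Eventually $r_1$ is chosen as some $u_j$, giving $|T_{u_j}| = |T_{r_1}| < n_0/6$; combined with $|T_{u_j}| > n_0(1 - j/6)$ this forces $j \geq 6$, contradicting the range of levels over which the counting argument remains valid.

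The main obstacle will be the careful geometric tracking in the iteration. At level $i$, the component forced to be largest by the counting argument is the one containing the \emph{local} neighbor $r_{i+1}$, not necessarily the one containing $r_1$, and one must argue, using how $r_1, r_2, \dots$ sit in the tree and in particular within the shrinking subtrees $T_{u_i}$, that $r_1$ is preserved inside each $T_{u_i}$ until it is itself chosen. Making this step rigorous—and reconciling the chain bound $|T_{u_j}| > n_0(1-j/6)$ with the upper bound $|T_{r_1}| < n_0/6$ at the precise level where Case~A is still forced—is where the constant $1/6$ enters and where the hardest book-keeping lies.
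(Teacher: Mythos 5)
Your opening two observations are correct, and both are consequences of the same move-ahead cost accounting as in the paper's Lemma~\ref{lem:swap}. Replaying the proof of Claim~\ref{quasi-claim1} with $v=u_1$ does show that the component $C_0$ of $T_{u_1}-u_1$ containing $r_1$ satisfies $|C_0|>|T_{u_1}|-M_u>2n_0/3$, hence is the largest and equals $T_{u_2}$, so $r_1\in T_{u_2}$. And moving $r_1$ ahead of $u$ does give $|T_{r_1}|\leq M_u<n_0/6$, because $\widetilde{T}=T_u\setminus T_{u_1}$ has exactly $M_u$ vertices and $u$ is removed from $n_0-M_u$ hub sets while $r_1$ is newly added to at most $n_0-|T_{r_1}|$. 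Neither of these facts appears in the paper's proof, and both are correct.

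The problem is the iteration, and you flag it yourself as ``the main obstacle,'' but it is not merely hard book-keeping: the argument as designed does not control $r_1$ at all beyond level $0$. The move-ahead of $u_{i+1}$ in front of $u_i$ forces the component of $T_{u_{i+1}}-u_{i+1}$ containing $r_{i+1}$ (the neighbor of $u_i$ inside $T_{u_{i+1}}$) to be the largest. But $r_{i+1}$ is adjacent to $u_i$, not to $u_0$, and for $i\geq 1$ it is a different vertex from $r_1$. So at level $1$ you learn $r_2\in T_{u_3}$, at level $2$ you learn $r_3\in T_{u_4}$, etc.; you never learn $r_1\in T_{u_3}$. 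There is no move-ahead in your toolkit that forces the component containing $r_1$ to be the largest one of $T_{u_i}-u_i$ for $i\geq 1$; moving $r_1$ ahead of $u_i$ gives $|T_{r_1}|\leq |\widetilde{T}_i|$ where $\widetilde{T}_i$ is the component of $T_{u_i}-r_1$ containing $u_i$, but $|\widetilde{T}_i|$ can be as large as $|T_{u_i}|-1$ (e.g.\ if $r_1$ is a leaf of $T_{u_i}$), so this is vacuous. Even granting yourself the tracking for free, the numbers do not close: the bound $|T_{u_j}|>n_0(1-j/6)$ only conflicts with $|T_{r_1}|<n_0/6$ when $j\leq 5$, while your counting argument (requiring $|T_{u_{i+1}}|>n_0/3$) is only valid for $i\leq 3$; there is no level at which the two constraints collide.

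The paper's proof avoids all of this by making a single, different move: take a balanced separator $c$ of $T_{u}$ and move $c$ ahead of $u$ (not $u_{i+1}$ ahead of $u_i$, and no iteration). After the swap, each of $u=u_0$, $u_1$, $u_2$ falls into a component of $T_u-c$ of size at most $n_0/2$, so each can appear as a hub in at most $n_0/2$ sets; before the swap they appeared in $n_0$, $>\tfrac56 n_0$, and $>\tfrac46 n_0$ sets respectively. The savings are therefore strictly more than $\tfrac{n_0}{2}+\tfrac{n_0}{3}+\tfrac{n_0}{6}=n_0$, while adding $c$ costs at most $n_0-1$, contradicting optimality. That is where the constant $6$ comes from: it is tuned so that the three savings $\tfrac12+\tfrac13+\tfrac16$ sum to exactly $1$. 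Your two preliminary observations are genuinely interesting side facts, but they are not the right handles for this claim; to repair your argument you would need a fundamentally new mechanism, and the balanced-separator move is the clean one.
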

\begin{proof}
Assume to the contrary that there is a vertex $u_1$ such that $M_{u_1} < |T_{u_1}|/6$. Denote $n' = |T_{u_1}|$.
Let $T'$ be the largest connected component of $T_{u_1} -u_1$,
and $u_2$ be the vertex of highest rank in $T'$. Let $T''$ be the largest connected component of $T' - u_2$, and $u_3$ be the vertex of highest rank in $T''$. Denote $M_1 = M_{u_1}$ and $M_2 = M_{u_2}$.
We have $u_1\preceq u_2$, and, therefore, $n'/6 > M_1\geq M_2$.
By the definition of $T'$, $T''$ and numbers $M_1$, $M_2$, we have
$$|T'| = |T_{u_1}| - M_1 > \frac{5}{6} n';\qquad  |T''| = |T'| - M_2 > \frac{4}{6} n'.$$
Note that $u_1$ belongs to $n'$ hub sets $H_w$ (namely, it belongs to $H_w$ for $w\in T_{u_1}$),
$u_2$ belongs to more than $\frac{5}{6}n'$ hub sets $H_w$, and $u_3$ belongs to more than $\frac{4}{6}n'$ hub sets $H_w$.

Let $c$ be balanced separator vertex in $T_{u_1}$. Clearly, in $H$ we have $u_1 \preceq c$. Consider the hub labeling $H'$ obtained by moving $c$ ahead of $u_1$. By Lemma~\ref{lem:swap}, $H'_w \subset H_w\cup \{c\}$ for $w\in T_{u_1}$, and $H'_w = H_w$ for $w\notin T_{u_1}$.
Since every connected component of $T_{u_1} -c $ contains at most $n'/2$ vertices, each of the vertices $u_1$, $u_2$, and $u_3$
will lie in a connected component of size at most $n'/2$. This means that $u_1$ belongs to at most $n'/2$ hub sets $H_w'$, and similarly $u_2$ and $u_3$ belong to at most $n'/2$ hub sets (each) in $H'$. We get that
$$\sum_w |H_w'| \leq \bigl(\sum_w |H_w|\bigr) + \underbrace{n' - 1}_{\substack{\text{bound on cost} \\ \text{of new hub } c}} - \Big(\underbrace{(n'-n'/2) + (5n'/6 - n'/2) + (4n'/6-n'/2)}_{
\substack{\text{difference in number of appearances} \\ \text{of hubs }  u_1, u_2, u_3 \text{ in hub labelings } H \text{ and } H'}}\Big) < \sum_w |H_w|.$$
We get a contradiction with the optimality of the hub labeling $\{H_w\}$.
\end{proof}

\begin{proof}[Proof of Theorem~\ref{thm:quasi}]
By Claim~\ref{quasi-claim2}, we get that during the decomposition of $T$ into subproblems, the sizes of the subproblems drop by a constant factor.
Therefore, the depth of the decomposition tree is $O(\log n)$, and so for every $u \in V$, we must have $|H_u| = O(\log n)$.
\end{proof}

We are now ready to present our quasi-polynomial time algorithm.

\begin{theorem}
There exists a DP algorithm that is an exact quasi-polynomial time algorithm for \HL1 on trees, with running time $n^{O(\log n)}$.
\end{theorem}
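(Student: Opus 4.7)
The plan is to reuse the dynamic-programming framework of the PTAS from Section~\ref{ptas_trees}, but with two modifications: (i) raise the boundary-size threshold from the constant $k = \Theta(1/\varepsilon)$ to $k = c\log n$, where $c$ is any constant large enough to dominate the $O(\cdot)$ in Theorem~\ref{thm:quasi}; (ii) since we no longer need polynomial running time, drop the $(1+4/k)$ amortization and use the exact recurrence. Concretely, I would maintain a table with one entry $B[T']$ for every subtree $T'$ of $T$ with $|\partial(T')| \leq k$, filled by
\begin{equation*}
    B[T'] \;=\; |T'| \;+\; \min_{r \in T'} \sum_{T'' \text{ is a comp.\ of } T'-r} B[T''],
\end{equation*}
where the minimum ranges over those $r \in T'$ for which every component $T''$ of $T'-r$ still satisfies $|\partial(T'')| \leq k$ (if no such $r$ exists set $B[T'] = +\infty$; base cases $B[T']=1$ if $|T'|=1$ and $B[T']=3$ if $|T'|=2$, exactly as in the PTAS). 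The algorithm returns $B[T]$.

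For correctness I would show $B[T] = OPT$ in two directions. The direction $B[T'] \geq OPT_{T'}$ is a routine induction on $|T'|$: every choice explored by the recurrence assembles a valid hierarchical hub labeling for $T'$ whose cost is exactly the value computed, so the minimum over these choices is at least $OPT_{T'}$. For the reverse direction, fix any optimal hierarchical hub labeling $H^*$ for $T$; such a labeling exists by Theorem~\ref{HHL_optimal_for_trees}. By Theorem~\ref{thm:quasi}, $|H^*_u| \leq c\log n$ for every $u \in V$, which means the recursive decomposition associated with $H^*$ has depth at most $c\log n$. The boundary of any subtree $T'$ occurring in this decomposition is contained in the set of ancestral pivots removed along the path from $T$ to $T'$, so $|\partial(T')| \leq c\log n = k$. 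Hence every such subtree sits in the DP table, and the pivot chosen by $H^*$ at $T'$ is a legal candidate in the DP minimization; an induction on the depth of the decomposition then gives $B[T'] \leq $ (cost of the restriction of $H^*$ to $T'$), and specializing to the root yields $B[T] \leq OPT$.

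The running-time analysis is straightforward. As already observed in Section~\ref{ptas_trees}, any subtree $T'$ with $|\partial(T')| \geq 2$ is uniquely determined by its boundary set, while subtrees with $|\partial(T')| \leq 1$ can be indexed by a vertex together with at most one incident edge. The total number of entries is therefore at most $\sum_{i=2}^{c\log n} \binom{n}{i} + O(n^2) = n^{O(\log n)}$. Computing each entry takes polynomial time (enumerate the $O(n)$ choices of $r$, look at the resulting components, and look up their stored values), for an overall running time of $n^{O(\log n)}$.

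The main obstacle is not in the dynamic program itself but in the structural bound of Theorem~\ref{thm:quasi}: without the $O(\log n)$ depth guarantee for the optimal HHL, we would be forced to keep subtrees of super-polylogarithmic boundary in the table and the running time would blow up. That obstacle has already been overcome by the ``move-ahead'' exchange arguments of Claims~\ref{quasi-claim1} and~\ref{quasi-claim2}, so all that remains for this theorem is the transparent bookkeeping outlined above.
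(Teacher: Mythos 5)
Your proposal is correct and follows essentially the same route as the paper: restrict the DP table to subtrees with boundary of size at most $k = c\log n$, use the exact recurrence without amortization, and justify correctness via Theorem~\ref{HHL_optimal_for_trees} plus the depth bound of Theorem~\ref{thm:quasi} (observing that a subtree's boundary is a subset of the ancestral pivots in the optimal hierarchical decomposition). The only cosmetic difference is where the $+\infty$ sentinel is placed---you restrict the minimization to pivots whose components stay within the boundary budget, the paper instead sets $B[T'] = \infty$ for large-boundary subtrees---but these are the same recurrence.
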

\begin{proof}
The DP algorithm is the same as the algorithm presented in Section \ref{ptas_trees}, with $k = O(\log n)$, without the balancing steps that occur when the boundary of a subproblem is of size exactly $k$.
The proof relies on the simple observation that the set of boundary vertices that describes a subtree is equal to, or a subset of, the set of hubs that have been assigned so far by an HHL solution to all the vertices of this subtree. In other words, the set of boundary vertices corresponds to some of the vertices of the path in the recursion tree that starts from the root (the vertex with the highest rank in $T$), and goes down to the internal vertex in the recursion tree that describes the current subproblem. By Theorem \ref{thm:quasi}, we know that any such path has at most $k= c \cdot \log n$ vertices, for some constant $c$. Thus, when looking for an optimal solution, we only have to consider subproblems that can be described by such paths, and so we can store entries in the dynamic program table only for subtrees $T'$ with $|\partial(T')| \leq k = c \cdot \log n$.

This implies that we can use a simple DP algorithm with entries defined by formula~(\ref{eq:DP-main}) of Section \ref{ptas_trees} for $T'$ with $|\partial(T')|\leq k$ and
$B[T']=\infty$ for $T'$ with $|\partial(T')|> k$.
As we already showed, the DP table has size $n^{O(k)} = n^{O(\log n)}$ (since we don't need to have any entries $B[T']$ for $T'$ with $|\partial(T')|> k$). The running time of the algorithm is $n^{O(\log n)}$.
\end{proof}

\section{Hardness of approximating hub labeling on general graphs}\label{Hardness}

In this section, we prove that \HL1 and \HL\infty are \NP-hard to approximate on general graphs with multiple shortest paths within a factor better than $\Omega(\log n)$, by using the $\Omega(\log n)$-hardness results for Set Cover. This implies that the current known algorithms for \HL1 and \HL\infty are optimal (up to constant factors).

\subsection{An \texorpdfstring{$\Omega(\log n)$}{Omega(log n)}-hardness for \texorpdfstring{\HL1}{HL1}}\label{appendix_l1_hardness}

In this section, we show that it is \NP-hard to approximate \HL1 on general graphs with multiple shortest paths within a factor better than $\Omega(\log n)$. We will use the hardness results for Set Cover, that, through a series of works spanning more than 20 years \cite{DBLP:journals/jacm/LundY94, DBLP:journals/jacm/Feige98, DBLP:conf/stoc/RazS97, DBLP:journals/talg/AlonMS06}, culminated in the following theorem.
\begin{theorem}[Dinur \& Steurer \cite{DBLP:conf/stoc/DinurS14}]\label{Set-Cover-Hardness}
For every $\alpha > 0$,  it is \NP-hard to approximate Set Cover to within a factor $(1 - \alpha) \cdot \ln n$, where $n$ is the size of the universe.
\end{theorem}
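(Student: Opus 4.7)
The result is the tight inapproximability of Set Cover due to Dinur and Steurer, and since a full proof is well beyond the scope of a short sketch I will only outline the route. The plan follows the by-now-standard PCP-plus-gadget recipe that underlies all of the references in the cited chain of results. First, I would start from a suitable outer verifier: a gap version of LABEL COVER obtained from the PCP theorem together with a gap-amplification step. Second, I would compose this outer verifier with a Feige-style partition-system gadget in such a way that satisfying LABEL COVER assignments correspond to small set covers, while conversely any small set cover can be decoded back to a labeling satisfying many LABEL COVER constraints.

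In more detail, for each vertex of the LABEL COVER instance one attaches an independent copy of a partition system $(U,\mathcal{S})$, where the sets of $\mathcal{S}$ are indexed by the possible labels of that vertex, with the key property that any collection of sets indexed by mutually inconsistent labels covers only a vanishing fraction of $U$. Completeness is then immediate: a satisfying LABEL COVER assignment induces a cover whose size equals the number of LABEL COVER vertices. Soundness proceeds by the contrapositive: any small cover must use mostly ``consistent'' sets at each vertex, and this consistency can be decoded into a LABEL COVER assignment satisfying a non-negligible fraction of constraints, contradicting the LABEL COVER gap. Careful calibration of the gadget size converts the constant LABEL COVER gap into an inapproximability factor of roughly $\ln n$ for Set Cover.

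The main obstacle, and the core technical contribution of Dinur and Steurer over Feige's earlier matching bound, is obtaining the tight factor $(1-\alpha)\ln n$ under the assumption $\P \neq \NP$, rather than under the stronger assumption $\NP \not\subseteq \mathrm{DTIME}(n^{O(\log\log n)})$ used by Feige. Earlier reductions applied $O(\log\log n)$-fold parallel repetition (Raz) to push the LABEL COVER soundness sufficiently low, which blew the instance up to quasi-polynomial size. Dinur and Steurer sidestep this by replacing parallel repetition with a direct-product / agreement-testing outer verifier of polynomial size that already achieves small enough soundness to drive the gadget analysis through. Once that improved outer verifier is available, the composition with the partition-system gadget proceeds essentially as above and yields the statement as written; the most delicate part of the plan is the analysis of the agreement tester and the precise tracking of constants so that the leading factor in front of $\ln n$ really is $(1-\alpha)$ for every constant $\alpha>0$.
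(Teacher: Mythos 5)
The paper does not prove this statement; it is quoted verbatim as an external result of Dinur and Steurer and used as a black box in the reductions of Section~\ref{Hardness}. There is therefore no internal proof to compare your sketch against, and a blind attempt to reconstruct the Dinur--Steurer argument was not what the paper asks of this theorem.

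That said, taken on its own terms your outline is a fair high-level account of the standard Feige-style composition (outer Label Cover verifier plus partition-system gadget, with completeness given by a satisfying labeling and soundness by decoding a small cover back to a good labeling). The part you flag as ``the most delicate part'' is exactly where the sketch is thinnest: you attribute the improvement from $\NP \not\subseteq \mathrm{DTIME}(n^{O(\log\log n)})$ to $\P\neq\NP$ to an unspecified ``direct-product / agreement-testing outer verifier of polynomial size.'' In fact the Dinur--Steurer contribution is an analytic treatment of parallel repetition for projection games, which, combined with Moshkovitz's sub-constant-error projection PCPs of polynomial size, yields the required low-soundness Label Cover at polynomial blowup; and even then one must track the alphabet size and the gadget's universe size carefully to make the constant in front of $\ln n$ come out as $1-\alpha$ for every fixed $\alpha>0$. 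None of that calibration appears in your sketch, so as written it establishes neither the polynomial size of the reduction nor the tightness of the constant. For the purposes of this paper, though, the correct move is simply to cite the theorem, as the authors do.
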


We start with an arbitrary unweighted instance of Set Cover. Let $\mathcal{X} = \{x_1, ..., x_n\}$ be the universe and $\mathcal{S} = \{S_1, ..., S_m\}$ be the family of subsets of $\mathcal{X}$, with $m = \poly(n)$. Our goal is to pick the smallest set of indices $I \subseteq [m]$ (i.e. minimize $|I|$) such that $\bigcup_{i \in I} S_i = \mathcal{X}$.

We will construct an instance of HL such that, given an $f(n)$-approximation algorithm for \HL1, we can use it to construct a solution for the original Set Cover instance of cost $O(f(\poly(n))) \cdot OPT_{SC}$, where $OPT_{SC}$ is the cost of the optimal Set Cover solution. Formally, we prove the following theorem.

\begin{theorem}\label{HL1_hardness_theorem}
Given an arbitrary unweighted Set Cover instance ($\mathcal{X}, \mathcal{S}$), $|\mathcal{X}|=n$, $|\mathcal{S}| = m$, with optimal value $OPT_{SC}$, and an $f(n)$-approximation algorithm for \HL1, there is an algorithm that returns a solution for the Set Cover instance of cost $O(f(\Theta(\max\{n, m\}^3))) \cdot OPT_{SC}$.
\end{theorem}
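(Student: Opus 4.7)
The plan is a gap-preserving reduction from Set Cover to \HL1 in which the hub-labeling optimum encodes the set-cover optimum up to additive lower-order terms, so that any $f$-approximation for \HL1 on the constructed graph is converted into an $O(f)$-approximation for the Set Cover instance.

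Given a Set Cover instance $(\mathcal{X},\mathcal{S})$ with $n$ elements and $m$ sets, I would fix a parameter $N=\Theta(\max\{n,m\}^2)$ and build a graph $G$ with four vertex classes: $N$ identical ``root'' copies $r^1,\dots,r^N$; a set vertex $S_j$ for each $j\in[m]$; an element vertex $x_i$ for each $i\in[n]$; and a layer of $\Theta(Nm)$ auxiliary vertices that subdivide and ``twist'' the bipartite $\{r^k\}\times\{S_j\}$ edges. The auxiliary layer brings the total vertex count to $\Theta(\max\{n,m\}^3)$, and it must be designed so that: (a) every shortest $r^k$-to-$x_i$ path has the form $r^k\to\cdots\to S_j\to x_i$ with $x_i\in S_j$, and there are multiple such paths whenever $x_i$ belongs to more than one set of $\mathcal{S}$; (b) every non-$\{r^k,x_i\}$ pair type (pairs among $\{r^k,r^{k'}\}$, $\{r^k,S_j\}$, $\{S_j,S_{j'}\}$, $\{S_j,x_i\}$, $\{x_i,x_{i'}\}$) is coverable by $O(\poly(n,m))$ hubs that do not need to be replicated across the $N$ root copies; and (c) no single auxiliary or element vertex can serve as a ``universal'' common hub for all pairs $\{r^k,x_i\}$ at once, which is precisely the purpose of having $N$ symmetric copies of~$r$.

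Given an optimal set cover $I^*\subseteq[m]$ of size $OPT_{SC}$, I would build a feasible HL by placing $S_j\in H_{r^k}$ for every $j\in I^*$ and every $k$ (contributing exactly $N\cdot OPT_{SC}$), by placing one $S_{j_i}\in H_{x_i}$ per $x_i$ with $j_i\in I^*$ and $x_i\in S_{j_i}$ (contributing $n$), and by using the auxiliary layer to settle the remaining pair types (contributing $O(\poly(n,m))$); this yields $OPT_{HL}\le N\cdot OPT_{SC}+O(\poly(n,m))$. Conversely, given any feasible HL $\{H_u\}$ of cost $C$, I would extract a set cover of size $O(C/N)$ by averaging: since $\sum_k|H_{r^k}|\le C$ and $\sum_k|\{i:r^k\in H_{x_i}\}|\le\sum_i|H_{x_i}|\le C$, some index $k^*$ satisfies $|H_{r^{k^*}}|+|\{i:r^{k^*}\in H_{x_i}\}|\le 2C/N$. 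For each $x_i$, the covering constraint for $\{r^{k^*},x_i\}$ forces one of three types: (A) $r^{k^*}\in H_{x_i}$ (at most $2C/N$ such elements), (B) some $w\in H_{r^{k^*}}$ lies on a shortest $r^{k^*}$-$x_i$ path, or (C) $x_i\in H_{r^{k^*}}$. Types (B) and (C) contribute directly through $H_{r^{k^*}}$, with any auxiliary hub in case (B) mapped to its unique associated $S_j$; type (A) is handled by adding one arbitrary set of $\mathcal{S}$ containing $x_i$. The resulting set cover has size at most $|H_{r^{k^*}}|+2C/N\le 4C/N$.

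Combining the two directions, an $f$-approximation for \HL1 on $G$ produces an HL of cost at most $f(\Theta(\max\{n,m\}^3))\cdot(N\cdot OPT_{SC}+O(\poly(n,m)))$, which via the extraction yields a set cover of size $O(f(\Theta(\max\{n,m\}^3))\cdot OPT_{SC})+O(f(\Theta(\max\{n,m\}^3))\cdot\poly(n,m)/N)$; for $N=\Theta(\max\{n,m\}^2)$ the second term is $O(f(\Theta(\max\{n,m\}^3)))$ and is absorbed into the first whenever $OPT_{SC}\ge 1$ (instances with $OPT_{SC}=0$ are trivial), giving the claimed bound. The main obstacle is Step~1: designing the auxiliary layer so that the upper-bound overhead is genuinely independent of $N$, while simultaneously ruling out cheap ``universal'' common hubs that would let an adversarial HL bypass the set-cover requirements. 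All remaining pieces (averaging, the trichotomy, the set-cover reconstruction, and combining the two bounds) are then essentially routine.
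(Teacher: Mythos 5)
Your overall architecture matches the paper's at the level of "many replicas of a root, average over them, extract a set cover from the best replica," and the trichotomy plus averaging in your extraction step is the right idea. But the heart of the reduction — the gadget that makes $OPT_{HL}$ track $OPT_{SC}$ — is left as "the main obstacle," and the specific proposal you sketch for it (an auxiliary layer of $\Theta(Nm)$ vertices that subdivides/twists the $\{r^k\}\times\{S_j\}$ edges) cannot work as stated. Every vertex must contain itself in its own hub set, so the auxiliary layer alone forces a base cost of at least $\Theta(Nm)$. This is \emph{not} "overhead independent of $N$," and after the averaging you only get a set cover of size $O(C/N) = O(f)\cdot(OPT_{SC} + m)$ rather than $O(f)\cdot OPT_{SC}$; the additive $m$ term (trivial set-cover size) swamps everything. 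Increasing $N$ does not help, because the auxiliary cost $\Theta(Nm)$ grows proportionally and still dominates $N\cdot OPT_{SC}$. In short, subdivision vertices carry a \emph{flat} cost of $1$ each regardless of $OPT_{SC}$, so they cannot serve as the amplification device you need.

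The paper's gadget is quite different and is precisely what fills this gap: instead of subdividing internal edges, it attaches $B$ degree-one pendant leaves to each root replica $r_i$ and $B$ pendant leaves to each element $x_j$, with $A = B = \max\{m,n\}^{3/2}$. A normalization lemma for degree-one vertices (Lemma~\ref{degree_one_vertices_lemma}) shows one can always WLOG assume each pendant's hub set is its neighbor's hub set plus itself, so the pendants \emph{multiply} their neighbor's hub cost by $B+1$ rather than adding a flat $\Theta(1)$. This makes the useful cost $\Theta(AB\cdot OPT_{SC})$, the $m$ set vertices contribute only $O(Am + mn) = o(AB)$, and the extraction divides by $A(B+1)$ to land at $O(f\cdot OPT_{SC})$. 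The paper also needs an extra cheap ``universal'' vertex $W$ (with short $\varepsilon$-length edges to the element layer) to absorb all the non-$(r_i,x_j)$ pair types at negligible cost, plus a second normalization step to push stray $x_j$ hubs out of $H_{r_i}$. Your averaging and extraction skeleton could be adapted to that gadget, but without the pendant-leaf amplification (or an equivalent multiplicative device) the argument as you wrote it does not close.
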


Using the above theorem, if we assume that $f(n) = o(\log n)$, then $O(f(\Theta(\max\{n,m\}^3))) = O(f(\mathtt{poly}(n))) = o(\log \mathtt{poly} (n)) = o(\log n)$, and so this would imply that we can get a $o(\log n)$-approximation algorithm for Set Cover. By Theorem \ref{Set-Cover-Hardness}, this is \NP-hard, and so we must have $f(n) = \Omega(\log n)$.

\begin{corollary}
It is \NP-hard to approximate \HL1 to within a factor $c \cdot \log n$, for some constant $c$, on general graphs (with multiple shortest paths).
\end{corollary}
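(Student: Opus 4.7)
The plan is to reduce Set Cover to \HL1. Starting from an arbitrary Set Cover instance $(\mathcal{X},\mathcal{S})$ with $|\mathcal{X}|=n$ and $|\mathcal{S}|=m$, I construct a graph $G$ on $|V(G)| = \Theta(\max\{n,m\}^3)$ vertices such that an $f$-approximate hub labeling on $G$ decodes into a set cover of size $O(f(|V(G)|)) \cdot OPT_{SC}$. The graph has $N = \Theta(\max\{n,m\}^3)$ root copies $r_1,\ldots,r_N$, a vertex $S_j$ per set, a vertex $x_i$ per element, and carefully chosen auxiliary vertices. Edges and edge lengths will be picked so that two structural properties hold simultaneously: (P1) for every pair $(r_k,x_i)$, the union of vertex sets of all shortest $r_k\to x_i$ paths is exactly $\{r_k,x_i\}\cup\{S_j : x_i\in S_j\}$ (so any common hub must be $r_k$, $x_i$, or an $S_j$ that actually contains $x_i$), and (P2) every other pair --- $(r_k,r_\ell)$, $(r_k,S_j)$, $(S_j,S_{j'})$, $(x_i,x_{i'})$, pairs involving auxiliaries --- can be satisfied by placing a handful of auxiliary vertices into the relevant hub sets, contributing total cost $\mathrm{poly}(n,m)$ independent of $N$. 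The many copies of $r$ make the alternative ``put $r_k$ into every $H_{x_i}$'' strategy cost $\Omega(N)$ per element, so it is strictly worse than using $S_j$ hubs from a set cover.

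Given (P1)--(P2), the upper bound $OPT_{HL}(G) \leq N\cdot OPT_{SC} + O(N) + \mathrm{poly}(n,m)$ follows from the canonical labeling: fix an optimal set cover $I\subseteq[m]$ and a function $\sigma:[n]\to I$ with $x_i\in S_{\sigma(i)}$, set $H_{r_k}=\{r_k\}\cup\{S_j:j\in I\}$ together with the $O(1)$ auxiliary hubs demanded by (P2), and set $H_{x_i}=\{x_i,S_{\sigma(i)}\}$ plus a bounded number of auxiliary additions from (P2). Feasibility for $(r_k,x_i)$ is immediate from (P1) since $S_{\sigma(i)}\in H_{r_k}\cap H_{x_i}$ lies on a shortest path; feasibility for all other pairs is by construction of (P2). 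The dominant cost term is $N\cdot|I|=N\cdot OPT_{SC}$, and for $N \geq \mathrm{poly}(n,m)$ the remaining contribution is dominated by it (using $OPT_{SC}\geq 1$).

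For the extraction direction, let $H$ be any feasible hub labeling of cost $C$. For each $k$ define $I_k:=H_{r_k}\cap\{S_1,\ldots,S_m\}$ and let $U_k\subseteq[n]$ be the set of elements not covered by $\bigcup_{j\in I_k}S_j$. By property (P1), for every $i\in U_k$ the pair $(r_k,x_i)$ has no available $S_j$-hub inside $H_{r_k}$, so its common hub in $H_{r_k}\cap H_{x_i}$ must be $r_k$ itself (forcing $r_k\in H_{x_i}$) or $x_i$ itself (forcing $x_i\in H_{r_k}$). Augmenting $I_k$ by an arbitrary set covering each $x_i\in U_k$ gives a set cover $I_k^\star$ with
\[
|I_k^\star|\ \leq\ |H_{r_k}|\ +\ \bigl|\{i:r_k\in H_{x_i}\}\bigr|.
\]
Summing over $k$ and swapping the second sum,
\[
\sum_{k=1}^{N}|I_k^\star|\ \leq\ \sum_{k}|H_{r_k}|\ +\ \sum_{i}\bigl|H_{x_i}\cap\{r_1,\ldots,r_N\}\bigr|\ \leq\ C,
\]
so some $k^\star$ satisfies $|I_{k^\star}^\star|\leq C/N$. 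Plugging in an $f$-approximate HL with $C\leq f(|V(G)|)\cdot OPT_{HL}(G) \leq f(|V(G)|) \cdot (N\cdot OPT_{SC}+O(N)+\mathrm{poly}(n,m))$ and using $N\gg\mathrm{poly}(n,m)$ and $OPT_{SC}\geq 1$ gives a set cover of size $O(f(\Theta(\max\{n,m\}^3)))\cdot OPT_{SC}$, which is the claimed bound.

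The main obstacle is realizing (P1) and (P2) simultaneously on a graph of constant diameter. The subtle point is that any auxiliary vertex $z$ that lies on a shortest $r_k\to S_j$ path automatically lies on a shortest $r_k\to x_i$ path (by concatenating with the $S_j\to x_i$ edge), which would violate (P1) by handing every $(r_k,x_i)$ pair a cheap universal hub $z$ and collapsing the set-cover structure. Overcoming this requires a careful gadget: one natural approach is to separate the auxiliary structure handling the $(r_k,S_j)$ pairs from the direct $r_k\to S_j\to x_i$ routes, for instance by inflating auxiliary-incident edges so that any auxiliary-mediated $r_k\to x_i$ route is strictly longer than the direct one, while keeping the auxiliaries tied for shortest on every ``easy'' pair required by (P2). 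Verifying that the resulting graph has the clean shortest-path vertex sets demanded by (P1), has constant diameter, and admits a $\mathrm{poly}(n,m)$-cost labeling for all remaining pairs as demanded by (P2) is the most delicate combinatorial step of the proof.
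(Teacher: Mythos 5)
Your high-level plan matches the paper's actual approach (Set Cover reduction, many copies of a root vertex, extraction by taking the best copy), but the specific construction you sketch cannot satisfy your own property (P2), and you acknowledge you do not have a concrete gadget that makes (P1) and (P2) hold simultaneously. The failure is more basic than the auxiliary-vertex obstacle you focus on. By (P1), the only shortest $r_k\to x_i$ paths are two-hop paths $r_k\to S_j\to x_i$, which forces every $r_k$ to be directly adjacent to every $S_j$ (and for those edges to be strict shortest paths). Hence each pair $(r_k,S_j)$ has unique shortest path equal to the edge, and its only admissible hubs are $r_k$ and $S_j$ themselves. The total cost of covering all such pairs is exactly $\sum_j |\{k : r_k\in H_{S_j}\}| + \sum_k |\{j: S_j\in H_{r_k}\}| \ge Nm$, since every one of the $Nm$ pairs contributes at least one unit. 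This is not $\mathrm{poly}(n,m)$ independent of $N$: with $N=\Theta(\max\{n,m\}^3)$, it is $\Theta(Nm)$, which generally swamps the ``main'' term $N\cdot OPT_{SC}$. Your extraction bound $\min_k |I_k^\star| \le C/N$ then only yields $O(f)\cdot m$, which is no approximation at all.

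The ingredient you are missing is the degree-one amplification trick that the paper uses. The paper has only $A=\Theta(\max\{n,m\}^{3/2})$ direct copies $r_1,\dots,r_A$ connected to the $S_j$'s, so the $(r_i,S_j)$ pairs cost $Am = O(\max\{n,m\}^{5/2})$. Each $r_i$ is then given weight $B=\Theta(\max\{n,m\}^{3/2})$ by attaching $B$ degree-one leaves $t^{(i)}_j$; Lemma~\ref{degree_one_vertices_lemma} shows one may assume $H_{t^{(i)}_j} = H_{r_i}\cup\{t^{(i)}_j\}$, so $|H_{r_i}|$ is effectively charged $B+1$ times. The same is done at the $x$-side with leaves $y^{(i)}_j$. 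This decouples the number of direct $r\leftrightarrow S$ edges ($Am$) from the effective mass of the $r$-side ($AB=\Theta(\max\{n,m\}^3)$), so the dominant cost term is $AB\cdot OPT_{SC}$ and the extraction yields a set cover of size $O(f(\Theta(\max\{n,m\}^3)))\cdot OPT_{SC}$. Without this decoupling, your construction cannot make the $(r_k,S_j)$ pairs cheap, and the reduction does not go through. Secondarily, the auxiliary-vertex concern you raise is real but the paper sidesteps it by making the auxiliary $W$ adjacent only to the $x_i$'s, with tiny edge lengths, so $W$ is never on a shortest $r\to x$ or $r\to S$ path; no ``inflation'' gadget is needed.
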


Before proving Theorem \ref{HL1_hardness_theorem}, we need the following lemma.

\begin{lemma}\label{degree_one_vertices_lemma}
Let $G = (V, E)$ and $l: E \to \mathbb{R}^+$  be an instance of \HL1, in which there exists at least one vertex $u \in V$ with degree 1, and let $w$ be its unique neighbor. Then, any feasible solution $\{H_v\}_{v \in V}$  can be converted to a solution $H'$ of at most the same cost, with the property that $H_u' = H_w' \cup \{u\}$ and $u \notin H_v'$, for every vertex $v \neq u$.
\end{lemma}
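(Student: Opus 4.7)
The plan is to exhibit two explicit constructions of $H'$ from the given $H$ and argue that at least one of them has cost at most that of $H$. The only structural fact I will need is that, since $u$ has degree $1$ with unique neighbor $w$, every shortest path with $u$ as an endpoint begins $u, w, \ldots$, and $u$ lies on no shortest path between two vertices that are both different from $u$; in particular, the hub covering any pair $\{v_1, v_2\}$ with $v_1, v_2 \neq u$ is never equal to $u$.

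First I will define the two candidates. \textbf{Construction A} sets $H_u' := H_w \cup \{u\}$, $H_w' := H_w \setminus \{u\}$, and $H_v' := H_v \setminus \{u\}$ for every $v \notin \{u, w\}$. \textbf{Construction B} sets $H_u' := H_u \cup \{w\}$, $H_w' := (H_u \setminus \{u\}) \cup \{w\}$, and, for $v \notin \{u, w\}$, $H_v' := (H_v \setminus \{u\}) \cup \{w\}$ when $u \in H_v$ and $H_v' := H_v$ otherwise. In both, $H_u' = H_w' \cup \{u\}$ and $u \notin H_v'$ for $v \neq u$. I then verify feasibility. For Construction A, a pair $(u, v)$ with $v \neq u$ is covered by the hub $h$ that originally covered $(w, v)$ in $H$: we have $h \in H_w \cap H_v$, $h \neq u$, and $h$ lies on the shortest $u$-$v$ path because the latter equals the edge $(u, w)$ followed by the shortest $w$-$v$ path. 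For Construction B, a pair $(u, v)$ with $v \neq u, w$ and $u \in H_v$ is newly covered by $w$, which sits in $H_u' \cap H_v'$; if instead $u \notin H_v$, the original $(u, v)$-hub is automatically $\neq u$ and still lies in $H_u \subseteq H_u'$ and in $H_v = H_v'$. Symmetrically, a pair $(w, v)$ with $u, w \notin H_v$ is covered by the original $(u, v)$-hub, which belongs to $H_u \setminus \{u\} \subseteq H_w'$ and to $H_v = H_v'$ and, being $\neq u$, lies on the shortest $w$-$v$ path. All remaining pairs retain their original hubs, none of which equals $u$.

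The hard part will be the cost comparison. Let $T' := |\{v \notin \{u, w\} : u \in H_v\}|$, $R := |\{v \notin \{u, w\} : u \in H_v \text{ and } w \in H_v\}|$, $a := [u \in H_w]$, and $b := [w \in H_u]$. A direct bookkeeping of the set sizes (using $|H_w \cup \{u\}| = |H_w| + 1 - a$, $|H_u \cup \{w\}| = |H_u| + 1 - b$, and $|(H_v \setminus \{u\}) \cup \{w\}| = |H_v| - [w \in H_v]$ when $u \in H_v$) gives the changes in total $\ell_1$-cost under the two constructions:
\begin{align*}
\Delta_A &= |H_w| + 1 - |H_u| - T' - 2a, \\
\Delta_B &= |H_u| + 1 - |H_w| - R - 2b.
\end{align*}
Adding these yields $\Delta_A + \Delta_B = 2 - (T' + R + 2a + 2b)$. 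Finally, feasibility of $H$ on the pair $\{u, w\}$---whose only shortest path is the single edge $\{u, w\}$---forces a common hub on that edge, so $u \in H_w$ or $w \in H_u$, i.e.\ $a + b \geq 1$ and therefore $2a + 2b \geq 2$. Combined with $T', R \geq 0$, this gives $T' + R + 2a + 2b \geq 2$, so $\Delta_A + \Delta_B \leq 0$, and at least one of $\Delta_A, \Delta_B$ is nonpositive; the corresponding construction produces the required $H'$.
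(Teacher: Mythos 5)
Your proof is correct. You and the paper both build $H'$ by choosing between two candidate constructions (one ``modeled on $H_w$'' and one ``modeled on $H_u$''), and in both cases the covering constraint for the pair $\{u,w\}$---whose unique shortest path is the single edge, forcing $u\in H_w$ or $w\in H_u$---is the crucial fact driving the cost bound. But the arguments differ in two ways. First, the constructions are not identical: in your Construction A you delete $u$ outright from every $H_v$ with $v\neq u$, whereas the paper's version (with $B=H_w\setminus\{u\}$) still replaces $u$ by $w$ in those sets; your variant is feasible because, $u$ having degree~1, the deleted hub $u$ was never needed to cover any pair with both endpoints $\neq u$. Second, and more substantively, the cost analyses are different in flavor. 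The paper fixes its choice of $B$ in advance by comparing $|H_w\setminus\{u\}|$ with $|H_u\setminus\{u\}|$ and then verifies $|H'_v|\leq|H_v|$ coordinate-by-coordinate (falling back to bounding $|H'_u|+|H'_w|$ jointly in one subcase); you instead derive closed-form expressions for the two cost changes $\Delta_A,\Delta_B$ and observe that $\Delta_A+\Delta_B=2-(T'+R+2a+2b)\leq 0$, so at least one is nonpositive. Your averaging step replaces the paper's case analysis with a single algebraic inequality, which is arguably cleaner; the paper's version gives the slightly stronger per-vertex guarantee $|H'_v|\leq|H_v|$ for $v\notin\{u,w\}$, which you do not need here and do not claim.
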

\begin{proof}
Let $\{H_v\}$ be any feasible solution for HL, and let $u$ be a vertex of degree 1, and $w$ its unique neighbor. If the desired property already holds for every vertex of degree 1, then we are done. So let us assume that the property does not hold for some vertex
 $u$. Let $w$ be its neighbor and
$$B = \begin{cases}
         H_w \setminus \{u\}, & \mathrm{ if } \;\;|H_w \setminus \{u\}| \leq |H_u \setminus \{u\}|,\\
         H_u \setminus \{u\}, & \textrm{otherwise}.
   \end{cases}$$
We now set
\begin{itemize}
    \item $H_u' = B \cup \{u,w\}$.
    \item $H_w' = B \cup \{w\}$.
    \item $\forall v \in V \setminus \{u,w\}, \;H_v' = \begin{cases}
            H_v, & \mathrm{ if } \;u \notin H_v,\\
            (H_v \setminus \{u\}) \cup \{w\}, & \textrm{otherwise}.
   \end{cases}$
\end{itemize}
We first check the feasibility of $H'$. The pairs $\{u,w\}$, and $\{v,v\}$, for all $v \in V$, are clearly satisfied. Also, every pair $\{v,v'\}$ with $v,v' \notin \{u,w\}$ is satisfied, since $u \notin S_{vv'}$. Consider now a pair $\{u,v\}$, with $v \in V \setminus \{u,w\}$. If $u \in H_u \cap H_v$, we have $w \in H_u' \cap H_v'$. Otherwise, $\{u,v\}$ is covered with some vertex $z\in S_{uv} \setminus \{u\}$, and since $S_{uv} \setminus \{u\} = S_{wv}$, we have that $z \in H_u \cap H_v$ and $z \in H_w \cap H_v$.
It follows that $z\in H'_u \cap H'_v$. Now, consider a pair $\{w,v\}$, $v \in V \setminus \{u,w\}$. We have either $H_w' = (H_w \setminus \{u\}) \cup \{w\}$, which gives $H_w' \cap S_{wv} = H_w \cap S_{vw}$, or $H_w' = (H_u \setminus \{u\}) \cup \{w\}$. In the latter case, either $u \in H_v$ and so $w \in H_v'$, or $H_u \cap S_{uv} = H_u \cap S_{wv}$. It is easy to see that in all cases the covering property is satisfied.

We now argue about the cost of $H'$. We distinguish between the two possible values of $B$:
\begin{itemize}
    \setlength{\parskip}{0pt}
    \item $B = H_w \setminus \{u\}$: In this case, $|H_w'| \leq |H_w|$, since $w \in B$. If $u \in H_w$, then $|H_u'| = |H_w| \leq |H_u|$. Otherwise, it holds that $|H_w| \leq |H_u| - 1$, and so $|H_u'| = |H_w| + 1 \leq |H_u|$. For all $v \in V \setminus \{u,w\}$, it is obvious that $|H_v'| \leq |H_v|$.
    \item $B = H_u \setminus \{u\}$: If $w \in H_u$, then $|H_w'| = |H_u| - 1 < |H_w \setminus \{u\}| \leq |H_w|$, and $|H_u'| = |H_u|$. Otherwise, we must have $u \in H_w$, which means that $|H_u| < |H_w|$. Thus, $|H_w'| = |H_u| <  |H_w|$, and $|H_u'| = |H_u| + 1$, which gives $|H_u'| + |H_w'| < |H_u| + 1 + |H_w|$, and so $|H_u'| + |H_w'| \leq |H_u| + |H_w|$. Again, it is obvious that $|H_v'| \leq |H_v|$, for all $v \in V \setminus \{u,w\}$.
\end{itemize}
Thus, in all cases, $H'$ is a feasible hub labeling that satisfies the desired property and whose cost is $\sum_{v \in V} |H_v'| \leq \sum_{v \in V} |H_v|$.
\end{proof}

We are now ready to prove Theorem \ref{HL1_hardness_theorem}.
\begin{proof}[Proof of Theorem \ref{HL1_hardness_theorem}]
Given an unweighted Set Cover instance, we create a graph $G = (V, E)$ and \HL1 instance. 
We fix two parameters $A$ and $B$ (whose values we specify later) and do the following (see Figure \ref{HL_1 fig}):
\begin{itemize}
    \setlength{\parskip}{0pt}
	\item The 2 layers directly corresponding to the Set Cover instance are the $3^{rd}$ and the $4^{th}$ layer. In the $3^{rd}$ layer we introduce one vertex for each set $S_i \in \mathcal{S}$, and in the $4^{th}$ layer we introduce one vertex for each element $x_j \in \mathcal{X}$. We then connect $x_j$ to $S_i$ if and only if $x_j \in S_i$.
    \item The $2^{nd}$ layer contains $A$ vertices in total ($\{r_1, ..., r_A\}$), where $r_i$ is connected to every vertex $S_j$ of the $3^{rd}$ layer.
    \item For each vertex $r_i$ we introduce $B$ new vertices in the $1^{st}$ layer, where each such vertex $t_j^{(i)}$ has degree one and is connected to $r_i$.
    \item Similarly, for each $x_i$ we introduce $B$ new vertices in the $5^{th}$ layer, where each such $y_j^{(i)}$ has degree one and is connected to $x_i$.
    \item Finally, we introduce a single vertex $W$ in the $6^{th}$ layer, which is connected to every $x_i$ of the $4^{th}$ layer.
\end{itemize}
We also assign lengths to the edges. The (black) edges $(W, x_i)$ have length $\varepsilon < 1 / 2$ for every $x_i \in \mathcal{X}$, while all other (brown) edges have length 1. We will show that by picking the parameters $A$ and $B$ appropriately, we can get an $O(f(\poly(n))$-approximation for the Set Cover instance, given an $f(n)$-approximation for \HL1.

\begin{figure}[h]
\begin{center}
\scalebox{1.05}{\input{./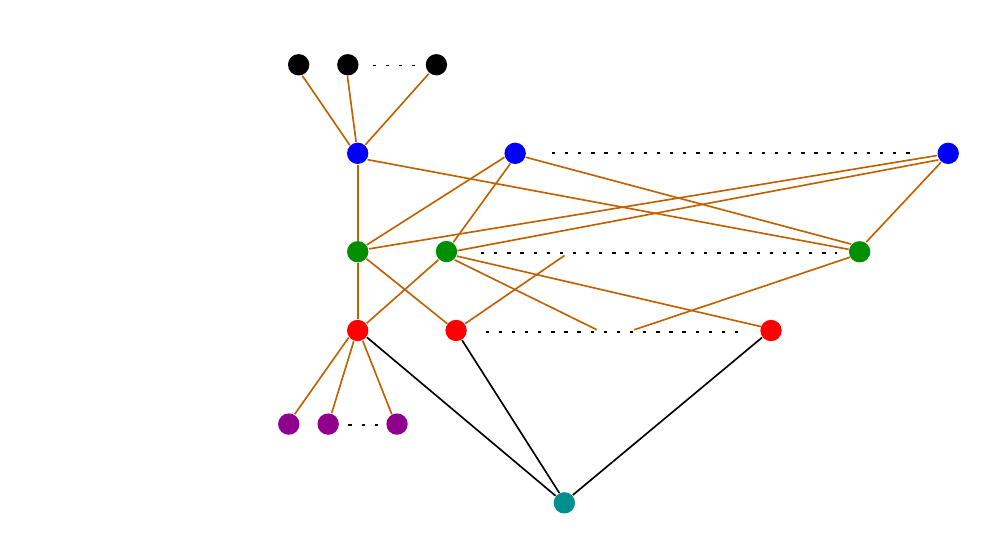_t}}
\caption{The \HL1 instance corresponding to an arbitrary unweighted Set Cover instance \label{HL_1 fig}}
\end{center}
\end{figure}
We will now define a solution for this HL instance, whose cost depends on the cost of the optimal Set Cover. Let $I \subseteq [m]$ be the set of indices of an optimal Set Cover solution. We define the following HL solution, given in the array below. We use the notation $S(x_i)$ to denote an (arbitrarily chosen) set of the optimal Set Cover solution that covers $x_i$. We also write $[n] = \{1, ..., n\}$.
\[
\begin{array}{|c|c|}
\hline
    \mathbf{Layer} & \mathbf{Hubs}\\
    \hline
    1^{st} & \forall i \in [A], j \in [B], \;\;H_{t_j^{(i)}} = H_{r_i} \cup \{t_j^{(i)}\} \\ \hline
    2^{st} & \forall i \in [A], \;\; H_{r_i} = \{r_i\} \cup \{S_j: j \in I\} \\ \hline
    3^{st} & \forall i \in [m], \;\; H_{S_i} = \{S_i, W\} \cup \{r_1, ..., r_A \} \cup \{x_j: x_j \in S_i\} \\ \hline
    4^{st} & \forall i \in [n], \;\; H_{x_i} = \{x_i, W, S(x_i)\} \\ \hline
    5^{st} & \forall i \in [n], j \in [B], \;\; H_{y_j^{(i)}} = H_{x_i} \cup \{y_j^{(i)}\} \\ \hline
    6^{st} & H_W = \{W\} \cup \{S_1, ..., S_m\} \cup \{r_1, ..., r_A\} \\
    \hline
\end{array}
\]
We argue that the above solution is a feasible solution for the constructed instance. To this end, we consider all possible pairs of vertices for all layers. The notation ``$i$ - $j$" means that we check a pair with one vertex at layer $i$ and the other at layer $j$. The pairs $\{u,u\}$ are trivially satisfied, so we will not consider them below:
\begin{itemize}
    \setlength{\parskip}{0pt}
	\item 1 - 1: $\{t_a^{(i)}, t_b^{(j)}\}$. If $i = j$, then the pair is covered by $r_i$. If $i \neq j$, then $H_{t_a^{(i)}}$ and $H_{t_b^{(j)}}$ contain the same sets from the $3^{rd}$ layer, and so at least one shortest path is covered.
    \item 1 - 2: $\{t_a^{(i)}, r_j\}$. If $i = j$, then $r_i \in H_{t_a^{(i)}}$. If $i \neq j$, then $H_{t_a^{(i)}}$ and $H_{r_j}$ contain the same sets from the $3^{rd}$ layer, and so at least one shortest path is covered.
    \item 1 - 3: $\{t_a^{(i)}, S_j\}$. Observe that $r_i \in H_{S_j} \cap H_{t_a^{(i)}}$, and so the pair is covered.
    \item 1 - 4: $\{t_a^{(i)}, x_j\}$. The pair is covered by $S(x_j)$.
    \item 1 - 5: $\{t_a^{(i)}, y_b^{(j)}\}$. Again, the pair is covered by $S(x_j)$.
    \item 1 - 6: $\{t_a^{(i)}, W\}$. The pair is covered by $r_i$.
    \item 2 - 2: Any such pair is covered by any of the $S_j$'s, with $j \in I$.
    \item 2 - 3: $\{r_i, S_j\}$. We have $r_i \in H_{S_j}$.
    \item 2 - 4: $\{r_i, x_j\}$. The pair is covered by $S(x_j)$.
    \item 2 - 5: Since the ``2 - 4" pairs are covered, the ``2 - 5" pairs are covered as well.
    \item 2 - 6: Vertex $W$ is directly connected to all vertices of the $2^{nd}$ layer.
    \item 3 - 3: Any $\{S_i, S_j\}$ is covered by any vertex of the $2^{nd}$ layer.
    \item 3 - 4: $\{S_i, x_j\}$. If $x_j \in S_i$, then $x_j \in H_{S_{i}}$. If $x_j \notin S_i$, then $W \in H_{S_i} \cap H_{x_j}$.
    \item 3 - 5: $\{S_i, y_a^{(j)}\}$. If $x_j \in S_i$, then they are covered by $x_j$. If not, then they are covered by $W$.
    \item 3 - 6: There is a direct connection.
    \item 4 - 4: Any such pair is covered by $W$.
    \item 4 - 5: $\{x_i, y_a^{(j)}\}$. If $i = j$, then $x_i \in H_{y_a^{(i)}}$. If not, then they are covered by $W$.
    \item 4 - 6: There is a direct connection.
    \item 5 - 5: $\{y_a^{(i)}, y_b^{(j)} \}$. If $i = j$, then they are covered by $x_j$. If not, then they are covered by $W$.
    \item 5 - 6: There is a direct connection.
\end{itemize}

Thus, the above solution is indeed a feasible one. We compute its cost (each term from left to right corresponds to the total cost of the vertices of the corresponding layer):
\begin{equation*}
\begin{split}
	\textrm{COST} &\leq A \cdot B \cdot (OPT_{SC} + 2) + A \cdot (OPT_{SC} + 1) + m \cdot (A + n + 2) + 3n + 4B \cdot n + (1 + m + A) \\
                  &= O(AB \cdot OPT_{SC}) + O(A \cdot OPT_{SC}) + O(Am + mn) + O(n) + O(Bn) + O(m + A).
\end{split}
\end{equation*}
We set $A = B = \max\{m, n\}^{3/2}$, and let $N = \Theta(A \cdot B)$ denote the number of vertices of the graph. Then, the total cost is dominated by the term $AB \cdot OPT_{SC}$, and so we get that the cost $OPT$ of the optimal $HL_1$ solution is at most $c \cdot AB \cdot OPT_{SC}$, for some constant $c$. It is also easy to see that $OPT \geq A \cdot B$.

Assuming that we have an $f(n)$-approximation for \HL1, we can get a solution $H'$ of cost $\|H'\|_1 \leq c \cdot f(N) \cdot AB \cdot OPT_{SC}$. We will show that we can extract a feasible Set Cover solution of cost at most $\frac{\|H'\|_1}{AB}$.

To extract a feasible Set Cover, we first modify $H'$. Using Lemma \ref{degree_one_vertices_lemma}, we can obtain a solution $H''$ such that for every $t_i^{(j)}$ we have $H_{t_i^{(j)}}'' = H_{r_j}'' \cup \{t_i^{(j)}\}$ and $t_i^{(j)} \notin H_v''$, for every vertex $v \neq t_i^{(j)}$. We also apply the lemma for the vertices of the $5^{th}$ layer. In addition to that, we add $\{r_1, ..., r_A\}$ to the hub set of $W$, increasing the cost by at most $A$. Thus, we end up with a solution $H''$ of cost at most $c \cdot f(N) \cdot AB \cdot OPT_{SC} + A \leq c' \cdot f(N) \cdot AB \cdot OPT_{SC}$.

We now look at every vertex $r_i$ of the $2^{nd}$ layer for which we have $x_j \in H_{r_i}''$, for some $j \in [n[$. This $x_j$ can only be used to connect $r_i$ to $x_j$  and to the $\{y_a^{(j)}\}$'s. In that case, we can remove $x_j$ from $H_{r_i}''$ and all $H_{t_b^{(i)}}''$, and add $r_i$ to $H_{x_j}''$ and to all $H_{y_a^{(j)}}''$. The cost of the solution cannot increase, and we again call this new solution $H''$.

We are ready to define our Set Cover solution. For each $i \in [A]$, we define $F_i = H_{r_i}'' \cap \{S_1, ..., S_m\}$. Let $Z_i = |\mathcal{X} \setminus \bigcup_{j \in F_i} S_j|$ be the number of uncovered elements. If $Z_i = 0$, then $F_i$ is a valid Set Cover. If $Z_i > 0$, then we cover the remaining elements using some extra sets (at most $Z_i$ such sets). At the end, we return $\min_{i \in [A]} \{|F_i| + Z_i\} $.

In order to analyze the cost of the above algorithm, we need the following observation. Let us look at $H_{r_i}''$, and an element $x_j$ that is not covered. By the structure of $H''$, this means that $r_i \in H_{x_j}''$. Thus, the number of uncovered elements $Z_i$ contributes a term $(B + 1) Z_i$ to the cost of the \HL1 solution.  For each $i$, the number of uncovered elements $Z_i$ implies an increase in the cost of the $4^{th}$ and $5^{th}$ layer, which is ``disjoint" with the increase for $j \neq i$, and so the total cost of the $1^{st}$, $2^{nd}$, $4^{th}$ and $5^{th}$ layer combined is at least $\sum_{i = 1}^A (B + 1) (|H_{r_i}''| + Z_i)$. So, there must exist an $i$ such that
\begin{equation*}
	|H_{r_i}''| + Z_i \leq \frac{\|H''\|_1}{A(B + 1)}.
\end{equation*}
We pick the Set Cover with cost at most $\min_j \{|F_j| + Z_j\} \leq \min_j \{|H_{r_j}''| + Z_j\}$, and so we end up with a feasible Set Cover solution of cost at most
\begin{equation*}
	\frac{c' \cdot f(N) \cdot AB \cdot OPT_{SC}}{AB} = c' \cdot f(N) \cdot OPT_{SC}.
\end{equation*}
\end{proof}

\subsection{An \texorpdfstring{$\Omega(\log n)$}{Omega(log n)}-hardness for \texorpdfstring{\HL\infty}{HL-infinity}}\label{appendix_infinity_hardness}

In this section, we will show that it is \NP-hard to approximate \HL\infty to within a factor better than $\Omega(\log n)$. We will again use the hardness results for Set Cover.

\begin{theorem}
Given an arbitrary unweighted Set Cover instance $(\mathcal{X}, \mathcal{S})$, $|\mathcal{X}| = n$, $|\mathcal{S}| = m = \poly(n)$, with optimal value $OPT_{SC}$, and an $f(n)$-approximation algorithm for \HL\infty, there is an algorithm that returns a solution for the Set Cover instance with cost at most $O \left(f(O(n^4 \cdot m)) \right) \cdot OPT_{SC}$.
\end{theorem}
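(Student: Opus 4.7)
The plan is to adapt the reduction from Theorem~\ref{HL1_hardness_theorem} to the $\ell_\infty$ setting, with parameters retuned so that the \emph{maximum} hub-set size (rather than the sum) encodes the optimal Set Cover value. Starting from an unweighted Set Cover instance $(\mathcal{X}, \mathcal{S})$ with $|\mathcal{X}| = n$ and $|\mathcal{S}| = m$, I would build a layered graph with the same qualitative structure as in Figure~\ref{HL_1 fig}: $A$ root copies $r_1,\dots,r_A$ each with $B$ degree-one leaves $t_j^{(i)}$, set vertices $S_1,\dots,S_m$ adjacent to every $r_i$, element vertices $x_1,\dots,x_n$ adjacent to $S_k$ iff $x_j \in S_k$, each $x_j$ equipped with $B$ degree-one leaves $y_j^{(k)}$, and a sink $W$ attached to every $x_j$ via short edges of length $\varepsilon < 1/2$. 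Parameters $A$ and $B$ will be chosen as polynomials in $n$ with $A B (m + n) = O(n^4 m)$, matching the stated size bound.

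For the completeness direction, given an optimal Set Cover $I$, I would set
$H_{r_i} = \{r_i\} \cup \{S_k : k \in I\}$,
$H_{t_j^{(i)}} = H_{r_i} \cup \{t_j^{(i)}\}$,
$H_{x_j} = \{x_j, W, S(x_j)\}$ (where $S(x_j) \in I$ is an arbitrary covering set for $x_j$), $H_{y_j^{(k)}} = H_{x_k} \cup \{y_j^{(k)}\}$, and small hub sets at each $S_k$ and at $W$ chosen to patch up the remaining pairs exactly as in the case analysis of Theorem~\ref{HL1_hardness_theorem}. A direct inspection of the list of layer-pair cases already carried out in the \HL1 proof shows feasibility, and a straightforward count shows that every hub set has size at most $C\cdot OPT_{SC}$ for an absolute constant $C$, so the $\ell_\infty$ cost is $O(OPT_{SC})$.

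For the soundness direction, given any solution $H'$ with $\|H'\|_\infty \leq L$, I would first prove and apply an $\ell_\infty$ analogue of Lemma~\ref{degree_one_vertices_lemma}: the redistribution operation used there only swaps hubs between a degree-one vertex and its neighbor and hence does not increase the maximum, so after the transformation $H''_{t_j^{(i)}} = H''_{r_i}\cup\{t_j^{(i)}\}$ and $H''_{y_j^{(k)}} = H''_{x_k}\cup\{y_j^{(k)}\}$. I would then pick $A$ larger than $L$, which rules out the ``trivial'' covering of every pair $(r_i, x_j)$ by placing all $r_i$'s into $H''_{x_j}$ (that would force $|H''_{x_j}| \geq A > L$); hence for every $j$ the pair $(r_i, x_j)$ must be covered by some $S_{k} \ni x_j$ lying in $H''_{r_i} \cap H''_{x_j}$. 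Consequently, for \emph{each} $i$ the set $F_i := H''_{r_i} \cap \{S_1,\dots,S_m\}$, possibly augmented by at most $Z_i \leq L$ extra sets to cover any residual elements whose coverage was via $r_i \in H''_{x_j}$ rather than via an $S_k$, is a feasible Set Cover of size $O(L)$. Taking the best $i$ (or averaging as in the \HL1 proof) yields the claimed Set Cover of cost $O(f(N))\cdot OPT_{SC}$, and the \NP-hardness of Set Cover approximation from Theorem~\ref{Set-Cover-Hardness} then forces $f(n) = \Omega(\log n)$.

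The main obstacle will be the joint tuning of $A$ and $B$: $A$ must exceed $L$ so that root copies cannot all be inserted into a single element's hub set, while $B$ must be chosen large enough that hub sets at low-degree vertices like $W$ and the $S_k$'s (which aggregate many pairs) do not dominate the maximum and swamp the Set-Cover signal, but not so large that the total graph size breaks the $O(n^4 m)$ bound. Unlike the \HL1 reduction, where amplification via many copies was ``free'' thanks to the sum cost, here every structural modification has to be paid for inside a single hub set, and the hub set at $W$ in particular, which naturally wants to contain $\{S_1,\dots,S_m\}$ for covering $(W, S_k)$ pairs, is the delicate point; I would sidestep this by routing coverings for those pairs through the $x_j$'s and using the short $\varepsilon$-edges so that $W$ lies on the unique shortest $(x_j,x_{j'})$ paths, allowing $H'_W$ to be kept down to $O(1)$ in the completeness solution.
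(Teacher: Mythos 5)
Your proposal diverges sharply from the paper's proof: the paper does \emph{not} reuse the six-layer graph of Figure~\ref{HL_1 fig}. Instead it builds a new graph on a complete bipartite graph with a fresh copy $\mathcal{S}_{uv}$ of the set layer for \emph{each} edge $(u,v)$, three separate hub-anchor vertices $W_A$, $W_B$, $W_S$, and parameters $A=B=K=n^2$; soundness is proved by averaging over all polynomially-many triples $(u,v,i)$, not by applying an $\ell_\infty$ analogue of Lemma~\ref{degree_one_vertices_lemma}. This is not a cosmetic difference, and your route fails at the completeness step.

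The blocker is the set of edges between the $r_i$'s and the $S_j$'s. In the six-layer graph each $r_i$ is adjacent to \emph{every} $S_j$, and each such edge is the unique shortest path for the pair $(r_i,S_j)$. Feasibility therefore forces, for every $(i,j)$, either $r_i\in H_{S_j}$ or $S_j\in H_{r_i}$. Summing these $Am$ forced memberships over the $A+m$ vertices that absorb them gives
\begin{equation*}
\|H\|_\infty \;\geq\; \frac{Am}{A+m},
\end{equation*}
so the $\ell_\infty$ optimum of this instance is at least $\min(A,m)/2$ regardless of the Set Cover instance. Your completeness argument claims every hub set has size $O(OPT_{SC})$, which already contradicts this (and indeed the stated intended solution of Theorem~\ref{HL1_hardness_theorem} has $|H_{S_j}|\geq A$ and $|H_W|\geq m+A$). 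You cannot escape by shrinking $A$: you explicitly need $A>L$ for your soundness step, and $L$ is at least the $\ell_\infty$ optimum, so $A>L\geq \min(A,m)/2$ forces $L=\Omega(m)$ \emph{always}. Since Set Cover instances with $OPT_{SC}=O(1)$ and $m=\mathrm{poly}(n)$ exist, the reduction cannot be approximation-preserving. You spotted the secondary symptom of this at $W$ and proposed rerouting through the $x_j$'s, but the $r_i$--$S_j$ edges are the load-bearing obstruction, and you never address them.

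The paper's construction is designed precisely to defuse this. By giving each edge $(u,v)$ of the top-level bipartite graph its own private set layer $\mathcal{S}_{uv}$, and choosing $A=B=K=n^2$, every local bipartite block (between $\{r_{u,i}\}_i$ and $\bigcup_v\mathcal{S}_{uv}$, between $\mathcal{S}_{uv}$ and $\{x_{v,j}\}_j$, etc.) contributes a pigeonhole lower bound of $\Theta(n^2)$ on $\|H\|_\infty$, which matches the baseline $\Theta(n^2)$ per-vertex cost of the intended solution; the Set Cover value shows up only as the additional $\Theta(n^2\cdot |I|)$ factor in $|H_{r_{u,i}}|$. Soundness then extracts a Set Cover of expected size $O(\|H'\|_\infty/n^2)=O(f(N))\cdot|I|$ by averaging over the random triple $(u,v,i)$, bounding $\sum_{v}|C_{u,v,i}|$ by $|H''_{r_{u,i}}|$ and $\sum_{u,i}Z_{u,v,i}$ by $\sum_j|H''_{x_{v,j}}|$. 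That two-sided balance is exactly what your single-copy construction cannot supply.
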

\begin{proof}
Let $\mathcal{X} = \{x_1, ..., x_n\}$ and $\mathcal{S} = \{S_1, ..., S_m\}$, $S_i \subseteq \mathcal{X}$ be a Set Cover instance. We will construct an instance of \HL\infty, such that, given an $f(n)$-approximation algorithm for it, we will be able to solve the Set Cover instance within a factor of $O \left(f(O(n^4m)) \right)$. We now describe our construction:
\begin{itemize}
    \setlength{\parskip}{0pt}
    \item We introduce a complete bipartite graph $(A,B,E)$. By slightly abusing notation, we denote $|A| = A$ and $|B| = B$, where $A$ and $B$ are two parameters to be set later on.
    \item Each vertex $u \in A$ ``contains'' $K$ vertices $\{r_{u,1}, ..., r_{u,K}\}$.
    \item Each vertex vertex $v \in B$ ``contains" a copy of the universe $\{x_{v,1}, ..., x_{v,n}\}$.
    \item Each edge $(u,v)$ is replaced by an intermediate layer of vertices $\mathcal{S}_{uv} = \{S_{uv,1}, ..., S_{uv,m}\}$, which is essentially one copy of $\mathcal{S}$. We then connect every vertex $r_{u,i}$, $i \in [K]$, to every vertex $S_{uv,j}$, $j \in [m]$, and we also connect each $S_{uv,j}$ to $x_{v,t}$, if $x_t \in S_j$. All these edges (colored red in the figure) have length 1.
    \item Finally, we introduce three extra vertices $W_A$ and $W_B$ and $W_S$, and the edges $(W_A, r_{u,i})$, for all $u\in A$ and $i \in [K]$, the edges $(W_B, x_{v,j})$, for all $v \in B$ and $j \in [n]$, and the edges $(W_S, S_{uv,j})$, for all $u \in A$, $v \in B$ and $j \in [m]$. All these edges (colored black in the figure) have length $\varepsilon < 1$.
\end{itemize}
The construction is summarized in Figures \ref{HL infty fig1}, \ref{HL infty fig2}.

\begin{figure}[ht!]
\centering
\begin{subfigure}{.5\textwidth}
  \centering
\scalebox{0.5}{\input{./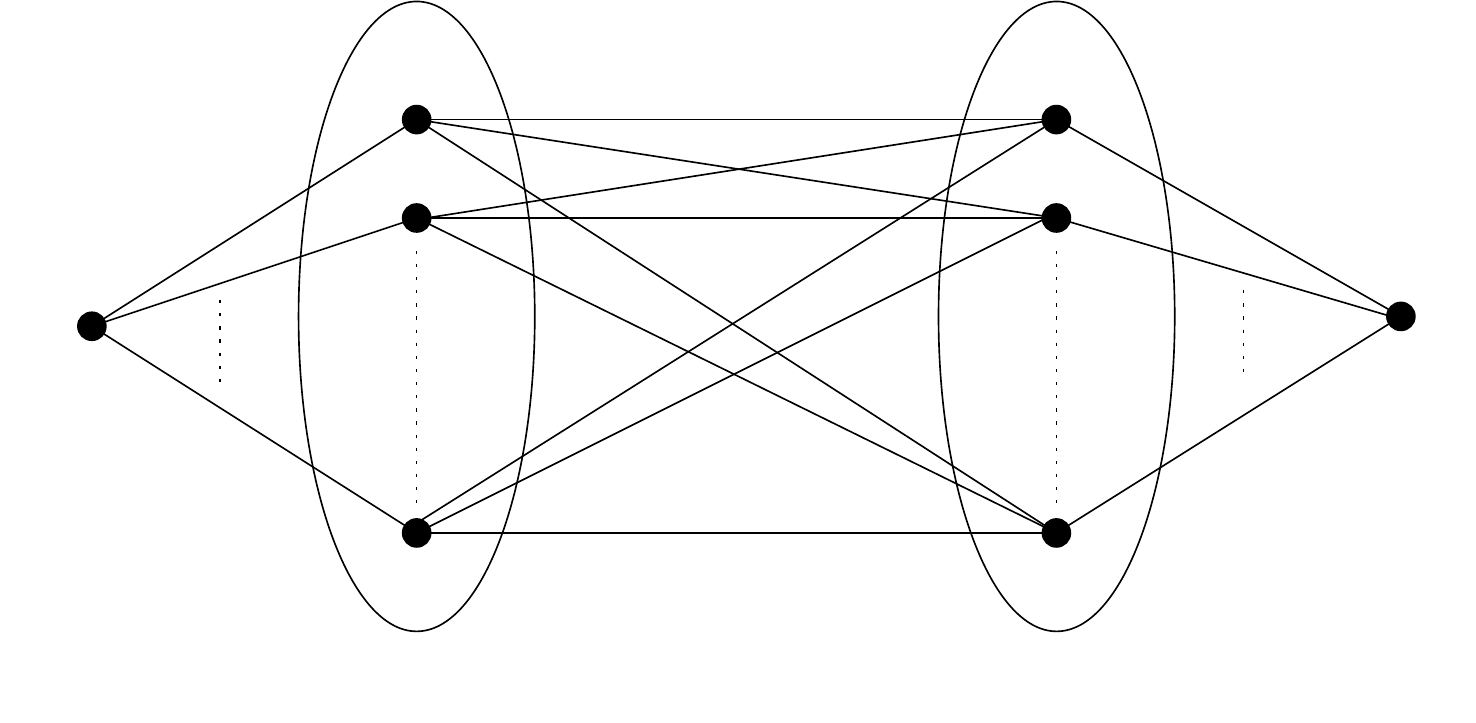_t}}
  \caption{The general structure of the graph}
\label{HL infty fig1}
\end{subfigure}%
\begin{subfigure}{.5\textwidth}
  \centering
  \scalebox{0.4}{\input{./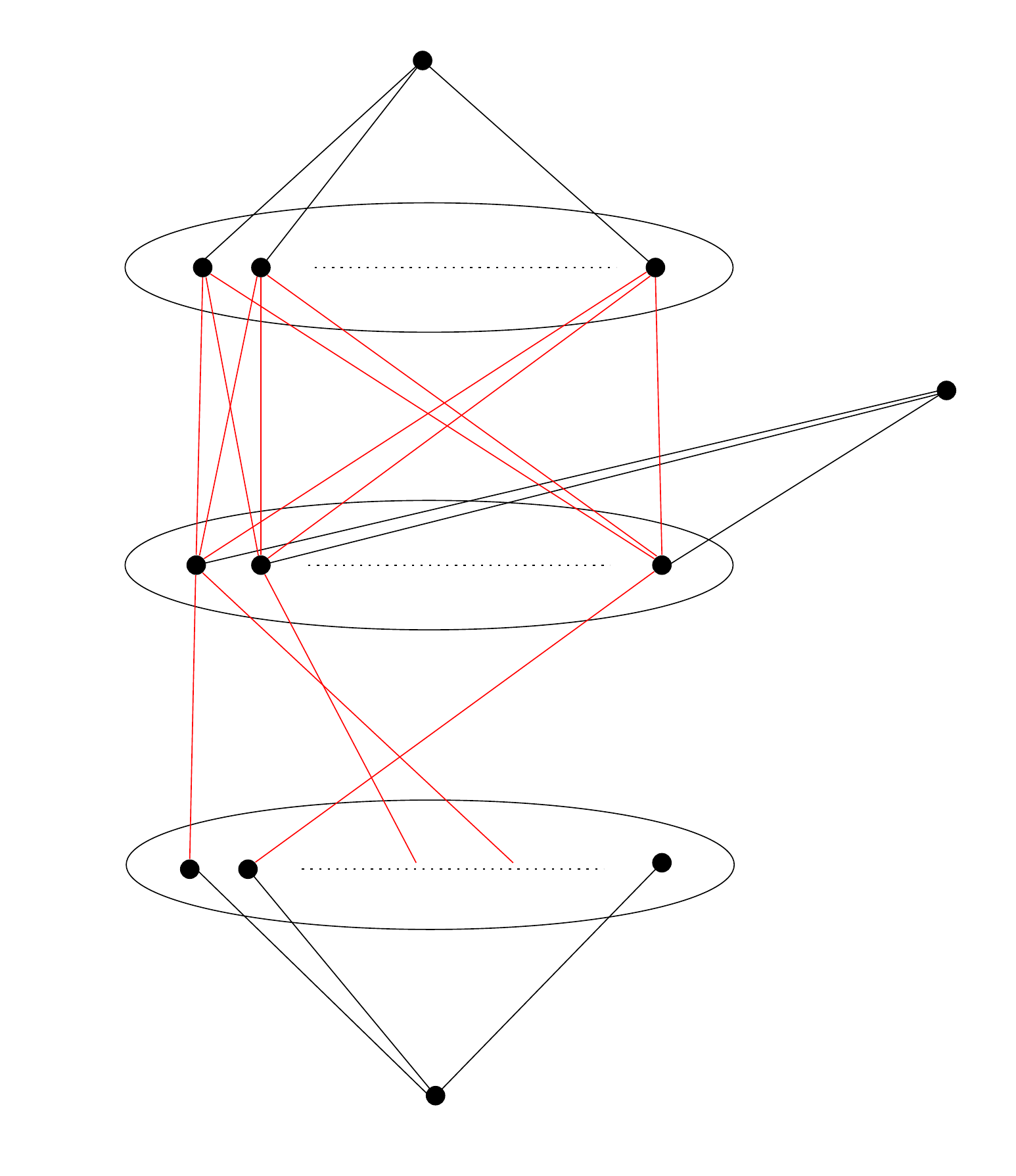_t}}
  \caption{A closer look at an edge $(u,v) \in E$}
\label{HL infty fig2}
\end{subfigure}
\end{figure}

In the resulting construction, the number of vertices, denoted by $N$, is $N = AK + Bn + ABm + 3$, and the number of edges, denoted by $M$, is at least $M \geq AB(Km + m) + AK + ABm + Bn$. Let $OPT$ denote the cost of an optimal \HL\infty solution $H$ for this instance. Then, by a standard pigeonhole principle argument, and since every edge is a unique shortest path, we get that $OPT \geq \frac{M}{N}$. We now set the parameters, as follows: $A = B = K = n^2$. With these values, we have $N = \Theta(n^4 \cdot m)$, $M = \Omega(n^6 \cdot m)$ and $OPT = \Omega(n^2)$.

We will describe an intended feasible solution for this instance, that will give an upper bound on OPT. Let $I \subseteq[m]$ denote an optimal Set Cover of our original Set Cover instance, and let $I_j \in I$ denote the index of a chosen set that covers $x_j$. The HL solution is the following:
\begin{itemize}
    \setlength{\parskip}{0pt}
	\item $H_{r_{u,i}} = \{r_{u,i}\} \cup \left( \bigcup_{{v \in B}} \{ S_{uv,j}: j \in I \} \right) \cup \{W_A, W_B, W_S\}$, for $u \in A$ and $i \in [K]$.
    \item $H_{x_{v,j}} = \{x_{v,j}\} \cup \left( \bigcup_{u \in A} \{S_{uv,I_j}\} \right) \cup \{W_A, W_B, W_S\}$, for $v \in B$ and $j \in [n]$.
    \item $H_{S_{uv,t}} = \{ S_{uv,t} \} \cup \{r_{u,1}, ..., r_{u,K} \} \cup \{ x_{v,1}, ..., x_{v,n} \} \cup \{W_A, W_B, W_S\}$, for $u\in A$, $v \in B$ and $t \in [m]$.
    \item $H_{W_q} = \{W_A, W_B, W_S\}$, for $q \in \{A,B,S\}$.
\end{itemize}
We now compute the sizes of these hub sets. We have:
\begin{itemize}
    \setlength{\parskip}{0pt}
	\item $|H_{r_{u,i}}| = B |I| + 4 = \Theta(n^2 \cdot |I|)$, for $u \in A$ and $i \in [K]$.
    \item $|H_{x_{v,j}}| = A + 4 = \Theta(n^2)$, for $v \in B$ and $j \in [n]$.
    \item $|H_{S_{uv,t}}| = K + n + 4 = \Theta(n^2)$, for $u\in A$, $v \in B$ and $t \in [m]$.
    \item $|H_{W_q}| = 3$, for $q \in \{A,B,S\}$.
\end{itemize}
Thus, we get that the value of the above solution is $\|H\|_\infty = Val = \Theta(n^2 \cdot |I|)$. We now show that the above is indeed a feasible solution. For that, we consider all possible pairs of vertices:
\begin{itemize}
    \setlength{\parskip}{0pt}
	\item $r_{u,i}$ - $r_{v,j}$: $W_A$ is a hub for both vertices.
    \item $r_{u,i}$ - $S_{uv,j}$: $r_{u,i} \in H_{S_{uv,j}}$.
    \item $r_{u,i}$ - $S_{wv,j}$, $w \neq u$, $v \neq u$: $W_S$ is a hub for both vertices.
    \item $r_{u,i}$ - $x_{v,j}$: $S_{uv,I_j}$ is a hub for both vertices.
    \item $r_{u,i}$ - $W_q$, for $q \in \{A,B,S\}$: $W_q \in H_{r_{u,i}}$.
    \item $S_{uv,i}$ - $S_{u'v',j}$: $W_S$ is a hub for both vertices.
    \item $S_{uv,i}$ - $x_{v,j}$:  $x_{v,j} \in H_{S_{uv,i}}$.
    \item $S_{uv,i}$ - $x_{v',j}$, $u \neq v'$, $v \neq v'$: $W_B$ (or $W_S$) is a hub for both vertices.
    \item $S_{uv,i}$ - $W_q$, for $q \in \{A,B,S\}$: $W_q \in H_{S_{uv,i}}$.
    \item $x_{v,i}$ - $x_{v',j}$: $W_B$ is hub for both vertices.
    \item $x_{v,j}$ - $W_q$, for $q \in \{A,B,S\}$: $W_q \in H_{x_{v,j}}$
    \item $W_q$ - $W_{q'}$, for $q, q' \in \{A,B,S\}$: $W_{q'} \in H_{W_q}$.
\end{itemize}
Thus, the proposed solution is indeed a feasible solution. Assuming now that we have an $f(n)$-approximation algorithm for \HL\infty, we can obtain a solution $H'$ of cost $\|H'\|_\infty \leq f(N) \cdot OPT \leq c \cdot f(N) \cdot n^2 \cdot |I|$. We will now show that we can extract a feasible solution for the original Set Cover instance, of cost $O(f(N)) \cdot |I|$. Let $H'$ be the solution that the algorithm returns. As a reminder, we have already proved that $\|H'\|_\infty = \Omega(n^2)$. We first transform $H'$ to a solution $H''$ that will look more like our intended solution, as follows:
\begin{itemize}
    \setlength{\parskip}{0pt}
	\item $H''_{S_{uv},t} := H_{S_{uv},t}' \cup \{r_{u,1}, ..., r_{u,K} \} \cup \{ x_{v,1}, ..., x_{v,n} \} \cup \{W_A, W_B, W_S\}$, for $u\in A$, $v \in B$ and $t \in [m]$. We have $|H_{S_{uv},t}''| \leq |H_{S_{uv},t}'| + K + n + 3 \leq \|H'\|_\infty + n^2 + n + 3 = O(\|H'\|_\infty)$.
    \item $H_{W_q}'' := H_{W_q}' \cup \{W_A, W_B, W_S\}$, for $q \in \{A,B,S\}$. We have $|H_{W_q}''| \leq |H_{W_q}'| + 3 = O(|H_{W_q}'|) = O(\|H'\|_\infty)$.
    \item We now look at $H_{r_{u,i}}'$. For every $x_j \in \mathcal{X}$, we pick (arbitrarily) a set $S(x_j) \in \mathcal{S}$ with $x_j \in S(x_j)$, that we will use to cover it. Now, if $x_{v,j} \in H_{r_{u,i}}'$, we remove $x_{v,j}$ from $H_{r_{u,i}}'$ and add $S_{uv}(x_j)$ to $H_{r_{u,i}}'$. This doesn't change the size of $H_{r_{u,i}}'$. We also add $S_{uv}(x_j)$ to the hub set of $x_{v,j}$. This increases the size of $H_{x_{v,j}}'$ by 1. The crucial observation here is that since we have decided in advance which set we will use to cover $x_j$, then $|H_{x_{v,j}}'|$ can only increase by 1, for every edge $(u,v)$. Thus, the total increase in $|H_{x_{v,j}}'|$ is at most $A$, i.e. $|H_{x_{v,j}}''| \leq |H_{x_{v,j}}'| + n^2 = O(\|H'\|_\infty)$.
\end{itemize}
The above transformed solution, as shown, has the same (up to constant factors) cost as the solution that the algorithm returns, i.e. $\|H''\|_\infty = O(\|H'\|_\infty) = O(f(N)) \cdot n^2 \cdot |I|$, and is clearly feasible.

In order to recover a good Set Cover solution, we look at the sets $H_{r_{u,i}}'' \cap \mathcal{S}_{uv}$. Each such intersection can be viewed as a subset $C_{u,v,i}$ of $\mathcal{S}$. Let $Z_{u,v,i}$ denote the number of elements that are not covered by $C_{u,v,i}$, i.e. $Z_{u,v,i} = |\mathcal{X} \setminus (\bigcup_{S \in C_{u,v,i}} S)|$. Our goal is to show that there exists a $\{u,v,i\}$ such that $|C_{u,v,i}| + Z_{u,v,i} = O(\|H''\|_\infty / n^2)$. Since there is a polynomial number of choices of $\{u,v,i\}$, we can then enumerate over all choices and find a Set Cover with cost $O(f(N)) \cdot |I|$.

To prove that such a good choice exists, we will make a uniformly random choice over $\{u,v,i\}$, and look at the expected value $\E \left[|C_{u,v,i}| + Z_{u,v,i} \right]$. We have $\E\left[|C_{u,v,i}| + Z_{u,v,i}\right] = \E\left[|C_{u,v,i}|\right] + \E[Z_{u,v,i}]$. We look separately at the two terms. We make the following 2 observations:
\begin{equation*}
	\sum_{v \in B} |C_{u,v,i}| \leq |H_{r_{u,i}}''| = O(\|H'\|_\infty),
\end{equation*}
and
\begin{equation*}
	\sum_{u \in A} \sum_{i \in K} Z_{u,v,i} \leq \sum_{j \in [n]} |H_{x_{v,j}}''| = n \cdot O(\|H'\|_\infty).
\end{equation*}
The second observation follows from the fact that for any given $r_{u,i}$ and edge $(u,v)$, the uncovered elements $x_{v,j}$ must have $r_{u,i} \in H_{x_{v,j}}''$. With these, we have
\begin{equation*}
	\E[|C_{u,v,i}|] = \frac{1}{ABK} \sum_{u \in A, i \in [K]} \sum_{v \in B} |C_{u,v,i}| \leq \frac{1}{ABK} \cdot A K \cdot O(\|H'\|_\infty) = O(\|H'\|_\infty / B).
\end{equation*}
Similarly,
\begin{equation*}
	\E[Z_{u,v,i}] = \frac{1}{ABK} \sum_{v \in B} \sum_{u \in A} \sum_{i \in K} Z_{u,v,i} \leq \frac{1}{ABK} \cdot B \cdot n \cdot O(\|H'\|_\infty) = \frac{n}{AK} \cdot O(\|H'\|_\infty).
\end{equation*}
Thus, we get that $\E\left[|C_{u,v,i}| + Z_{u,v,i}\right] = \left(\frac{1}{B} + \frac{n}{AK} \right) \cdot O(\|H'\|_\infty) = O(\|H'\|_\infty / n^2) = O(f(N)) \cdot |I|$. This means that there exists a choice of $\{u,v,i\}$ such that the corresponding Set Cover has size $O(f(N)) \cdot |I|$. As already mentioned, there are polynomially many choices, so we can enumerate them and find the appropriate $\{u,v,i\}$, and, thus, recover a Set Cover solution for our original Set Cover instance of cost $O(f(N)) \cdot |I|$, where, as already stated, $N = \Theta(n^4 \cdot m)$.
\end{proof}

\begin{corollary}
It is \NP-hard to approximate \HL\infty to within a factor better than $\Omega(\log n)$.
\end{corollary}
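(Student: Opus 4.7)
The plan is to derive the corollary directly from the preceding theorem by a standard contrapositive reduction argument, using Theorem~\ref{Set-Cover-Hardness} (Dinur--Steurer) as the source of hardness. Concretely, I would assume for contradiction that there exists, for every constant $c>0$, a polynomial-time $f(n)$-approximation algorithm for \HL\infty with $f(n) = o(\log n)$, and show that this would yield a polynomial-time $o(\log n)$-approximation for Set Cover, contradicting Theorem~\ref{Set-Cover-Hardness}.

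The key step is to plug the hypothetical algorithm into the reduction established in the previous theorem. Given an arbitrary unweighted Set Cover instance $(\mathcal X, \mathcal S)$ with $|\mathcal X| = n$ and $|\mathcal S| = m = \poly(n)$, the reduction constructs an \HL\infty instance on $N = \Theta(n^4 m)$ vertices. Applying the assumed algorithm to this instance produces an $f(N)$-approximate hub labeling in polynomial time, from which the reduction recovers a Set Cover solution of cost at most $O(f(N)) \cdot OPT_{SC}$. Since $m = \poly(n)$, we have $N = \poly(n)$, so $f(N) = f(\poly(n)) = o(\log \poly(n)) = o(\log n)$, yielding the contradicting Set Cover approximation.

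The only subtlety to verify is the parameter bookkeeping: the Set Cover instances handed to us in Theorem~\ref{Set-Cover-Hardness} are the ones that are hard to approximate within $(1-\alpha)\ln n$, and we need to be sure that those instances have $m = \poly(n)$ so that $N = \poly(n)$ and thus $\log N = \Theta(\log n)$. This is standard for Set Cover hardness reductions (the universe size is polynomial in the instance size, and the number of sets is bounded by a polynomial in the universe size), so no additional work is needed. Putting these pieces together establishes the corollary: no polynomial-time algorithm can achieve approximation ratio better than $\Omega(\log n)$ for \HL\infty unless $\P=\NP$. There is no genuine obstacle here; the work was all done in the preceding theorem, and the corollary is simply its contrapositive combined with Theorem~\ref{Set-Cover-Hardness}.
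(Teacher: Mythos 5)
Your proposal is correct and matches the paper's own argument: both derive the corollary by assuming an $f(n)=o(\log n)$-approximation for \HL\infty, feeding it through the reduction of the preceding theorem to get an $O(f(\Theta(n^4 m)))=o(\log n)$-approximation for Set Cover (using $m=\poly(n)$), and then invoking the Dinur--Steurer hardness to reach a contradiction. The parameter bookkeeping you flag is exactly the point the paper also verifies, so the two proofs coincide.
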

\begin{proof}
 	The previous theorem gives an $O(f(O(n^4m))$-approximation algorithm for Set Cover, given that an $f(n)$-approximation algorithm for \HL\infty exists. If we assume that there exists such an algorithm with $f(n) = o(\log n)$, then we could use it to approximate Set Cover within a factor $o(\log O(n^4m)) = o(\log \poly(n)) = o(\log n)$, and this is impossible, assuming that $\P \neq \NP$.
\end{proof}

\vspace{10pt}
\section*{Acknowledgments}
We would like to thank Robert Krauthgamer and Konstantin Makarychev for useful discussions, and the anonymous referees for their valuable comments. Research was supported by the second author's NSF grants CAREER CCF-1150062 and IIS-1302662.
\vspace{10pt}

\bibliography{references}
\bibliographystyle{plain}

\appendix
\section{\texorpdfstring{\HLp}{HLp} on graphs with unique shortest paths}\label{sec:lp-norm-algorithm}

In this section, we analyze Algorithm~\ref{Pre-Hubs_Algorithm}, assuming that the objective function is $\left(\sum_{u \in V} |H_u|^p\right)^{1 / p}$, for arbitrary fixed $p \geq 1$. To do so, we first modify $\mathbf{LP_1}$ and turn it into a convex program, with the same constraints and the following objective function:
\begin{equation*}
    \min: \; \left(\sum_{u \in V} \left(\sum_{v \in V} x_{uv} \right)^p\right)^{1 / p}.
\end{equation*}
To analyze the performance of Algorithm \ref{Pre-Hubs_Algorithm} in this case, we need the following theorem by Berend and Tassa \cite{Berend_improvedbounds}.
\begin{theorem}[Theorem 2.4, \cite{Berend_improvedbounds}]\label{moment_theorem}
Let $X_1, ..., X_t$ be a sequence of independent random variables for which $\Prob[0 \leq X_i \leq 1] = 1$, and let $X = \sum_{i = 1}^t X_i$. Then, for all $p \geq 1$,
\begin{equation*}
    \left(\E[X^p]\right)^{1 / p} \leq 0.792 \cdot v(p) \cdot \frac{p}{\ln(p + 1)} \cdot \max\{\E[X]^{1 / p}, \E[X]\},
\end{equation*}
where
\begin{equation*}
    v(p) = \left(1 + \frac{1}{\lfloor p \rfloor} \right)^\frac{\{p\} \cdot (1 - \{p\})}{p},
\end{equation*}
with $\{p\}$ denoting the fractional part of $p$.
\end{theorem}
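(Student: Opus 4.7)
The plan is to reduce to an extremal distribution by convexity, and then carry out a sharp tail-integration argument. First I would argue that among all sums of independent $[0,1]$-valued random variables with mean $\mu = \E[X]$, the moment $\E[X^p]$ is maximized (in the limit) by a Poisson distribution with rate $\mu$. The reduction has two steps: (i) fix all other variables $X_{-i}$ with sum $c$; the map $x \mapsto (c+x)^p$ is convex on $[0,1]$ for $p \geq 1$, so among $[0,1]$-valued random variables $X_i$ with fixed mean $\mu_i$ the conditional expectation $\E[(c+X_i)^p \mid X_{-i}]$ is maximized when $X_i$ is the extremal two-point distribution $\mathrm{Bernoulli}(\mu_i)$. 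Iterating over $i$ reduces us to independent Bernoullis. (ii) Splitting each $\mathrm{Bernoulli}(\mu_i)$ into many smaller Bernoullis with the same total mean further increases $\E[X^p]$ (again by Jensen and a direct computation), and the resulting Binomial$(k,\mu/k)$ tends to $\mathrm{Poisson}(\mu)$ as $k\to\infty$.

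For the Poisson extremal case, I would use the layer-cake identity $\E[X^p] = p\int_0^\infty t^{p-1}\Prob[X \geq t]\,dt$ together with the Chernoff bound $\Prob[X \geq t] \leq (e\mu/t)^t e^{-\mu}$, which follows from the MGF identity $\E[e^{\lambda X}] = \exp(\mu(e^\lambda - 1))$ by optimizing $\lambda = \ln(t/\mu)$. Splitting the integral at a threshold $t_0 \asymp (p/\ln(p+1))\cdot \max\{\mu,1\}$, the contribution from $[0,t_0]$ is at most $t_0^p$ (using $\Prob[X\geq t]\leq 1$), and the contribution from $[t_0,\infty)$ decays fast enough to be of the same order. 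Taking $p$-th roots gives $(\E[X^p])^{1/p} = O\bigl((p/\ln(p+1))\cdot\max\{\mu, \mu^{1/p}\}\bigr)$, with the two regimes $\mu \geq 1$ and $\mu < 1$ producing the two branches of the maximum in the statement.

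The hard part will be pinning down the sharp numerical constant $0.792$ and the correction factor $v(p)$. A direct Chernoff-based argument as above yields the correct asymptotic scaling but a larger absolute constant. To recover the precise form, one should instead work directly with the closed-form Poisson moments $\E[X^p] = e^{-\mu}\sum_{k \geq 0} k^p \mu^k/k!$ (expressible via Stirling numbers of the second kind and Bell polynomials when $p$ is an integer) and then optimize numerically over $\mu$. For non-integer $p$, log-convexity of $p \mapsto \log \E[X^p]$, which follows from H\"older's inequality, lets one interpolate between $\lfloor p\rfloor$ and $\lceil p\rceil$, and a short computation identifies the resulting interpolation factor with $v(p) = (1 + 1/\lfloor p\rfloor)^{\{p\}(1-\{p\})/p}$. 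This combination of extremal reduction, exact Poisson computation, convexity interpolation, and numerical optimization is the route taken in \cite{Berend_improvedbounds}.
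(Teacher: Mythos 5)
This theorem is cited from Berend and Tassa \cite{Berend_improvedbounds} and is not proved in the paper at all; it is invoked as a black box in Appendix~\ref{sec:lp-norm-algorithm}. So there is no ``paper's own proof'' to compare against, and I will assess your sketch as a reconstruction of the Berend--Tassa argument.

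Your reduction is correct as far as it goes. Step (i) is the standard convexity argument: for convex $f$ on $[0,1]$ and any $[0,1]$-valued $Y$ with mean $\mu_i$, $f(y)\leq(1-y)f(0)+yf(1)$ gives $\E[f(Y)]\leq\E[f(\mathrm{Bern}(\mu_i))]$; conditioning on $X_{-i}$ and iterating reduces to independent Bernoullis. Step (ii) also works: replacing $\mathrm{Bern}(q)$ by two independent $\mathrm{Bern}(q/2)$'s changes $\E[(c+\cdot)^p]$ by $\tfrac{q^2}{4}\bigl(c^p-2(c+1)^p+(c+2)^p\bigr)\geq 0$, the nonnegative second finite difference of a convex function, and iterating and passing to the limit gives $\E[X^p]\leq\E[Z_\mu^p]$ with $Z_\mu\sim\mathrm{Poisson}(\mu)$. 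The Chernoff/layer-cake computation then correctly recovers the $\Theta\bigl(\tfrac{p}{\ln(p+1)}\cdot\max\{\mu,\mu^{1/p}\}\bigr)$ scaling, and the log-convexity-in-$p$ interpolation is the right idea for the non-integer case. However, as you acknowledge, this is not yet a proof of the stated theorem: the content of Theorem 2.4 is the sharp constant $0.792$ (which comes from Berend--Tassa's refined bound $B_n<\bigl(\tfrac{0.792\,n}{\ln(n+1)}\bigr)^n$ on Bell numbers, themselves the moments $\E[Z_1^n]$) together with the exact form of $v(p)$ and the verification that the worst $\mu$ for the ratio is handled uniformly. Waving at ``numerical optimization'' leaves the essential quantitative content unproved; one actually needs an analytic bound on $\E[Z_\mu^p]$ valid for all $\mu>0$ and all real $p\geq 1$, which is where most of the work in \cite{Berend_improvedbounds} lies.
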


\begin{theorem}
For any $p \geq 1$, Algorithm~\ref{Pre-Hubs_Algorithm} is an $O \left(\frac{p}{\ln(p + 1)} \cdot \log D \right)$-approximation algorithm for \HLp.
\end{theorem}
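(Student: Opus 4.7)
The plan is to mirror the $\ell_1$ analysis but to replace linearity of expectation by the moment inequality of Theorem~\ref{moment_theorem}. Feasibility of the returned hub labeling was established in the proof of the $\ell_1$ version and is independent of the cost function, so I only need to bound the expected $\ell_p$-cost. The convex relaxation has the same constraints as $\mathbf{LP_1}$, so Lemma~\ref{lem:prehubs} still yields a pre-hub labeling $\{\widehat H_u\}$ with $|\widehat H_u|\le 2\sum_v x_{uv}$ for every $u$.

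First, I would fix a vertex $u$ and reuse the charging scheme from the $\ell_1$ proof to write $|H_u|\le \sum_{u'\in\widehat H_u}Y_{u'}$, where $Y_{u'}$ is the number of hubs in $H_u$ charged to the pre-hub $u'$. As observed in that proof, $Y_{u'}=\sum_{i=1}^{k}X_i$, where $k\le D$ is the number of vertices on the segment between $u'$ and the next pre-hub, and $X_i$ is the indicator that the $i$-th vertex of this segment processed by the algorithm is closer to $u'$ than all previously processed ones. Since $\pi$ is a uniformly random permutation of $V$, its restriction to the segment is a uniformly random permutation of $k$ elements; the classical ``record indicators'' fact then gives that $X_1,\dots,X_k$ are mutually independent Bernoulli variables with $\Pr[X_i=1]=1/i$, and in particular $\E[Y_{u'}]=H_k=O(\log D)$.

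Next, I would apply Theorem~\ref{moment_theorem} to $Y_{u'}$, obtaining
\[
(\E[Y_{u'}^p])^{1/p}\le 0.792\, v(p)\cdot\frac{p}{\ln(p+1)}\cdot\max\{\E[Y_{u'}]^{1/p},\E[Y_{u'}]\}=O\!\left(\frac{p}{\ln(p+1)}\,\log D\right),
\]
using $v(p)=O(1)$. Minkowski's inequality applied to $|H_u|\le\sum_{u'\in\widehat H_u}Y_{u'}$ then gives
\[
(\E[|H_u|^p])^{1/p}\le \sum_{u'\in\widehat H_u}(\E[Y_{u'}^p])^{1/p}\le 2\Bigl(\sum_{v\in V}x_{uv}\Bigr)\cdot O\!\left(\frac{p}{\ln(p+1)}\,\log D\right).
\]
Raising to the $p$-th power, summing over $u$, and taking the $p$-th root yields
\[
\Bigl(\E\Bigl[\textstyle\sum_u|H_u|^p\Bigr]\Bigr)^{1/p}\le O\!\left(\frac{p}{\ln(p+1)}\,\log D\right)\cdot\Bigl(\sum_u\bigl(\sum_v x_{uv}\bigr)^p\Bigr)^{1/p},
\]
and the right-most factor is the optimum of the convex relaxation, which lower-bounds the $\ell_p$-cost of any feasible hub labeling. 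Finally, Jensen's inequality applied to the concave map $t\mapsto t^{1/p}$ converts this into a bound on $\E[(\sum_u|H_u|^p)^{1/p}]$, yielding the claimed approximation factor. Since Algorithm~\ref{Pre-Hubs_Algorithm} can be derandomized exactly as in Remark~\ref{rem:derandomize}, this also gives a deterministic guarantee.

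The main subtlety is justifying the application of Theorem~\ref{moment_theorem}: the $X_i$ must be independent, which is the classical-but-nontrivial observation that the restriction of a uniform random permutation of $V$ to a fixed subset is itself uniform, and that record indicators are independent. Beyond this, the proof is bookkeeping: Minkowski promotes the per-pre-hub moment bound to a per-vertex bound linear in $|\widehat H_u|$, and summing over $u$ together with the convex-program lower bound gives the final estimate.
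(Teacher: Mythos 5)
Your proof is correct, and it takes a genuinely different and arguably cleaner route than the paper's. The paper modifies Algorithm~\ref{Pre-Hubs_Algorithm} before analyzing it: it augments each pre-hub set $\widehat H_u$ with the degree-$\ge 3$ vertices $F_u$ of the shortest-path tree $T_u$, precisely so that the per-pre-hub segments $P_{(uv)}$ become pairwise disjoint. With disjoint segments, \emph{all} record indicators $Z_i^{uv}$ (across all $v$ and $i$) are jointly independent, and the paper applies Theorem~\ref{moment_theorem} once to the entire sum $|H_u| \le \sum_v \sum_i Z_i^{uv}$. This requires a side argument that the modified algorithm never does better than the original (so the bound transfers back), and it also costs a factor $2$ since $|\widehat H_u'| \le 2|\widehat H_u|$. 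You avoid all of this by applying Theorem~\ref{moment_theorem} to each segment's record count $Y_{u'}$ separately — where independence of the record indicators within a single segment is automatic from the fact that a uniform permutation of $V$ restricts to a uniform permutation of any fixed subset — and then combining the per-segment $L^p$-norm bounds with Minkowski's inequality, which needs no independence across segments at all. You analyze the original algorithm directly, the bookkeeping (Minkowski to get a bound linear in $|\widehat H_u|$, then summing over $u$ against the convex-program lower bound, then Jensen) is standard, and the resulting factor is the same $O\bigl(\tfrac{p}{\ln(p+1)}\log D\bigr)$. The only point worth spelling out, as you note, is that $|H_u|\le\sum_{u'}Y_{u'}$ holds because every hub added to $H_u$ is a record for the segment of the witness pre-hub that triggered its addition; this is exactly the charging already used in the $\ell_1$ proof and is unaffected by segment overlaps.
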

\begin{proof}
In order to simplify the analysis and be able to use the above theorem (Theorem~\ref{moment_theorem}) as is, we slightly modify Algorithm~\ref{Pre-Hubs_Algorithm} as follows. After we obtain the set of pre-hubs $\{\widehat{H}_u\}_{u \in V}$ at step 2, we consider the rooted tree $T_u$, defined as $T_u = \bigcup_{u' \in \widehat{H}_u} P_{uu'}$ (observe that this is indeed a tree), for every $u \in V$, and define $F_u \subset V(T_u)$ to be the set of vertices of $T_u$ whose degree (in $T_u$) is at least 3. The modified algorithm then adds the additional step ($2'$): ``$\widehat{H}_u' := \widehat{H}_u \cup F_u$", and then continues the execution of Algorithm~\ref{Pre-Hubs_Algorithm} at step 3, as usual, using the modified sets $\widehat{H}_u'$. Let $ALG$ be the cost of original algorithm, and $ALG'$ be the cost of this modified algorithm. It is not hard to prove that it always holds $ALG \leq ALG'$. It is also easy to see that, since all leaves of $T_u$ are pre-hubs of the set $\widehat{H}_u$, we must have $|F_u| \leq |\widehat{H}_u|$, and so $|\widehat{H}_u'| \leq 2 \cdot |\widehat{H}_u|$. 

Let $\mathcal{P}_u$ be the collection of subpaths of $T_u$ defined as follows: $P$ belongs to $\mathcal{P}_u$ if $P$ is a path between consecutive pre-hubs $u''$ and $u'$ of $\widehat{H}_u'$, with $u''$ being an ancestor of $u'$ in $T_u$, and no other pre-hub $u''' \in \widehat{H}_u$ appears in $P$. For convenience, we exclude the endpoint $u''$ that is closer to $u$: $P = P_{u''u'} - u''$. Note that any such path $P$ is uniquely defined by the pre-hub $u'$ of $u$, and so we will write $P = P_{(uu')}$. The modification we made in the algorithm allows us now to observe that $P \cap P' = \varnothing$, for $P, P' \in \mathcal{P}_u$, $P \neq P'$.

Let $ALG'$ be the cost of the solution $\{H_u\}_u$ that the modified algorithm returns. We have
\begin{equation*}
\begin{split}
    \E[ALG'] &= \E \left[ \left(\sum_{u \in V} |H_u|^p \right)^{1 / p} \right] \leq \left(\sum \E[|H_u|^p] \right)^{1 / p} \quad \quad(\textrm{Jensen's inequality}).
\end{split}
\end{equation*}

We can write $|H_u| \leq \sum_{v \in \widehat{H}_u'} X_v^{u}$, where $X_v^u$ is the random variable indicating how many vertices are added to $H_u$ ``because of" the pre-hub $v$ (see Line 8 of algorithm). Observe that we can write $X_v^u$ as follows: $X_v^u = \sum_{w \in P_{(uv)}} Y_w^{uv}$, with $Y_w^{uv}$ being 1 if $w$ is added in $H_u$, and zero otherwise. The modification that we made in the algorithm implies, as already observed, that any variable $Y_w^{uv}$, $w \in P_{(uv)}$, is independent from $Y_{w'}^{uv'}$, $w' \in P_{(uv')}$, for $v \neq v'$, as the corresponding paths $P_{(uv)}$ and $P_{(uv')}$ are disjoint.

Let $\pi_{uv}: [|P_{(uv)}|] \to P_{(uv)}$ be the permutation we obtain when we restrict $\pi$ (see Line 3 of the algorithm) to the vertices of $P_{(uv)}$. We can then write $\sum_{w \in P_{(uv)}} Y_w^{uv} = \sum_{i = 1}^l Z_i^{uv}$, $l = |P_{(uv)}|$, where $Z_i^{uv}$ is 1 if the $i$-th vertex considered by the algorithm that belongs to $P_{(uv)}$ (i.e. the $i$-th vertex of permutation $\pi_{uv}$) is added to $H_u$ and zero otherwise. It is easy to see that $\Pr[Z_i^{uv} = 1] = 1/ i$. We now need one last observation. We have $\Pr[Z_i^{uv} = 1 \;|\; Z_1^{uv}, ..., Z_{i - 1}^{uv}] = 1/i$. To see this, note that the variables $Z_i^{uv}$ do not reveal which particular vertex is picked from the permutation at each step, but only the relative order of the current draw (i.e. $i$-th random choice) with respect to the current best draw (where best here means the closest vertex to $v$ that we have seen so far, i.e. in positions $\pi_{uv}(1), ..., \pi_{uv}(i -1)$). Thus, regardless of the relative order of $\pi_{uv}(1), ..., \pi_{uv}(i -1)$, there are exactly $i$ possibilities to extend that order when the permutation picks $\pi_{uv}(i)$, each with probability $1/i$. This shows that the variables $\{Z_i^{uv}\}_i$ are independent, and thus all variables $\{Z_i^{uv}\}_{v \in \widehat{H}_v', \;i \in [|P_{(uv)}|]}$ are independent.

We can now apply Theorem \ref{moment_theorem}. This gives
\begin{equation*}
\begin{split}
    \E[|H_u|^p] \leq \E \left[ \left(\sum_{v \in \widehat{H}_u'} \sum_{i  = 1}^{|P_{(uv)}|} Z_i^{uv} \right)^p \right] \leq \left(0.792 \cdot v(p) \cdot \frac{p}{\ln(p + 1)} \right)^p \cdot \Harm_D^p \cdot |\widehat{H}_u'|^p.
\end{split}
\end{equation*}
Here, $\Harm_D = \sum_{i=1}^D \frac{1}{i} = \log D + O(1)$ is the $D$-th harmonic number. Thus,
\begin{equation*}
\begin{split}
    \E[ALG'] &\leq 0.792 \cdot v(p) \cdot \frac{p}{\ln(p + 1)} \cdot \Harm_D \cdot \left(\sum_{u \in V} |\widehat{H}_u'|^p \right)^{1 / p} \\
            &\leq 0.792 \cdot v(p)  \cdot \frac{p}{\ln(p + 1)} \cdot \Harm_D \cdot \left(\sum_{u \in V} 4^p \cdot \left(\sum_{v \in V} x_{uv} \right)^p \right)^{1 / p} \\
            &\leq c \cdot \frac{p}{\ln(p + 1)} \cdot \Harm_D \cdot OPT_{\mathrm{REL}},
\end{split}
\end{equation*}
where $OPT_{\mathrm{REL}}$ is the optimal value of the convex relaxation, and $c \leq 7$ is some constant.

\end{proof}

\section{Trees: Missing proofs of Section~\ref{Trees}}\label{trees_missing_proofs}

\subsection{A PTAS for \HLp on trees}\label{appendix_ptas_lp}

In this section, we describe a polynomial-time approximation scheme (PTAS) for \HLp for arbitrary $p\in [1,\infty)$.
Our algorithm is a modification of the dynamic programming algorithm for \HL1.
The main difficulty that we have to deal with is that the $\ell_p$-cost of an instance cannot be expressed in terms of the $\ell_p$-cost of the subproblems, since 
it might happen that suboptimal solutions for its subproblems give an optimal solution for the instance itself. Thus, it is not enough to store only the cost of the ``optimal" solution for each subproblem.

Let $$OPT[T',t]^p = \min_{H\textrm{ is an HHL for } T'} \sum_{u \in T'} (|H_u| + t)^p.$$
 Clearly, $OPT[T,0]^p$ is the cost of an optimal \HLp solution for $T$, raised to the power $p$. Observe that $OPT[T', t]^p$ satisfies the following recurrence relation:
\begin{equation}\label{eq:OPT-lp-shifted}
OPT[T',t]^p = (1 + t)^p + \min_{r \in T'} \sum_{T'' \textrm{ is a component of }T' - r} OPT[T'', t + 1]^p.
\end{equation}
Indeed, let $\widetilde H$ be an HHL for $T'$ that minimizes $\sum_{u \in T'} (|\widetilde H_u| + t)^p$.
Let $r'$ be the highest ranked vertex in $T'$ w.r.t. the ordering defined by $\widetilde H$. 
For each tree $T''$ in the forest $T'-r'$, consider the hub labeling $\{\widetilde H_u \cap T''\}_{u\in T''} = \widetilde H_u - r'$.
Since $|\widetilde H_u| = |\widetilde H_u \cap T''| +1$, we have
$$\sum_{u \in T''} (|\widetilde H_u| + t)^p = \sum_{u \in T''} (|\widetilde H_u \cap T''| + t+1)^p \geq OPT[T'', t+1]^p.$$
Also, $|\widetilde H_{r'} | =1$. Therefore,
$$OPT[T',t]^p = \sum_{u\in T'} (|\widetilde H_u|+t)^p = (|\widetilde H_{r'}|+t)^p + \sum_{T''}\sum_{u\in T''} (|\widetilde H_u|+t)^p
\geq (1 + t)^p + \sum_{T''} OPT[T'', t + 1]^p.$$
The proof of the inequality in the other direction is similar.
Consider $r$ that minimizes the expression on the right hand side of (\ref{eq:OPT-lp-shifted}) and optimal HHLs
for subtrees $T''$ of $T'-r$. We combine these HHLs and obtain a feasible HHL $\widetilde H$. We get,
$$OPT[T',t]^p \leq (1+t)^p + \sum_{u \in T'} (|\widetilde H_u| + t + 1)^p=
 (1 + t)^p + \sum_{T''} OPT[T'', t + 1]^p.$$
 This concludes the proof of the recurrence.

If we were not concerned about the running time of the algorithm, we could have used
this recursive formula for $OPT[T',t]^p$ to find the exact solution (the running time would be exponential).
 In order to get a polynomial-time algorithm, we again consider only subtrees $T'$ with boundary of size  at most $k$. 
 We consider two cases when $|\partial(T')| < k$ and when $|\partial(T')| = k$. 
 In the former case, we use formula (\ref{eq:OPT-lp-shifted}).
 In the latter case, when $|\partial(T')| = k$, we perform the same step as the one performed in the algorithm for \HL1: we pick a weighted balanced separator vertex $r_0$ of $T'$
 such that $|\partial(T'')| \leq k/2 +1$ for every subtree $T''$ of $T'-r_0$.
 Formally, we define a dynamic-programming table $B[T',t]$ as follows:
\begin{align*}
B[T',t] &= (1 + t)^p + \min_{r \in T'} \sum_{T'' \text{ is a conn. comp. of } T'-r} B[T'',t+1],  &&\text{ if } |\partial(T')| < k,\\
B[T',t] &= (1 + t)^p + \sum_{T'' \text{ is a conn. comp. of } T'-r_0} B[T'',t],  &&\text{ if } |\partial(T')| = k.
\end{align*}
The base cases of our recursive formulas are when the subtree $T'$ is of size 1 or 2. In this case, we simply set $B[T',t] = (1 + t)^p$, if $|T'| = 1$, and $B[T',t] = (1 + t)^p + (2 + t)^p$, if $|T'| = 2$.
We will need the following two claims.
\begin{claim}\label{superadditive}
For any tree $T$ and a partition of $T$ into disjoint subtrees $\{T_1, ..., T_j\}$ such that $\bigcup_{i = 1}^j T_i = T$, we have
\begin{equation*}
\sum_{i = 1}^j OPT[T_i,t]^p \leq OPT[T,t]^p.
\end{equation*}
\end{claim}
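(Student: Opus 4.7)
The plan is to take an optimal hierarchical hub labeling $H^\ast$ for $T$ (one attaining $OPT[T,t]^p$) and restrict it to each piece of the partition by setting $H^i_u := H^\ast_u \cap V(T_i)$ for $u \in T_i$. I would then show that each $H^i$ is a feasible HHL for $T_i$ with $|H^i_u| \le |H^\ast_u|$, at which point the claim follows immediately by summing $(|H^i_u|+t)^p \le (|H^\ast_u|+t)^p$ over $u \in T_i$ and $i$.

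First, I would verify feasibility of $H^i$ as a hub labeling on $T_i$. The key fact is that each $T_i$ is a subtree of the tree $T$, so for any $u,v \in T_i$ the unique $u$-$v$ path $P_{uv}$ in $T$ lies entirely in $T_i$. Since $H^\ast$ covers the pair $(u,v)$, there exists $w \in H^\ast_u \cap H^\ast_v \cap P_{uv}$; because $w \in P_{uv} \subseteq V(T_i)$, we have $w \in H^i_u \cap H^i_v$, so the covering property is satisfied in $T_i$.

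Next I would check that $H^i$ is hierarchical. The partial order $\preceq_{H^i}$ on $V(T_i)$ is exactly the restriction of $\preceq_{H^\ast}$ to $V(T_i)$: for $u,v \in T_i$, $v \in H^i_u$ iff $v \in H^\ast_u$. Restrictions of partial orders are partial orders, so $H^i$ is a hierarchical hub labeling. (Alternatively, even if this step failed, Theorem~\ref{HHL_optimal_for_trees} would let us replace $H^i$ by an HHL whose hub sets are no larger, so the bound would still go through.)

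Finally, using $|H^i_u| \le |H^\ast_u|$ together with monotonicity of $x \mapsto (x+t)^p$ and the fact that $\{T_i\}$ partitions $V(T)$, I would write
$$\sum_{i=1}^{j} OPT[T_i,t]^p \;\le\; \sum_{i=1}^{j}\sum_{u \in T_i}(|H^i_u|+t)^p \;\le\; \sum_{i=1}^{j}\sum_{u \in T_i}(|H^\ast_u|+t)^p \;=\; \sum_{u \in T}(|H^\ast_u|+t)^p \;=\; OPT[T,t]^p,$$
which is the desired inequality. There is no real obstacle here; the only substantive observation is that in a tree, subtrees are closed under taking shortest paths between their own vertices, which is precisely what makes the restriction of a hub labeling automatically feasible.
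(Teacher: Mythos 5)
Your proof is correct and follows essentially the same approach as the paper: restrict an optimal HHL for $T$ to each piece $T_i$, observe that subtrees of a tree are closed under shortest paths so the restriction remains a feasible hub labeling, and sum the pointwise inequality $(|H^i_u|+t)^p \le (|H^\ast_u|+t)^p$. You are in fact slightly more careful than the paper, which asserts only that the restriction is a feasible hub labeling; since $OPT[T_i,t]^p$ is defined as a minimum over \emph{hierarchical} hub labelings, your explicit check that the restricted labeling is hierarchical (or your fallback via Theorem~\ref{HHL_optimal_for_trees}) closes a small gap the paper leaves implicit.
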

\begin{proof}
Consider an optimal hierarchical solution $H$ for the \HLp problem defined by $(T,t)$. Define $H^{(i)} = \{H_u \cap T_i: u \in T_i\}$. Observe that $\{H_u^{(i)}\}_{u \in T_i}$ is a feasible hub labeling for $T_i$, since the original instance is a tree. We have
$$OPT[T,t]^p = \sum_{i = 1}^j \sum_{u \in T_i} (|H_u| + t)^p \geq \sum_{i = 1}^j \sum_{u \in T_i} (|H_u^{(i)}| + t)^p \geq \sum_{i = 1}^j OPT[T_i,t]^p.$$
\end{proof}

\begin{claim}\label{lower_bound_claim_ptas_p_norm}
For any $T'$ and $t \geq 0$, $B[T',t] \leq OPT[T',t]^p$.
\end{claim}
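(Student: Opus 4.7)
The plan is to prove the claim by strong induction on $|T'|$. The base cases $|T'|=1$ and $|T'|=2$ are immediate from the definitions: when $T'$ is a single vertex $v$ the unique HHL is $H_v=\{v\}$, giving $OPT[T',t]^p = (1+t)^p = B[T',t]$; when $|T'|=2$, any HHL assigns hub set sizes $1$ and $2$, giving $OPT[T',t]^p = (1+t)^p + (2+t)^p = B[T',t]$.

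For the inductive step, I would split on whether $|\partial(T')|$ is strictly less than $k$ or equal to $k$; in either case every component $T''$ of $T'-r$ (or $T'-r_0$) arising in the recursive formula satisfies $|\partial(T'')|\leq k$ and $|T''|<|T'|$, so the inductive hypothesis applies. In the first case, the definition of $B$ is
\[
B[T',t] = (1+t)^p + \min_{r \in T'} \sum_{T''} B[T'',t+1],
\]
where $T''$ ranges over connected components of $T'-r$. The inductive hypothesis yields $B[T'',t+1] \leq OPT[T'',t+1]^p$ for every such $T''$, so substituting and taking the minimum over $r$ on both sides and comparing with the recurrence (\ref{eq:OPT-lp-shifted}) for $OPT[T',t]^p$ gives $B[T',t] \leq OPT[T',t]^p$ term-by-term.

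The slightly more delicate case is $|\partial(T')| = k$, where
\[
B[T',t] = (1+t)^p + \sum_{T''} B[T'',t],
\]
with $T''$ ranging over components of $T'-r_0$, and crucially without incrementing $t$. The key observation is that $\{r_0\}$ together with the components $\{T''\}$ form a partition of $T'$ into subtrees. Applying Claim \ref{superadditive} to this partition gives
\[
OPT[\{r_0\},t]^p + \sum_{T''} OPT[T'',t]^p \leq OPT[T',t]^p,
\]
and since $OPT[\{r_0\},t]^p = (1+t)^p$, chaining this with the inductive hypothesis $B[T'',t] \leq OPT[T'',t]^p$ gives exactly $B[T',t] \leq OPT[T',t]^p$.

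The whole argument is short and essentially mechanical once the right partition is identified in the second case. I do not anticipate any real obstacle beyond remembering to invoke the superadditivity claim proved just above, rather than attempting a direct inductive comparison via the recurrence (\ref{eq:OPT-lp-shifted}), which would be hindered by the mismatch in the shift parameter (the $B$-formula in the boundary-saturated case uses $t$ while the recurrence for $OPT$ uses $t+1$).
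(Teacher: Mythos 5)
Your proof is correct and follows essentially the same route as the paper's: induction on $|T'|$, a direct term-by-term comparison with recurrence~(\ref{eq:OPT-lp-shifted}) when $|\partial(T')| < k$, and, in the boundary-saturated case, the observation that $\{r_0\}$ together with the components of $T'-r_0$ partition $T'$ so that Claim~\ref{superadditive} (superadditivity of $OPT[\cdot,t]^p$) applies. Your closing remark about why the shift mismatch forces the detour through superadditivity is a nice articulation of the point the paper leaves implicit.
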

\begin{proof}
We do induction on the size of $T'$. For $|T'| \in \{1,2\}$, the claim holds trivially for all $t \geq 0$. Let us assume that it holds for all subtrees of size at most $s$
and for all $t \geq 0$. We will prove that it holds for all subtrees of size $s+ 1$ and for all $t \geq 0$. We  again consider two cases.

\medskip
\noindent\textbf{Case ${|\partial(T')| < k}$:}
\vspace{-10pt}
\begin{equation*}
\begin{split}
    B[T', t] &= (1 + t)^p + \min_{r \in T'} \sum_{T'' \text{ is comp. of } T'-r} B[T'',t+1] \\
             &\leq (1 + t)^p + \min_{r \in T'} \sum_{T'' \text{ is comp. of } T'-r} OPT[T'', t + 1]^p \quad (\textrm{by ind. hyp.})\\
             &= OPT[T', t]^p.
\end{split}
\end{equation*}

\smallskip
\noindent\textbf{Case $|\partial(T')| = k$:}
\vspace{-10pt}
\begin{equation*}
\begin{split}
    B[T', t] &= (1 + t)^p + \sum_{T'' \text{ is comp. of } T'-r_0} B[T'',t] \\
             &\leq (1 + t)^p + \sum_{T'' \text{ is comp. of } T'- r_0} OPT[T'', t]^p \quad (\textrm{by ind. hyp.})\\
             &= OPT[\{r_0\},t]^p + \sum_{T'' \text{ is comp. of } T'- r_0} OPT[T'', t]^p \\
             &\leq OPT[T', t]^p,
\end{split}
\end{equation*}
where the last inequality follows from Claim \ref{superadditive} and the fact that the connected components of $T'-r_0$ together with $\{r_0\}$ form a partition of $T'$.
\end{proof}

\begin{theorem}
There is a polynomial-time approximation scheme (PTAS) for \HLp for every $p \in [1,\infty)$.
The algorithm finds a $(1+\varepsilon)$ approximate solution in time $n^{O(1/\varepsilon)}$;
the running time does not depend on $p$.
\end{theorem}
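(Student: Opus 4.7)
The plan is to combine the lower bound $B[T,0]\le OPT[T,0]^p$ from Claim~\ref{lower_bound_claim_ptas_p_norm} with an upper bound showing that the hierarchical hub labeling the algorithm extracts has $\ell_p^p$-cost at most $(1+4/k)^p\cdot B[T,0]$; composing the two yields $ALG\le(1+4/k)\cdot OPT$, and setting $k=4\lceil 1/\varepsilon\rceil$ gives a $(1+\varepsilon)$-approximation. The extraction itself is the natural backtracking through the DP: at every subtree $T'$ we take the chosen vertex ($r^*$ when $|\partial(T')|<k$, the balanced separator $r_0$ when $|\partial(T')|=k$), place it at the top of the HHL for $T'$, and recurse on the components of $T'-r$. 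Let $\mathcal{T}$ denote the resulting recursion tree: it has one node per vertex of $T$, and the hub set of $u$ in the extracted labeling is $\{u\}$ together with all strict ancestors of $u$ in $\mathcal{T}$, so $|H_u|=1+d_u$ where $d_u$ is the depth of $u$ in $\mathcal{T}$ (root at depth $0$).

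The first step is to express both $B[T,0]$ and the algorithm's cost as sums over $V(T)$. Unfolding the DP recurrence, each node of $\mathcal{T}$ contributes one $(1+t)^p$ term to $B[T,0]$, where $t$ is the shift at that node. By construction the shift starts at $0$ at the root, increases by one when descending from a parent with $|\partial(\cdot)|<k$, and is preserved when descending from a parent with $|\partial(\cdot)|=k$. Writing $t_u$ for the number of ancestors of $u$ in $\mathcal{T}$ whose associated subtree has boundary size strictly less than $k$, and $b_u$ for the number of boundary-$k$ ancestors, I would thereby obtain
\begin{equation*}
B[T,0]=\sum_{u\in V(T)}(1+t_u)^p,\qquad ALG^p=\sum_{u\in V(T)}(1+t_u+b_u)^p,
\end{equation*}
the second identity following from $|H_u|=1+d_u=1+t_u+b_u$.

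The hardest part, and the step I expect to require the most care, is bounding $b_u$ in terms of $t_u$. The key observation is that when the parent is boundary-$k$, the balanced separator forces every child's boundary to be at most $k/2+1$, whereas any single descent can enlarge the boundary by at most one; hence along any root-to-$u$ path in $\mathcal{T}$, consecutive boundary-$k$ ancestors must be separated by at least $k/2-1$ non-boundary-$k$ descents, and the first one must occur at depth at least $k$. This forces $t_u\ge(k/2-1)\,b_u\ge(k/4)\,b_u$ for $k\ge 4$, so $1+t_u+b_u\le(1+4/k)(1+t_u)$. Raising to the $p$th power and summing over $u\in V(T)$ gives $ALG^p\le(1+4/k)^p\,B[T,0]\le(1+4/k)^p\,OPT[T,0]^p$, as desired.

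For the running time, I would note that a subtree $T'$ with $2\le|\partial(T')|\le k$ is uniquely determined by its boundary, a subtree with $|\partial(T')|\le 1$ is determined by a vertex together with one of its neighbors, and the shift parameter $t$ can be restricted to $\{0,1,\dots,n\}$; this yields $n^{O(k)}$ table entries, each computable in polynomial time from previously processed smaller subtrees. The total running time is therefore $n^{O(k)}=n^{O(1/\varepsilon)}$, independent of $p$.
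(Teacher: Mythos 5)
Your proof is correct and follows essentially the same strategy as the paper's. Both arguments extract the HHL encoded by the DP backtracking, express $B[T,0]$ as a vertex sum where only "boundary-$<k$" ancestors of $u$ contribute, and then show that the remaining "boundary-$k$" ancestors are at most a $O(1/k)$ fraction, giving a $(1+O(1/k))$ approximation. The paper packages this by defining $X$ (the set of all balanced separators chosen at boundary-$k$ steps) and $\widetilde H_u = (H_u\setminus X)\cup\{u\}$, proving $B[T',t]=\sum_{u\in T'}(|\widetilde H_u\cap T'|+t)^p$ by induction, and then bounding $|H_u\cap X|\le 2|H_u|/k$, which yields $|H_u|\le(1+\tfrac{2}{k-2})|\widetilde H_u|$ with the choice $k=2\lceil 4/\varepsilon\rceil$; you instead count $t_u$ and $b_u$ directly, giving $b_u\le(4/k)t_u$ and $|H_u|\le(1+4/k)(1+t_u)$ with $k=4\lceil 1/\varepsilon\rceil$. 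Since $1+t_u=|\widetilde H_u|$ and $d_u=t_u+b_u$, these are the same inequality up to the precise constant; the arithmetic bookkeeping differs but the decomposition, the identity for $B[T,0]$, the "separators are rare along any root-to-leaf path" lemma, and the use of Claim~\ref{lower_bound_claim_ptas_p_norm} are all shared. Your $t_u\ge(k/2-1)b_u$ derivation is sound (each boundary-$k$ step drops the boundary to at most $k/2+1$, and each ordinary step increases it by at most one, so $k/2-1$ ordinary steps are needed to climb back, with $k$ steps needed before the first), and you could streamline the exposition by noting the $b_u=0$ case is trivial rather than appealing to the depth-$k$ condition on the first separator, though as written the argument is fine.
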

\begin{proof}
Fix $\varepsilon < 1$, and set $k = 2 \cdot \lceil 4/\varepsilon \rceil$. Let $H$ be the solution returned by the dynamic programming algorithm
presented in this section. Consider the set $X$ of all weighted balanced separators that the algorithm uses during its execution;
that is, $X$ is the set of hubs $r_0$ that the algorithm adds when it processes trees $T'$ with $|\partial(T')| = k$.
 
Let $\widetilde H_u = (H_u \setminus X) \cup \{u\}$; the set $\widetilde H_u$ consists of the hubs added to $H_u$ during the steps 
when $\partial(T') < k$, with the exception that we include $u$ in $\widetilde H_u$ even if $u\in X$. 
It is easy to prove by induction (along the lines of the previous inductive proofs) that 
$$B[T',t] = \sum_{u \in T'} \left(|\widetilde H_u \cap T'| + t \right)^p.$$
\vspace{-15pt}

\noindent Therefore, $B[T,0] = \sum_{u \in V} |\widetilde H_u |^p$.

Now, consider a vertex $u$ and its hub set $H_u$. We want to estimate the ratio $|H_u\cap X| / |H_u|$. We look at the decomposition tree implied by the algorithm and find the subinstance $T'$ in which the algorithm picked $u$ as the highest ranked vertex in $T'$. The path from the root of the decomposition tree to that particular subinstance $T'$ contains exactly $|H_u|$ nodes. Observe that in any such path, the nodes of the path that correspond to subinstances with boundary size exactly $k$ are at distance at least $k / 2$ from each other (since the size of the boundary increases by at most 1 when we move from one node to the consecutive node along the path). Thus, there can be at most $2|H_u|/k$ such nodes. This means that $|H_u \cap X| \leq 2|H_u| / k$, which gives  $|H_u| \leq (1 + \frac{2}{k - 2}) \cdot |H_u \setminus X|\leq (1 + \frac{2}{k - 2}) \cdot |\widetilde H_u|$. So, the $\ell_p$-cost of the hub labeling is
\begin{equation*}
\begin{split}
    \|H\|_p &= \Bigl(\sum_{u \in V} |H_u|^p \Bigr)^{1/p}\leq \Bigl(1 + \frac{2}{k - 2} \Bigr) \cdot \Bigl(\sum_{u \in V} |\widetilde H_u|^p \Bigr)^{1/p}\\
            &= \Bigl(1 + \frac{2}{k - 2} \Bigr) \cdot B[T,0]^{1/p} \leq \Bigl(1 + \frac{2}{k - 2} \Bigr) \cdot OPT[T,0],
\end{split}
\end{equation*}
where the last inequality follows from Claim \ref{lower_bound_claim_ptas_p_norm}. We get that the algorithm finds a hub labeling of $\ell_p$-cost at most
 $(1 + \frac{2}{k - 2}) \cdot OPT$. The running time is $n^{O(k)} \cdot n = n^{O(k)}$.
\end{proof}

\subsection{A PTAS for \HL\infty on trees}\label{appendix_ptas_infty}

Our approach for \HL1 works almost as is for \HL\infty as well. The only modifications that we need to make are the following:
\begin{itemize}
    \item $B[T']$ is now defined as
\begin{align*}
    B[T'] &= (1 + 4/k) + \min_{r' \in T'} \;\;\max_{T'' \textrm{ is comp. of } T' - r'} B[T''],  &\textrm{for }|\partial(T')| < k, \\
    B[T'] &= \max_{T'' \textrm{ is comp. of } T' - r_0} B[T''],  &\textrm{for }|\partial(T')| = k,
\end{align*}
where $r_0$ is the weighted balanced separator vertex of $T'$, as defined in the description of the algorithm for \HL1.
    \item $C[T']$ is now equal to $C[T'] = \max \left\{ 0, \left(|\partial(T')| - \frac{3k}{4}\right) \cdot \frac{4}{k} \right\}$.
\end{itemize}
We can again prove using induction (along the same lines as the proof for \HL1) that the total cost of the solution that the algorithm returns at any subinstance $T'$ is at most $B[T'] + C[T']$, and that it always holds that $B[T'] \leq (1 + 4/k) \cdot OPT_{T'}$. Thus, for $T' = T$ we have $C[T] = 0$, and so we obtain a solution of cost at most $(1 + 4/k) OPT$, in time $n^{O(k)}$.

\subsection{A quasi-polynomial time algorithm for \texorpdfstring{\HLp}{HLp}}\label{appendix_quasi_lp}

We prove a bound on the infinity norm of an optimal HHL solution for \HLp, and then use that bound as the value of the parameter $k$ in the DP approach of Appendix \ref{appendix_ptas_lp}. We are able to prove a slightly weaker bound (compared to the one obtained for \HL1).
\begin{lemma}\label{p-norm-infinity-upper-bound}
Given a tree $T$, any optimal HHL solution $H$ for \HLp must satisfy $\max_{u \in V} |H_u| = O(\log^2 n)$.
\end{lemma}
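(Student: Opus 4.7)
My plan follows the structure of the proof of Theorem~\ref{thm:quasi} (the $\ell_1$ case), but uses an $\ell_p$-cost analysis throughout and yields a bound that is weaker by a $\log n$ factor. The main steps are $\ell_p$-analogs of Claims~\ref{quasi-claim1} and~\ref{quasi-claim2}: first, $M_u \geq M_v$ whenever $u \prec v$ in the optimal HHL; second, $M_u \geq |T_u|/(c_p \log n)$ for every vertex $u$, where $c_p$ is a constant depending only on $p$. With these, the same telescoping argument used in the proof of Theorem~\ref{thm:quasi} gives, along any chain $v_1 \prec v_2 \prec \cdots \prec v_K = u$ in the HHL decomposition, $|T_{v_{j+1}}| \leq |T_{v_j}|(1 - 1/(c_p \log n))$, and hence $K - 1 \leq c_p \log n \cdot \ln n = O(\log^2 n)$.

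For the $\ell_p$-analog of Claim~\ref{quasi-claim1}, I apply the same ``move $v$ ahead of $u$'' operation as in the $\ell_1$ case. By Lemma~\ref{lem:swap}, the cost change is bounded by $\sum_{w \in T_u \setminus T'} [(|H_w|+1)^p - |H_w|^p] - \sum_{w \in T''} [|H_w|^p - (|H_w|-1)^p]$; using convexity of $x^p$, this reduces (up to constant factors) to essentially the same inequality $|T_u \setminus T'| \geq |T''|$ that gives $M_u \geq M_v$ in the $\ell_1$ case. For the $\ell_p$-analog of Claim~\ref{quasi-claim2}, I suppose for contradiction $M_u < |T_u|/f$ for some $f$ to be chosen. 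Iterating the above analog along the chain $u = u_1, u_2, \ldots$ (where $u_{j+1}$ is the highest-ranked vertex in the largest component of $T_{u_j} - u_j$), I get $M_{u_j} \leq M_u < |T_u|/f$, so $|T_{u_j}| > |T_u|/2$ for $j \leq L := f/2$. I then perform the key swap: move a balanced separator $s$ of $T_u$ ahead of $u$. After this, each $u_j$ (for $j \leq L$) lies in a component of $T_u - s$ of size at most $|T_u|/2$, so $u_j$ is removed from at least $|T_u|(1/2 - (j-1)/f)$ hub sets.

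The main technical obstacle is the $\ell_p$-cost change calculation. The cost decrease from the removals of the $u_j$'s (using $|H_w| \geq j$ whenever $u_j \in H_w$ and $|H_w|^p - (|H_w|-1)^p \geq p(j-1)^{p-1}$) evaluates, after a standard computation, to $\Omega\bigl(|T_u| \cdot f^p/(2^p (p+1))\bigr)$. The cost increase from adding $s$ to hub sets in $T_u$ is at most $p \sum_{w \in T_u}(|H_w|+1)^{p-1}$, which I bound using the power-mean inequality together with the Peleg-type upper bound $\sum_w |H_w|^p \leq O(n \log^p n)$: $\sum_{w \in T_u} |H_w|^{p-1} \leq |T_u|^{1/p}(\sum_w |H_w|^p)^{(p-1)/p} \leq O(n \log^{p-1} n)$, giving cost increase $\leq O(p \cdot 2^p \cdot n \cdot \log^{p-1} n)$. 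Requiring decrease $>$ increase forces $f \geq c_p \log^{(p-1)/p} n$ for a constant $c_p$ depending on $p$, and in particular $M_u \geq |T_u|/(c_p \log n)$. The overall bound $K = O(\log^{2-1/p}n) = O(\log^2 n)$ then follows from the telescoping described in the first paragraph.
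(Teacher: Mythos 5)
Your approach is genuinely different from the paper's. The paper proves this lemma with a short counting argument and no swap machinery at all: Peleg's Tree Algorithm gives $\|H\|_p \le c\,n^{1/p}\log n$, and if more than $\lfloor n/2\rfloor$ vertices were still unfinished after $2c\log n$ levels of the optimal HHL's recursive decomposition, each would contribute at least $(2c\log n)^p$ to $\sum_u|H_u|^p$, exceeding $(c\,n^{1/p}\log n)^p$ — a contradiction; iterating, subproblem sizes halve every $O(\log n)$ levels and the depth is $O(\log^2 n)$. You instead try to port the $\ell_1$ ``move-ahead''/balanced-separator argument of Theorem~\ref{thm:quasi} to $\ell_p$. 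If your argument worked it would even give the slightly stronger exponent $O(\log^{2-1/p}n)$, but as written there are two gaps.

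The first gap is the $\ell_p$-analog of Claim~\ref{quasi-claim1}. Optimality only yields
$0\le\sum_{w\in T_u\setminus T'}\bigl[(|H_w|+1)^p-|H_w|^p\bigr]-\sum_{w\in T''}\bigl[|H_w|^p-(|H_w|-1)^p\bigr]$,
and both marginal quantities $(x+1)^p-x^p$ and $x^p-(x-1)^p$ are \emph{increasing} in $x$. Convexity alone therefore does not collapse this to the cardinality inequality $|T_u\setminus T'|\ge|T''|$; you would need something like $\max_{w\in T_u\setminus T'}|H_w|\lesssim\min_{w\in T''}|H_w|$, which has no reason to hold (a long path hanging off $T_u\setminus T'$ produces vertices with very large $|H_w|$ there). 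And you cannot afford to lose even a constant factor, since you iterate this inequality along a chain $u_1\prec u_2\prec\cdots$ of length $\Theta(f)$ to get $M_{u_j}\le M_u$; a weakened bound $M_u\ge cM_v$ with $c<1$ lets $M_{u_j}$ blow up geometrically and the chain argument collapses.

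The second gap is that the cost-increase bound in your analog of Claim~\ref{quasi-claim2} does not localize to $T_u$. Your decrease is $\Omega\bigl(|T_u|\cdot f^p/(2^p(p+1))\bigr)$, but the H\"older step
$\sum_{w\in T_u}|H_w|^{p-1}\le|T_u|^{1/p}\bigl(\sum_{w}|H_w|^p\bigr)^{(p-1)/p}$
pulls in the \emph{global} bound $\sum_w|H_w|^p=O(n\log^p n)$, so the increase is in fact $O\bigl(p\,2^p\,|T_u|^{1/p}n^{(p-1)/p}\log^{p-1}n\bigr)$, which equals $O(n\log^{p-1}n)$ only when $|T_u|=\Theta(n)$. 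For the many subinstances with $|T_u|\ll n$, requiring decrease $>$ increase forces $f^p\gtrsim(n/|T_u|)^{(p-1)/p}\log^{p-1}n$, which is polynomial rather than polylogarithmic. Closing this requires a bound on $\sum_{w\in T_u}|H_w|^p$ in terms of $|T_u|$, which is essentially the statement you are trying to prove.
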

\begin{proof}
As already mentioned, the Tree Algorithm (Algorithm~\ref{Tree-Algorithm}) returns a feasible solution $H$ that satisfies $\max_{u \in H} |H_u| = O(\log n)$. Thus, $$\|H\|_p \leq \left(\sum_{u \in V} (c \cdot \log n)^p \right)^{1/p} = c \cdot n^{1 / p} \cdot \log n.$$
This means that there always exists a solution for \HLp of cost at most $c \cdot n^{1 / p} \cdot \log n$.

Since we are again looking at hierarchical solutions, we now need the following observation: every $(2 c \cdot \log n + 1)$ levels at an optimal recursive decomposition, the size of the resulting subproblems must have reduced by at least a factor of 2. To see this, consider the first $2 c \cdot \log n + 1$ levels of the recursion implied by an optimal HHL solution $H$. All subinstances that haven't been solved yet are leaves of the bottom level. Each vertex contained in these subinstances already has $2c \cdot \log n$ hubs. Thus, if we assume that there are strictly more than $\lfloor n / 2 \rfloor$ vertices in these subinstances, then $OPT_T > \frac{n}{2} \cdot 2 c \log n \geq n^{1/p} \cdot c \log n$, which is a contradiction. Thus, we must have at most $\lfloor n / 2 \rfloor$ vertices corresponding to the subproblems that are not yet solved.

Applying this argument inductively to the subproblems that we obtain every $2 c \log n$ consecutive levels of the recursion tree, we get that after $i \cdot 2 c \log n + 1 $ levels of recursion, we will have at most $\lfloor n / 2^i \rfloor$ vertices whose hubs will not have been finalized. It is now clear that no vertices will remain after at most $i = \lceil \log n \rceil + 1$ ``rounds", which implies a total depth of at most $O(\log^2 n)$ levels in the recursion tree. Thus, since the level at which a vertex $u$ is picked as the highest ranked vertex (in the corresponding subinstance in the recursion tree) is exactly equal to the number of hubs that vertex has, we get that $\max_{u \in V} |H_u| = O(\log^2 n)$.
\end{proof}

The lemma implies that by setting $k = O(\log^2 n)$ in the DP approach for \HLp, we can optimally solve \HLp on trees, using the formula
$$ B[T', t] = (1 + t)^p + \min_{r' \in T'} \sum_{T'' \textrm{ is comp. of } T' - r'} B[T'', t + 1],$$
in total time $n^{O(\log^2 n)}$ (with the optimal cost being $B[T,0]^{1/p}$).

\subsection{A quasi-polynomial time algorithm for \texorpdfstring{\HL\infty}{HL-infinity}}\label{appendix_quasi_infinity}

We note that the optimal solution for \HL\infty on a tree has $\ell_\infty$-cost $O(\log n)$,
since the Tree Algorithm (Algorithm~\ref{Tree-Algorithm}) returns a feasible solution $H$ of cost at most $\max_{u \in H} |H_u| = O(\log n)$
(this was first proved by Peleg~\cite{DBLP:journals/jgt/Peleg00}).
Thus, by setting $k = O(\log n)$, we can again use a simple DP algorithm with entries defined by the formula
$$B[T'] = 1 + \min_{r' \in T'} \left(\max_{ T'' \textrm{ is a connected component of } T' - r'} B[T'']\right),$$
and obtain an optimal solution for \HL\infty, in total time $n^{O(\log n)}$.

\subsection{Proof of Lemma \ref{tree-alg-lower-bound}}

\begin{proof}[Proof of Lemma \ref{tree-alg-lower-bound}]\label{proof-of-tree-alg-lower-bound}
We consider the complete binary tree of height $h$, whose size is $n_h = 2^{h + 1} - 1$ (a single vertex is considered to have height 0). The cost of the Tree Algorithm on a complete binary tree of height $h$, denoted by $ALG(h)$, can be written as
\begin{equation*}
ALG(h) =
\begin{cases}
      (2^{h + 1} - 1) + 2 \cdot ALG(h - 1), & h \geq 1, \\
      1 & h = 0.
   \end{cases}
\end{equation*}
It is easy to see that the above implies that $ALG(h) = 2 \cdot h \cdot 2^h + 1$, for all $h \geq 0$. To obtain a $3/2$ gap, we now present an algorithm that gives a hub labeling of size $(1 + o_h(1))
\cdot \frac{4}{3} \cdot h \cdot 2^h$ on complete binary trees (where the $o_h(1)$ term goes to $0$ as $h\to\infty$).

It will be again a recursive algorithm (i.e. a hierarchical labeling), only this time the recursion handles complete binary trees that may have some ``tail" at the root. More formally, the algorithm operates on graphs that can be decomposed into two disjoint parts, a complete binary tree of height $h$, and a path of length $t$. The two components are connected with an edge between the root of the binary tree and an endpoint of the path. Such a graph can be fully described by a pair $(h, t)$, where $h$ is the height of the tree and $t$ is the length of the path attached to the root of the tree.

The proposed algorithm for complete binary trees works as follows. Let $p$ be the root of the tree. Assuming $h \geq 2$, let $l$ be the left child of $p$, and $r$ be the right child of $p$. The algorithm picks $l$ as the vertex with the highest rank, and then recurses on the children of $l$ and on $r$. Observe that on the recursive step for $r$, we have a rooted tree on $r$, and the original root $p$ is now considered part of the tail. For $h = 0$, we have a path of length $t + 1$, and for $h = 1$, we simply remove $p$ and then end up with a path of length $t$ and two single vertices.

Let $\Path(t)$ denote the optimal HL cost for a path of length $t$. It is not hard to show that the Tree Algorithm performs optimally for paths, and a closed formula for $\Path$ is
\begin{equation*}
    \Path(t) = (t + 1) \lceil \log (t + 1) \rceil - 2^{\lceil \log (t + 1) \rceil} + 1, \quad t \geq 0.
\end{equation*}
So, at the base cases, the proposed algorithm uses the Tree Algorithm on paths. Let $P(h, t)$ be the cost of this algorithm. Putting everything together, we obtain the recursive formula
\begin{equation*}
P(h,t) =
\begin{cases}
      (2^{h + 1} - 1) + t + 2 \cdot P(h - 2, 0) + P(h - 1, t + 1), & h \geq 2, \;t \geq 0, \\
      5 + t + \Path(t) & h = 1, \;t \geq 0, \\
      \Path(t + 1), & h = 0, \;t \geq 0.
   \end{cases}
\end{equation*}

The cost of the solution we obtain from this algorithm is $P(h, 0)$. Let $f(n) = \Path(n + 1) + 5$, $n \geq 0$, and $g(h) = C / \sqrt{h}$, $h \geq 0$, for some appropriate constant $C$. We prove by induction on $h$ that
$$
P(h, t) \leq \frac{4}{3} \cdot h \cdot 2^h + g(h) \cdot h \cdot 2^h + f(h + t) + h \cdot t, \quad\forall h \geq 0, t \geq 0.
$$

The cases with $h = 0$ and $h = 1$ are obvious. If $h \geq 2$, then
$$
P(h, t) = 2^{h + 1} - 1 + t + 2 \cdot P(h - 2, 0) + P(h - 1, t + 1).
$$
By the induction hypothesis, we have that
\begin{align*}
2 \cdot P(h - 2, 0) & \leq \frac{1}{2} \cdot \frac{4}{3} h \cdot 2^h - \frac{4}{3} 2^h + \frac{1}{2} g(h - 2) \cdot (h - 2) 2^h + 2 \cdot f(h - 2), \quad \textrm{and}\\
P(h - 1, t + 1) & \leq \frac{1}{2} \cdot \frac{4}{3} h \cdot 2^h - \frac{2}{3} 2^h + \frac{1}{2} g(h-1) \cdot (h - 1) 2^h + f(h + t) + (h - 1)\cdot(t + 1).
\end{align*}

Thus, we obtain
$$ P(h, t) \leq \frac{4}{3} h \cdot 2^h + \left(\frac{g(h-2) \cdot(h-2) + g(h - 1) \cdot (h - 1)}{2} + \frac{h + 2f(h-2) - 2}{2^h} \right) \cdot 2^h + f(h + t) + h \cdot t.$$

Choosing the right constant $C$, we can show that for all $h \geq 2$, we have
$$
\frac{g(h-2) \cdot(h-2) + g(h - 1) \cdot (h - 1)}{2} + \frac{h + 2f(h-2) - 2}{2^h} \leq h \cdot g(h),
$$
and so the inductive step is true. This means that
$$ P(h,0) \leq \frac{4}{3} \cdot h \cdot 2^h \cdot \left(1 + \frac{3}{4} g(h) + \frac{3 f(h)}{4h \cdot 2^h} \right) = \left(1 + o_h(1)\right) \cdot  \frac{4}{3} \cdot h \cdot 2^h,$$
(where the $o_h(1)$ term goes to $0$ as $h\to\infty$)
and so, for any $\varepsilon > 0$ there are instances where $\frac{ALG}{OPT} \geq \frac{3}{2} - \varepsilon$.
\end{proof}

\section{Any ``natural" rounding scheme cannot break the $O(\log n)$ barrier for $HL_1$ on graphs with unique shortest paths and diameter $D$}
\label{lower_bound_rounding}

In this section, we show that any rounding scheme that may assign $v \in H_u$ only if $x_{uv} > 0$ gives $\Omega(\log n)$ approximation, even on graphs with shortest-path diameter $D = O(\log n)$. For that, consider the following tree $T$, which consists of a path $P = \{1, ..., k\}$ of length $k = 3t$, $t \in \mathbb{N} \setminus \{0\}$, and two stars $\mathcal{A}$ and $\mathcal{B}$, with $N = \binom{k}{2t}$ leaves each (each leaf corresponding to a subset of $[k]$ of size exactly $2t$). The center $a$ of $\mathcal{A}$ is connected to vertex ``1" of $P$ and the center $b$ of $\mathcal{B}$ is connected to vertex ``$k$" of $P$. The total number of vertices of $T$ is $n = 2N + 2 + k$, which implies that $t = \Omega(\log n)$.

\begin{figure}[h]
\begin{center}
\scalebox{1.1}{\input{./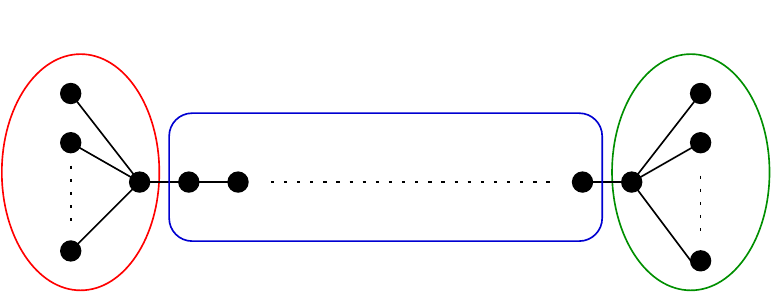_t}}
\caption{An instance that cannot be rounded well with any ``natural" rounding scheme}
\end{center}
\end{figure}

Consider now the following LP solution (all variables not assigned below are set to zero):
\begin{itemize}
    \item $x_{uu} = 1$, for all $u \in T$.
    \item $x_{Sa} = 1$, for all $S \in \mathcal{A}$.
    \item $x_{Wb} = 1$, for all $W \in \mathcal{B}$.
    \item $x_{Si} = 1/t$, for all $S \in \mathcal{A}$, $i \in S \subseteq P$.
    \item $x_{Wi} = 1/t$, for all $W \in \mathcal{B}$, $i \in W \subseteq P$.
    \item $x_{ab} = x_{ba} = 1$.
    \item $x_{ia} = x_{ib} = 1$, for all $i \in [k]$.
    \item $\{x_{ij}\}_{i,j \in [k]}$ is an optimal solution for $P$.
\end{itemize}
Observe that the above solution is indeed a feasible fractional solution. Its cost is at most $n + 3(|\mathcal{A}| + |\mathcal{B}|) + 2 + 2k + c \cdot k \cdot \log k = \Theta(n)$, for some constant $c$. Suppose now that we are looking for a rounding scheme that assigns $v \in H_u$ only if $x_{uv} > 0$, and let's assume that there exists a vertex $S \in \mathcal{A}$ whose resulting hub set satisfies $|H_S \cap P| < t$. We must also have $H_S \cap \mathcal{B} = \varnothing$, since $x_{Su} = 0$ for all $u \in \mathcal{B}$. This implies that there exists a $W \in \mathcal{B}$ such that $W \cap H_S = \varnothing$. Since the above fractional solution assigns non-zero values only to $x_{Wi}$ with $i \in W$ and $x_{Wb}$, this means that $x_{Wi} = 0$ for all $i \in H_S$. Thus, the resulting hub set cannot be feasible, which implies that any rounding that satisfies the aforementioned property and returns a feasible solution must satisfy $|H_S \cap P| \geq t$ for all $S \in \mathcal{A}$ (similarly, the same holds for all $W \in \mathcal{B}$). This means that the returned solution has cost $\Omega(n \cdot t) = \Omega(n \cdot \log n)$, and so the approximation factor must be at least $\Omega(\log n)$.

\section{Hub labeling on directed graphs}\label{appendix_directed}

In this section, we sketch how some of the presented techniques can be used for the case of directed graphs. Let $G = (V, E)$ be a directed graph with edge lengths $l(e) > 0$. Instead of having one set of hubs, each vertex $u$ has two sets of hubs, the forward hubs $H_u^{(f)}$ and the backward hubs $H_u^{(b)}$. The covering property is now stated as follows: for every (directed) pair $(u,v)$ and some directed shortest path $P$ from $u$ to $v$, we must have $H_u^{(f)} \cap H_v^{(b)} \cap P \neq \varnothing$. The \HLp objective function can be written as $\left(\sum_{u \in V} L_u^p \right)^{1 / p}$, where $L_u = |H_u^{(f)}| + |H_u^{(b)}|$.

The Set Cover based approach of Cohen et al. (\cite{DBLP:journals/siamcomp/CohenHKZ03}) and Babenko et al. (\cite{DBLP:conf/icalp/BabenkoGGN13}) can be used in this setting in order to obtain an $O(\log n)$ approximation for \HLp, $p \in [1, \infty]$. It is also straightforward to see that there is a very simple 2-approximation preserving reduction from undirected \HLp to directed \HLp, implying that an $\alpha$-approximation for directed \HLp would give a $2\alpha$-approximation for undirected \HLp. Thus, the hardness results of Section~\ref{Hardness} can be applied to the directed case as well, and so we end up with the following theorem.
\begin{theorem}
\HLp is $\Omega(\log n)$-hard to approximate in directed graphs with multiple shortest paths, for $p \in \{1, \infty\}$, unless \P = \NP.
\end{theorem}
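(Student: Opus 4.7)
The plan is to give an explicit approximation-preserving reduction from undirected \HLp to directed \HLp (losing only a factor of $2$), and then invoke the $\Omega(\log n)$ hardness results established in Section~\ref{Hardness}. Since those hardness constructions already produce graphs with multiple shortest paths of constant (unweighted) diameter, the resulting directed instance will inherit the ``multiple shortest paths'' property automatically.

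First I would define the reduction. Given an undirected instance $G=(V,E)$ with lengths $l$, form the directed graph $G'=(V,E')$ by replacing every undirected edge $\{u,v\}$ with two arcs $(u,v)$ and $(v,u)$, each of length $l(\{u,v\})$. For any pair $u,v\in V$, the directed $u$-to-$v$ shortest paths in $G'$ are exactly the undirected shortest paths from $u$ to $v$ in $G$ (read in that orientation), and similarly in the reverse direction; in particular, multiplicity of shortest paths is preserved. Then I would exhibit the two conversions between labelings. Given a directed labeling $\{H_u^{(f)},H_u^{(b)}\}_{u\in V}$ for $G'$, set $H_u := H_u^{(f)}\cup H_u^{(b)}$; the covering property follows because a common hub for the directed pair $(u,v)$ on a $u$-to-$v$ directed shortest path is also a common hub on the corresponding undirected shortest path, and $|H_u|\le |H_u^{(f)}|+|H_u^{(b)}|=L_u$, so the undirected $\ell_p$-cost is at most the directed $\ell_p$-cost. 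Conversely, given any undirected labeling $\{H_u\}$, set $H_u^{(f)}=H_u^{(b)}=H_u$; this is a feasible directed labeling, and its $\ell_p$-cost equals $2\bigl(\sum_u |H_u|^p\bigr)^{1/p}$, i.e.\ exactly twice the undirected cost.

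Combining these two facts yields $\mathrm{OPT}_{\mathrm{dir}}(G')\le 2\,\mathrm{OPT}_{\mathrm{und}}(G)$, and any directed solution of cost $C$ yields an undirected one of cost at most $C$. Hence an $\alpha$-approximation algorithm for directed \HLp produces, on the bidirected instance $G'$, a directed solution of cost at most $\alpha\cdot\mathrm{OPT}_{\mathrm{dir}}(G')\le 2\alpha\cdot\mathrm{OPT}_{\mathrm{und}}(G)$, which in turn gives an undirected labeling of cost at most $2\alpha\cdot\mathrm{OPT}_{\mathrm{und}}(G)$. By Theorem~\ref{HL1_hardness_theorem} (for $p=1$) and the analogous theorem of Section~\ref{appendix_infinity_hardness} (for $p=\infty$), any $o(\log n)$-approximation for undirected \HLp would imply an $o(\log n)$-approximation for Set Cover, contradicting Theorem~\ref{Set-Cover-Hardness} unless $\mathrm{P}=\mathrm{NP}$. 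Thus $2\alpha=\Omega(\log n)$, so $\alpha=\Omega(\log n)$, as claimed.

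There is essentially no obstacle here; the only point requiring care is that the bidirected graph $G'$ has the same set of shortest paths (in both orientations) as $G$, which is immediate from symmetry of edge lengths, so the reduction preserves both the ``multiple shortest paths'' hypothesis and the polynomial size bound needed for the Set Cover hardness to transfer.
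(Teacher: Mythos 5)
Your proposal is correct and follows the same route as the paper: reduce undirected \HLp to directed \HLp via the bidirected graph, observe the factor-2 loss, and invoke the undirected hardness from Section~\ref{Hardness}. The paper merely asserts the existence of a ``very simple 2-approximation preserving reduction'' without spelling it out; you give exactly that reduction (bidirecting $G$, taking $H_u = H_u^{(f)}\cup H_u^{(b)}$ in one direction and $H_u^{(f)}=H_u^{(b)}=H_u$ in the other), so the arguments coincide in substance.
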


Having matching lower and upper bounds (up to constant factors) for the general case, we turn again to graphs with unique (directed) shortest paths. The notion of pre-hubs can be extended to the directed case as follows: a family of sets $\{(\widehat{H}_u^{(f)}, \widehat{H}_u^{(b)})\}_{u \in V}$ is a family of pre-hubs if for every pair $(u,v)$ there exist $u' \in \widehat{H}_u^{(f)} \cap P_{uv}$ and $v' \in \widehat{H}_v^{(b)} \cap P_{uv}$ such that $u' \in P_{v'v}$.

We now present the LP relaxation for the $\ell_1$ case.
\\ \\
\noindent($\mathbf{DIR-LP_1}$)
\begin{equation*}
\begin{split}
    \min: &\quad \sum_{u \in V} \sum_{v \in V} \left(x_{uv}^{(f)} + x_{uv}^{(b)}\right) \\
    \text{s.t.:} &\quad \sum_{w \in P_{uv}} \min\{x_{uw}^{(f)}, x_{vw}^{(b)}\} \geq 1,  \quad \forall (u, v) \in V \times V, \\
          &\quad x_{uv}^{(f)} \geq 0,  \hspace{93pt} \forall (u,v) \in V \times V, \\
          &\quad x_{uv}^{(b)} \geq 0,  \hspace{95pt} \forall (u,v) \in V \times V.
\end{split}
\end{equation*}

In order to obtain a feasible set of pre-hubs, for each vertex $u \in V$ we construct two trees (both rooted at $u$): $T_u^{(f)}$ is the union of all directed paths from $u$ to all other vertices, and $T_u^{(b)}$ is the union of all directed paths to $u$ from all other vertices. We drop the orientation on the edges, and we note that these are indeed trees. We proceed as in the undirected case and obtain a set $\widehat{H}_u^{(f)}$ from $T_u^{(f)}$ (using the variables $x_{uv}^{(f)}$) of size at most $2 \sum_{v \in V} x_{uv}^{(f)}$, and a set $\widehat{H}_u^{(b)}$ from $T_u^{(b)}$ (using the variables $x_{uv}^{(b)}$) of size at most $2 \sum_{v \in V} x_{uv}^{(b)}$. It is not hard to see that the obtained sets $(\widehat{H}_u^{(f)}, \widehat{H}_u^{(b)})_{u \in V}$ are indeed pre-hubs.

We can now use a modified version of Algorithm \ref{Pre-Hubs_Algorithm}, as given below.
\\[4pt]
\begin{algorithm}[H]
    Solve $\mathbf{DIR-LP_1}$ and get an optimal solution $\{(x_{uv}^{(f)}, x_{uv}^{(b)})\}_{(u,v) \in V \times V}$.\\
    Obtain a set of pre-hubs $\{(\widehat{H}_u^{(f)}, \widehat{H}_u^{(b)})\}_{u \in V}$ from $x$. \\
    Generate a random permutation $\pi : [n] \to V$ of the vertices. \\
    Set $(H_u^{(f)}, H_u^{(b)}) = (\varnothing, \varnothing)$, for every $u \in V$. \\
    \For{$i= 1$ \KwTo $n$} {
        \For{every $u \in V$} {
            \For{every $u' \in \widehat{H}_u^{(f)}$, such that $\pi_i \in P_{uu'}$ and $P_{\pi_i u'} \cap \widehat{H}_u^{(f)} = \{u'\}$} {
                \lIf{$P_{\pi_i u'} \cap H_u^{(f)} = \varnothing$}{$H_u^{(f)} := H_u^{(f)} \cup \{\pi_i\}$}
            }
            \For{every $u' \in \widehat{H}_u^{(b)}$, such that $\pi_i \in P_{u'u}$ and $P_{u' \pi_i} \cap \widehat{H}_u^{(b)} = \{u'\}$} {
                \lIf{$P_{u' \pi_i} \cap H_u^{(b)} = \varnothing$}{$H_u^{(b)} := H_u^{(b)} \cup \{\pi_i\}$}
            }
        }
    }
    Return $\{(H_u^{(f)}, H_u^{(b)})\}_{u \in V}$.
    \bigskip

\caption{Algorithm for \HL1 on directed graphs with unique shortest paths}
\label{DIR-Pre-Hubs_Algorithm}
\end{algorithm}
\vspace{10pt}
It is easy to see that the obtained solution is always feasible, and, with similar analysis as before, we prove that $\E[|\widehat{H}_u^{(f)}|] \leq 2(\log D + O(1)) \cdot \sum_{v} x_{uv}^{(f)}$ and $\E[|\widehat{H}_u^{(b)}|] \leq 2(\log D + O(1)) \cdot \sum_{v} x_{uv}^{(b)}$. Thus, in expectation, we obtain a solution of cost $O(\log D) \cdot OPT_{DIR-LP_1}$.

The analysis can also be generalized for arbitrary fixed $p \geq 1$, similar to the analysis of Appendix \ref{sec:lp-norm-algorithm}. The algorithm is modified in the same way, and using the fact that for $x,y,p \geq 1$, we have $x^p + y^p \leq (x + y)^p \leq 2^p (x^p + y^p)$, we can again obtain an approximation of the form $c \cdot \frac{p}{\ln (p + 1)} \Harm_D \cdot OPT_{REL}$, where $OPT_{REL}$ is the optimal value of the corresponding convex relaxation. Thus, we obtain the following theorem.
\begin{theorem}
There is an $O\left( \frac{p}{\ln(p+1)} \cdot \log D \right)$-approximation algorithm for \HLp, $p \geq 1$, on directed graphs with unique shortest paths.
\end{theorem}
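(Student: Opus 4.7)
The plan is to follow the template already sketched in the excerpt: introduce a directed LP relaxation (for $\ell_1$, the program $\mathbf{DIR\text{-}LP_1}$; for general $p$, replace its objective by $\bigl(\sum_u (\sum_v x_{uv}^{(f)} + \sum_v x_{uv}^{(b)})^p\bigr)^{1/p}$ to obtain a convex relaxation that still lower-bounds $OPT$), build directed pre-hubs from any fractional solution, and then analyze the randomized ``process vertices in a random order'' algorithm (Algorithm \ref{DIR-Pre-Hubs_Algorithm}) essentially as in the undirected case.

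First I would construct the directed pre-hubs. Given a feasible LP solution, for each $u$ run the pre-hub construction of Lemma \ref{lem:prehubs} on the (undirected) shortest-path tree $T_u^{(f)}$ using the weights $x_{uv}^{(f)}$ to obtain $\widehat H_u^{(f)}$, and symmetrically on $T_u^{(b)}$ using $x_{uv}^{(b)}$ to obtain $\widehat H_u^{(b)}$. This gives $|\widehat H_u^{(f)}| \leq 2\sum_v x_{uv}^{(f)}$ and $|\widehat H_u^{(b)}| \leq 2\sum_v x_{uv}^{(b)}$. The analogue of Claims \ref{claim:prehub-1} and \ref{claim:prehub-2}, applied to the directed path $P_{uv}$ with the forward side using $\widehat H_u^{(f)}$ and the backward side using $\widehat H_v^{(b)}$, yields the directed pre-hub property: there exist $u' \in \widehat H_u^{(f)} \cap P_{uv}$ and $v' \in \widehat H_v^{(b)} \cap P_{uv}$ with $u'$ lying between $v'$ and $v$ on $P_{uv}$.

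Next I would verify feasibility of Algorithm \ref{DIR-Pre-Hubs_Algorithm} and analyze its cost for $p=1$. For feasibility, fix $(u,v)$ and pick $u',v'$ as above with $P_{v'u'} \cap \widehat H_u^{(f)} = \{u'\}$ and $P_{v'u'} \cap \widehat H_v^{(b)} = \{v'\}$; the first vertex of the random permutation $\pi$ that lies on $P_{v'u'}$ is added to both $H_u^{(f)}$ (by the forward loop, charged to $u'$) and $H_v^{(b)}$ (by the backward loop, charged to $v'$), covering the pair. For the cost, charge every hub placed in $H_u^{(f)}$ to the pre-hub $u' \in \widehat H_u^{(f)}$ that triggered its addition; as in Section \ref{Bounded}, the number of hubs charged to a given $u'$ equals $\sum_{i=1}^k \frac{1}{i}$ in expectation for some $k \leq D$, so $\E[|H_u^{(f)}|] \leq 2(\log D + O(1)) \sum_v x_{uv}^{(f)}$. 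An identical argument handles $H_u^{(b)}$. Summing gives the $\ell_1$ bound $\E[\sum_u(|H_u^{(f)}|+|H_u^{(b)}|)] \leq 2(\log D + O(1)) \cdot OPT_{\mathbf{DIR\text{-}LP_1}}$.

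For general $p$, I would reuse the argument of Appendix \ref{sec:lp-norm-algorithm}. Modify the algorithm by inserting the branch vertices $F_u$ of the subtree $\bigcup_{u' \in \widehat H_u^{(f)}} P_{uu'}$ into $\widehat H_u^{(f)}$ (and symmetrically for the backward side); this at most doubles the pre-hub sets but makes the random charges to distinct pre-hubs come from disjoint path segments, hence independent. Applying Theorem \ref{moment_theorem} separately to $|H_u^{(f)}|$ and $|H_u^{(b)}|$ yields $\E[|H_u^{(f)}|^p] \leq \bigl(c \cdot \tfrac{p}{\ln(p+1)} \Harm_D\bigr)^p |\widehat H_u^{(f)}|^p$ and likewise for $H_u^{(b)}$. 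Then I would combine the two halves using the elementary inequality $x^p + y^p \leq (x+y)^p \leq 2^p(x^p+y^p)$ for $x,y \geq 1$, $p \geq 1$: this lets me bound $L_u = |H_u^{(f)}|+|H_u^{(b)}|$ by $\ell_p$-sums and, conversely, lower-bound the optimum of the convex relaxation by $2^{-1}$ times $(\sum_u L_u^{*p})^{1/p}$ where $L_u^* = \sum_v x_{uv}^{(f)} + \sum_v x_{uv}^{(b)}$, so the relaxation is within a constant of $OPT_{\text{HL}_p}$.

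The main technical obstacle is the $\ell_p$ combination: the objective $L_u^p = (|H_u^{(f)}|+|H_u^{(b)}|)^p$ mixes the two sides at the same vertex, while the Berend--Tassa bound is cleanest when applied separately. The $x^p+y^p \leq (x+y)^p \leq 2^p(x^p+y^p)$ sandwich is precisely what absorbs that mixing into constant factors, which is why only the lead factor $\frac{p}{\ln(p+1)} \cdot \log D$ survives in the stated approximation guarantee.
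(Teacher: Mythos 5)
Your proposal is correct and follows essentially the same route as the paper's own sketch in Appendix~\ref{appendix_directed}: solve the directed relaxation, extract forward and backward pre-hubs from $T_u^{(f)}$ and $T_u^{(b)}$ exactly as in Lemma~\ref{lem:prehubs}, run Algorithm~\ref{DIR-Pre-Hubs_Algorithm}, and for general $p$ apply the branch-vertex modification of Appendix~\ref{sec:lp-norm-algorithm} so that the Berend--Tassa bound (Theorem~\ref{moment_theorem}) can be used on each of $|H_u^{(f)}|$ and $|H_u^{(b)}|$ separately, then merge the two halves via $x^p+y^p \leq (x+y)^p \leq 2^p(x^p+y^p)$. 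The paper leaves the general-$p$ step at the level of a remark and you have filled in precisely the same steps it intends; no genuine differences.
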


\end{document}